\documentclass[runningheads,envcountsect,envcountsame,orivec]{llncs}
\pdfoutput=1

\usepackage{silence}
\WarningFilter{thmtools}{LLNCS support disables automatic casing of theorem names}
\usepackage{thm-restate}

\usepackage[T1]{fontenc} 
\usepackage[utf8]{inputenc} 

\usepackage{amsmath} 
\allowdisplaybreaks
\usepackage{cite} 
\usepackage{cmll}
\usepackage{multicol}
\usepackage{fontawesome}
\usepackage{proof}
\usepackage{float}
\usepackage{amssymb}
\usepackage{stmaryrd}
\SetSymbolFont{stmry}{bold}{U}{stmry}{m}{n} 
\usepackage{etoolbox}
\makeatletter
\let\llncs@addcontentsline\addcontentsline
\patchcmd{\maketitle}{\addcontentsline}{\llncs@addcontentsline}{}{}
\patchcmd{\maketitle}{\addcontentsline}{\llncs@addcontentsline}{}{}
\patchcmd{\maketitle}{\addcontentsline}{\llncs@addcontentsline}{}{}
\setcounter{tocdepth}{2}
\makeatother
\usepackage{bookmark}
\usepackage{hyperref} 
\hypersetup{
  colorlinks=true,
  linkcolor=blue,
  anchorcolor=blue,
  citecolor=blue,
  urlcolor=blue,
  filecolor=blue,
  breaklinks=true
}

\usepackage{todonotes}
\usepackage{quantikz} 

\floatstyle{boxed}
\restylefloat{figure}


\newcommand\OC{\ensuremath{{\mathcal L}_2^{\mathcal S}}}
\newcommand\fv{\ensuremath{\mathsf{fv}}}
\newcommand\FV{\ensuremath{\mathsf{FV}}}
\newcommand\zero{\ensuremath{\mathfrak 0}}
\newcommand\one{\ensuremath{\mathfrak 1}}
\newcommand\abstr[1]{#1.}

\newcommand\inl{\mbox{\small \it inl}}
\newcommand\inr{\mbox{\small \it inr}}
\newcommand\elimone{\delta_{\one}}
\newcommand\elimzero{\delta_{\zero}}
\newcommand\elimwith{\delta_{\with}}
\newcommand\elimplus{\delta_{\oplus}}

\newcommand\elimtens{\delta_{\otimes}}
\newcommand\elimbang{\delta_\oc}

\newcommand\plus{\mathbin{\text{\normalfont\scalebox{0.7}{\faPlus}}}}
\newcommand\SN{\ensuremath{\mathsf{SN}}}

\newcommand\pair[2]{\langle #1, #2 \rangle}

\newcommand\topintro{\langle\rangle}

\newcommand\lra{\longrightarrow}
\newcommand\lla{\longleftarrow}
\newcommand\lras{\longrightarrow^*}
\newcommand\llas{\mathrel{{}^*{\longleftarrow}}}

\newcommand\hatmultimap{\ensuremath{\mathbin{\hat\multimap}}}
\newcommand\hatotimes{\ensuremath{\mathbin{\hat\otimes}}}
\newcommand\hatand{\ensuremath{\mathbin{\hat\with}}}
\newcommand\hatoplus{\ensuremath{\mathbin{\hat\oplus}}}
\newcommand\hatbang{\ensuremath{\mathop{\hat\oc}}}

\newcommand\qedHERE{{\tag*{$\qed$}}}
\newcommand\qedHEREHERE{\eqno\qed}
\newcommand\citeintitle[2]{\cite[#1]{#2}}

\floatstyle{boxed}
\restylefloat{figure}

\begin{document}

\title{A linear proof language for second-order intuitionistic linear logic\protect\footnote{Supported by PICT 2021-I-A-00090 and 2019-1272, PIP 11220200100368CO, the French-Argentinian IRP SINFIN, and CSIC-UdelaR I+D-22520220100073UD.}
}

\titlerunning{A linear proof language for second-order ILL}

\author{Alejandro D\'{\i}az-Caro\inst{123}
  \and
  Gilles Dowek\inst{4}
  \and\\
  Malena Ivnisky\inst{235}
  \and
  Octavio Malherbe\inst{6}
}

\authorrunning{A.~D\'{\i}az-Caro, G.~Dowek, M.~Ivnisky, and O.~Malherbe}
\institute{
  Universidad Nacional de Quilmes. DCyT. Bernal, PBA, Argentina\\
  \and
  Universidad de Buenos Aires. FCEyN. DC. Buenos Aires, Argentina\\
  \and
  Universidad de Buenos Aires-CONICET. ICC. Buenos Aires. Argentina\\
  \and
  Inria, LMF, ENS Paris-Saclay, Gif-sur-Yvette, France\\
  \and
  Universidad de la Rep\'ublica--MEC, PEDECIBA.  Montevideo, Uruguay
  \and
  Universidad de la Rep\'ublica.  IMERL, FIng.  Montevideo, Uruguay
}

\maketitle

\begin{abstract}
  We present a polymorphic linear lambda-calculus as a proof language for
  second-order intuitionistic linear logic. The calculus includes addition and
  scalar multiplication, enabling the proof of a linearity result at the
  syntactic level.

  \keywords{Proof theory, Lambda calculus, Linear logic, Polymorphism}

\end{abstract}

\section{Introduction}\label{sec:intro}

Linear Logic~\cite{Girard87} is named as such because it is modelled by vector
spaces and linear maps, and more generally by monoidal
categories~\cite{EilenbergKelly66,Benabou63}. These types of categories also
include the so-called Cartesian categories, generating a formal place of
interaction between purely algebraic structures and purely logical structures,
i.e.~between algebraic operations and the exponential connective ``$\oc$''.  In the
strictly linear fragment (without $\oc$),
functions between two propositions are linear functions. However, expressing
this linearity within the proof-term language itself is challenging. Properties
such as $f(u+v) = f(u)+f(v)$ and $f(a \cdot u) = a \cdot f(u)$ require
operations like addition and scalar multiplication, which are typically absent
in the proof language.

In~\cite{DiazcaroDowekFSCD22,DiazcaroDowek2023}, this challenge has been addressed. The
Intuitionistic Multiplicative Additive Linear Logic is considered and extended 
with addition and scalar multiplication within the proof-terms. The resulting
calculus, the $\mathcal L^{\mathcal S}$-calculus, does not alter the
provability of formulas but allows us to express linear properties. It has been proved
that any proof-term $t~(u\plus v)$, where $t$ is a proof-term of $A \multimap B$ and
$u$ and $v$ are proof-terms of $A$, is extensionally equivalent to the proof-term
$t~u\plus t~v$. Similarly, $t~(a\bullet u)$ is equivalent to $a\bullet t~u$.

This extension involves changing the proof-term $\star$ of proposition $\one$ into a
family of proof-terms $a.\star$, one for each scalar $a$ in a given fixed semiring
$\mathcal S$:
\[
  \infer[\mbox{\small $\one_i$($a$)}]{\vdash a.\star:\one}{}
\]

The following two deduction rules have also been added:
\[
  \infer[{\mbox{\small sum}}]{\Gamma\vdash t\plus u:A}{\Gamma\vdash t:A & \Gamma\vdash u:A}
  \qquad\qquad
  \infer[{\mbox{\small prod($a$)}}]{\Gamma\vdash a\bullet t:A}{\Gamma\vdash t:A}
\]

Incorporating these rules requires adding commuting rules to preserve
cut-elimination. Indeed, the new rules may appear between an introduction and
an elimination of some connective. For example, consider the following
derivation.
\[
  \infer[\mbox{\small $\with_{e1}$}]{\Gamma\vdash C}
  {
    \infer[\mbox{\small prod($a$)}]{\Gamma\vdash A\with B}
    {
      \infer[\mbox{\small $\with_i$}]{\Gamma\vdash A\with B}
      {
	\Gamma\vdash A
	&
	\Gamma\vdash B
      }
    }
    &
    \Gamma,A\vdash C
  }
\]

To achieve cut-elimination, we must commute the rule prod($a$) either with the
introduction or with the elimination, as follows: 

\[
  \infer[\mbox{\small$\with_{e1}$}]{\Gamma\vdash C}
  {
    \infer[\mbox{\small$\with_i$}]{\Gamma\vdash A\with B}
    {
      \infer[\mbox{\small prod($a$)}]{\Gamma\vdash A}{\Gamma\vdash A}
      &
      \infer[\mbox{\small prod($a$)}]{\Gamma\vdash B}{\Gamma\vdash B}
    }
    &
    \Gamma,A\vdash C
  }
  \infer[\mbox{\small prod($a$)}]{\Gamma\vdash C}
  {
    \infer[\mbox{\small$\with_{e1}$}]{\Gamma\vdash C}
    {
      \infer[\mbox{\small$\with_i$}]{\Gamma\vdash A\with B}
      {
	\Gamma\vdash A
	&
	\Gamma\vdash B
      }
      &
      \Gamma,A\vdash C
    }
  }
\]

Both of these are reducible. We refer to the sum and prod($a$) rules
as \emph{interstitial rules}, as they can appear in the interstice between an
introduction and an elimination. We choose to commute these rules with the
introductions as much as possible.  This means we introduce the following
commutation rule
\(
  a\bullet\pair{t}{u}\lra\pair{a\bullet t}{a\bullet u}
\)
instead of the alternative rule
\(
  \elimwith^1(a\bullet t,x^A.u) \lra a\bullet\elimwith^1(t,x^A.u)
\).
This choice provides a better introduction property: A closed irreducible proof-term
of a proposition $A\with B$ is a pair.

In this new paper, we extend the proof system to second-order intuitionistic linear logic, adding the
exponential connective and a universal quantifier. We prove that the linearity
result still holds for second order. This is an intermediate step in a long-term research towards the definition of second-order categorical semantics in linear categories, building on the work of~\cite{Maneggia} and extending the semantics defined in~\cite{DiazcaroMalherbe22}. The midterm objective is to define a second-order linear model in which quantum computing can be encoded.

The initial development of the proof
language $\mathcal L^{\mathcal S}$~\cite{DiazcaroDowekFSCD22,DiazcaroDowek2023}, paved the way
for the development of the second-order version presented in this paper.  While
our primary focus is on introducing a minimal extension to the proof language
within the realm of second-order intuitionistic linear logic, our work draws
inspiration from various domains, particularly quantum programming languages. These languages were trailblazers in
merging programming constructs with algebraic operations, such as addition and
scalar multiplication.

QML~\cite{AltenkirchGrattageLICS05} introduced the concept of superposition of terms through the
$\mathsf{if}^\circ$ constructor, allowing the representation of linear
combinations $a.u+b.v$ by the expression $\mathsf{if}^\circ\ a.\ket 0 + b.\ket
1\ \mathsf{then}\ u\ \mathsf{else}\ v$. The linearity (and even unitarity)
properties of QML were established through a translation to quantum circuits.

The ZX calculus~\cite{ZXBook17}, a graphical language based on a categorical
model, lacks direct syntax for addition or scalar multiplication but defines a
framework where such constructs can be interpreted. This language is extended
by the Many Worlds Calculus~\cite{Chardonnet23} which allows for
linear combinations of diagrams.

The algebraic lambda-calculus~\cite{Vaux2009} and Lineal~\cite{Lineal} exhibit
syntax similarities with $\mathcal L^{\mathcal S}$-calculus. However, the
algebraic lambda-calculus lacks a proof of linearity in its simple
intuitionistic type system. In contrast, Lineal enforces linearity without a
dedicated type system, relying on explicit definitions like $f(u+v) = f(u) +
f(v)$ and $f(a.u) = a.f(u)$.  Several type systems have been proposed for
Lineal~\cite{ArrighiDiazcaroLMCS12,DiazcaroPetitWoLLIC12,ArrighiDiazcaroValironIC17,LambdaS,DiazcaroGuillermoMiquelValironLICS19},
including some polymorphic ones. However, none of these systems are related to
linear logic, and their purpose is not to prove linearity, as we do, but rather to enforce
it.

Our contributions are as follows:
\begin{itemize}
  \item We extend the $\mathcal L^{\mathcal S}$-calculus to Church-style
    second-order intuitionistic linear logic, resulting in the $\OC$-calculus 
    (Section~\ref{sec:secls}).
  \item We prove its correctness (Section~\ref{sec:correctness}), namely,
    Subject Reduction (Theorem~\ref{thm:SR}), Confluence
    (Theorem~\ref{thm:Confluence}), Strong Normalisation 
    (Theorem~\ref{cor:ST}), and the Introduction Property
    (Theorem~\ref{thm:introductions}). In particular, the proof of strong
    normalisation involves applying two techniques due to Girard: ultra-reduction
    and reducibility candidates.
  \item Since it is a conservative extension, the encodings for vectors
    (Section~\ref{sec:secvectors}) and matrices
    (Section~\ref{sec:secmatrices}), already present in the $\mathcal
    L^{\mathcal S}$-calculus are still valid. We provide detailed explanations
    of these encodings for self-containment.  Since we have polymorphism and
    the exponential connective, we also show an example of an iterator in the $\OC$-calculus (Section~\ref{sec:ExPoly}).
  \item Finally, we prove that the linearity result is also valid for the
    second order without exponentials (Section~\ref{sec:seclinearity}).
\end{itemize}

\section{The \texorpdfstring{$\OC$}{LS2}-calculus}
\label{sec:secls}

The $\OC$-logic is the second-order intuitionistic linear logic (we follow the presentation of $F_{\mathsf{DILL}}$~\cite{Maneggia}, extended with the additive linear connectives).

\[
  A = X \mid \one \mid A \multimap A \mid A \otimes A \mid \top \mid \zero
  \mid A \with A \mid A \oplus A \mid \oc A \mid \forall X.A
\]
The $\alpha$-equivalence relation and the free and bound variables of a
proposition are defined as usual, we write as $\FV(A)$ the set of free
variables of $A$.  Propositions are defined modulo $\alpha$-equivalence.  A
proposition is closed if it contains no free variables.  We write $(B/X)A$ for
the substitution of $B$ for $X$ in $A$.

Let ${\mathcal S}$ be a semiring of {\it scalars}, for instance $\{\star\}$,
$\mathbb N$, ${\mathbb Q}$, ${\mathbb R}$, or ${\mathbb C}$.  The proof-terms
of the $\OC$-calculus are given in Figure~\ref{fig:proofterms}, where $a$ is a scalar
in $\mathcal S$.
\begin{figure}[t]
  \[
    \begin{array}{rllc}
      &\textrm{Introductions} & \textrm{Eliminations} & \textrm{Connective}\\
      t = x \mid t \plus u \mid a \bullet t \\
      & \mid a.\star & \mid \elimone(t,u) & (\one)\\  
      & \mid \lambda x^A.t & \mid t~u &(\multimap)\\
      &\mid t \otimes u & \mid \elimtens(t, x^A y^B.u) & (\otimes)\\
      & \mid \langle \rangle  && (\top)\\
      & &\mid\elimzero(t) & (\zero)\\
      &\mid \pair{t}{u} &\mid \elimwith^1(t,x^A.u) \mid \elimwith^2(t,x^B.u) & (\with)\\
      &\mid \inl(t)\mid \inr(t) & \mid \elimplus(t,x^A.u,y^B.v) & (\oplus)\\
      & \mid \oc t & \mid \elimbang(t, x^A.u) & (\oc) \\
      & \mid \Lambda X.t &\mid t~A & (\forall)
    \end{array}
  \]
  \caption{The proof-terms of the $\OC$-calculus.\label{fig:proofterms}}
\end{figure}

The $\alpha$-equivalence relation and the free and bound variables of a
proof-term are defined as usual, we write as $\fv(t)$ the set of free variables
of $t$.  Proof-terms are defined modulo $\alpha$-equivalence.  A proof-term is
closed if it contains no free variables.  We write $(u/x)t$ for the
substitution of $u$ for $x$ in $t$ and if $\fv(t) \subseteq \{x\}$, we also use
the notation $t\{u\}$, when there is at most one free variable in $t$.

A sequent has the form $\Xi;\Gamma\vdash t:A$, where $\Xi$ is the non-linear
context and $\Gamma$ the linear one.  The deduction rules are those of
Figure~\ref{fig:figuretypingrules}.  These rules are exactly the deduction rules of
second-order intuitionistic linear natural deduction, with proof-terms, with
two differences: the interstitial rules and the scalars (see Section~\ref{sec:intro}).

\begin{figure}[t]
  \[
    \infer[{\mbox{\small {lin-}ax}}]{{\Xi;} x^A \vdash x:A}{}
    \qquad
    \infer[{\mbox{\small ax}}]{\Xi, x^A;\varnothing \vdash x:A}{}
    \qquad
    \infer[{\mbox{\small sum}}]{{\Xi;} \Gamma \vdash t \plus u:A}{{\Xi;} \Gamma \vdash t:A & {\Xi;} \Gamma \vdash u:A}
  \]
  \[
    \infer[{\mbox{\small prod}(a)}]{{\Xi;} \Gamma \vdash a \bullet t:A}{{\Xi;} \Gamma \vdash t:A}
    \qquad
    \infer[{\mbox{\small $\one_i$}(a)}] {{\Xi;\varnothing} \vdash a.\star:\one}{}
    \qquad
    \infer[{\mbox{\small $\one_e$}}]{{\Xi;}\Gamma, \Delta \vdash \elimone(t,u):A}{{\Xi;}\Gamma \vdash t:\one & {\Xi;}\Delta \vdash u:A}
  \]
  \[
    \infer[{\mbox{\small $\multimap_i$}}]{{\Xi;}\Gamma \vdash \lambda x^A.t:A \multimap B}{{\Xi;}\Gamma, x^A \vdash t:B}
    \qquad
    \infer[{\mbox{\small $\multimap_e$}}]{{\Xi;}\Gamma, \Delta \vdash t~u:B}{{\Xi;}\Gamma \vdash t:A\multimap B & {\Xi;}\Delta \vdash u:A}
  \]
  \[
    \infer[{\mbox{\small $\otimes_i$}}]{{\Xi;}\Gamma, \Delta \vdash t \otimes u:A \otimes B}{{\Xi;}\Gamma \vdash t:A & {\Xi;}\Delta \vdash u:B}
    \qquad
    \infer[{\mbox{\small $\otimes_e$}}]{{\Xi;}\Gamma, \Delta \vdash \elimtens(t, x^A y^B.u):C}{{\Xi;}\Gamma \vdash t:A \otimes B & {\Xi;}\Delta, x:A, y:B \vdash u:C}
  \]
  \[
    \infer[{\mbox{\small $\top_i$}}]{{\Xi;} \Gamma \vdash \langle \rangle:\top}{}
    \qquad
    \infer[{\mbox{\small $\zero_e$}}]{{\Xi;}\Gamma, \Delta \vdash \elimzero(t):C}{{\Xi;}\Gamma \vdash t:\zero}
    \qquad
    \infer[{\mbox{\small $\with_i$}}]{{\Xi;}\Gamma \vdash \pair{t}{u}:A \with B}{{\Xi;}\Gamma \vdash t:A & {\Xi;}\Gamma \vdash u:B}
  \]
  \[
    \infer[{\mbox{\small $\with_{e1}$}}]{{\Xi;}\Gamma, \Delta \vdash \elimwith^1(t,x^A.u):C}{{\Xi;}\Gamma \vdash t:A \with B & {\Xi;}\Delta, x^A \vdash u:C}
    \qquad
    \infer[{\mbox{\small $\with_{e2}$}}]{{\Xi;}\Gamma, \Delta \vdash \elimwith^2(t,x^B.u):C}{{\Xi;}\Gamma \vdash t:A \with B & {\Xi;}\Delta, x^B \vdash u:C}
  \]
  \[
    \infer[{\mbox{\small $\oplus_{i1}$}}]{{\Xi;}\Gamma \vdash \inl(t):A \oplus B}{{\Xi;}\Gamma \vdash t:A}
    \qquad
    \infer[{\mbox{\small $\oplus_{i2}$}}]{{\Xi;}\Gamma \vdash \inr(t):A \oplus B}{{\Xi;}\Gamma \vdash t:B}
  \]
  \[
    \infer[{\mbox{\small $\oplus_e$}}]{{\Xi;}\Gamma, \Delta \vdash \elimplus(t,x^A.u,y^B.v):C}{{\Xi;}\Gamma \vdash t:A \oplus B & {\Xi;}\Delta, x^A \vdash u:C & {\Xi;}\Delta, y^B \vdash v:C}
  \]
  \[
    \infer[{\mbox{\small $\oc_i$}}]{\Xi; \varnothing \vdash \oc t: \oc A}{\Xi; \varnothing \vdash t:A}
    \qquad
    \infer[{\mbox{\small $\oc_e$}}]{\Xi; \Gamma, \Delta \vdash \elimbang(t, x^A.u):B}{\Xi; \Gamma \vdash t:\oc A & \Xi, x^A; \Delta \vdash u:B}
  \]
  \[
    \infer[{\mbox{\small $\forall_i$}}]
    {\Xi;\Gamma \vdash \Lambda X.t:\forall X.A}{\Xi;\Gamma \vdash t:A & X \notin \FV(\Xi,\Gamma)}
    \qquad
    \infer[{\mbox{\small $\forall_e$}}]
    {\Xi;\Gamma \vdash t~A:(A/X)B}{\Xi;\Gamma \vdash t:\forall X.B}
  \]
  \caption{The deduction rules of the $\OC$-calculus.\label{fig:figuretypingrules}}
\end{figure}

\begin{figure}[t]
  \[
    \begin{array}{rl@{\hspace{-8mm}}rl}
    \elimone(a.\star,t) & \lra  a \bullet t 
    	&\elimplus(\inl(t),x^A.v,y^B.w) & \lra  (t/x)v\\
    (\lambda x^A.t)~u & \lra  (u/x)t 
    	&\elimplus(\inr(u),x^A.v,y^B.w) & \lra  (u/y)w\\
    \elimtens(u \otimes v,x^A y^B.w) & \lra  (u/x,v/y)w
    	&\elimbang(\oc t, x^A.u) & \lra (t/x)u\\
    \elimwith^i(\pair{t_1}{t_2}, x^A.v) & \lra  (t_i/x)v
    	&(\Lambda X.t)~A & \lra (A/X)t
	\\[1mm]
	\hline
	\\[-3mm]
      {a.\star} \plus b.\star&\lra  (a+b).\star
      &\pair{t}{u} \plus \pair{v}{w} & \lra  \pair{t \plus v}{u \plus w}\\
      (\lambda x^A.t) \plus (\lambda x^A.u) & \lra  \lambda x^A.(t \plus u)
      &\elimplus(t \plus u,x^A.v,y^B.w) & \lra\\
      \elimtens(t \plus u,x^A y^B.v) & \lra
      &\elimplus(t,x^A.v,y^B.&\!\!w) \plus \elimplus(u,x^A.v,y^B.w)\\
      \elimtens(t,&\!x^A y^B.v) \plus \elimtens(u,x^A y^B.v)
      &\oc t \plus \oc u & \lra \oc (t \plus u)\\
      \langle \rangle \plus \langle \rangle & \lra  \langle \rangle
      &(\Lambda X.t) \plus (\Lambda X.u) &  \lra \Lambda X.(t \plus u)
	\\[1mm]
	\hline
	\\[-3mm]
    a \bullet b.\star&\lra  (a \times b).\star
    	&a \bullet \pair{t}{u} &\lra  \pair{a \bullet t}{a \bullet u}\\
    a \bullet \lambda x^A. t &\lra  \lambda x^A. a \bullet t
    	&\elimplus(a \bullet t,x^A.v,y^B.w) & \lra a \bullet \elimplus(t,x^A.v,y^B.w)\\
    \elimtens(a \bullet t,x^A y^B.v) & \lra a \bullet \elimtens(t,x^A y^B.v)
    	&a \bullet \oc t & \lra \oc (a \bullet t)\\
    a \bullet \langle\rangle &\lra  \langle\rangle
    	&a \bullet \Lambda X.t & \lra \Lambda X.a \bullet t
    \end{array}
  \]
  \caption{The reduction rules of the $\OC$-calculus.\label{fig:figureductionrules}}
\end{figure}

The reduction rules are those of Figure~\ref{fig:figureductionrules}.  As
usual, the reduction relation is written $\lra$, its inverse $\lla$,
its reflexive-transitive closure $\lras$, the reflexive-transitive
closure of its inverse $\llas$, and its reflexive-symmetric-transitive
closure $\equiv$.  The first nine rules correspond to the reduction of
cuts on the connectives $\one$, $\multimap$,
$\otimes$,
$\with$, $\oplus$, $\oc$, and $\forall$.
The sixteen others enable to commute the interstitial rules sum and prod($a$)
with the introduction rules of the connectives $\one$, $\multimap$,
$\top$, $\with$, $\oc$, and $\forall$, and with the elimination rule of the connectives
$\otimes$ and $\oplus$.
For instance, the rule
\(
  \pair{t}{u} \plus \pair{v}{w} \lra  \pair{t \plus v}{u \plus w}
\)
pushes the symbol $\plus$ inside the pair.  The zero-ary commutation rules add
and multiply the scalars:
\(
  {a.\star} \plus b.\star \lra  (a+b).\star
\),
\(
  a \bullet b.\star \lra  (a \times b).\star
\).

\section{Correctness}\label{sec:correctness}
We now prove the subject reduction, confluence, strong normalisation, and introduction
properties of the $\OC$-calculus.

\subsection{Subject reduction}
\label{sec:SR}
The subject reduction property is not completely trivial.  As noted in the
introduction, we commute the sum rule with the introductions \emph{as much as
possible}. It is not possible in the case of the connectives $\otimes$ and
$\oplus$, since it would break subject reduction.  For example, the rule $(t\otimes u)\plus(v\otimes w)\lra (t\plus v)\otimes(u\plus w)$ would not be valid, as we have
\(
  \varnothing;x^A,y^A\vdash (x\otimes y)\plus (y\otimes x):A\otimes A
\)
but
\(
  \varnothing;x^A,y^A \nvdash (x \plus y)\otimes (y \plus x):A\otimes A
\),
since the rule $\otimes_i$ is multiplicative. 

\begin{restatable}[Subject reduction]{theorem}{SR}
  \label{thm:SR}
  If $\Xi; \Gamma \vdash t:A$ and $t \lra u$, then $\Xi; \Gamma \vdash u:A$.
\end{restatable}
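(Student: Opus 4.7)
The plan is to proceed by induction on the derivation of $t \lra u$. Most of the structural work happens at the base cases (one of the twenty-five reduction rules applied at the root), while the congruence cases follow routinely by the induction hypothesis together with inversion of the typing rule under which the reduction takes place.

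Before the induction, I would prove three substitution lemmas: (i) a linear substitution lemma, stating that if $\Xi;\Gamma,x^A \vdash t:B$ and $\Xi;\Delta \vdash u:A$ then $\Xi;\Gamma,\Delta \vdash (u/x)t:B$; (ii) a non-linear substitution lemma, stating that if $\Xi,x^A;\Gamma \vdash t:B$ and $\Xi;\varnothing \vdash u:A$ then $\Xi;\Gamma \vdash (u/x)t:B$; and (iii) a type substitution lemma, stating that if $\Xi;\Gamma \vdash t:B$ then $(A/X)\Xi;(A/X)\Gamma \vdash (A/X)t:(A/X)B$. Each is proved by a straightforward induction on the typing derivation, being careful that the linear contexts in (i) split multiplicatively and that the premise $\Xi;\varnothing \vdash u:A$ in (ii) allows $u$ to be duplicated or discarded as $x$ is in $\Xi$. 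These three lemmas dispatch all nine $\beta$-style (cut-elimination) rules: for instance, the redex $(\lambda x^A.t)\,u$ is typed by $\multimap_i$ followed by $\multimap_e$, and lemma (i) gives the type of $(u/x)t$; the redex $\elimbang(\oc t, x^A.u)$ uses (ii); and $(\Lambda X.t)\,A$ uses (iii).

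The remaining sixteen rules are the commutations of $\plus$ and $a\bullet$ with introductions (and with the eliminations of $\otimes$ and $\oplus$). These are handled uniformly: from the typing of the redex one inverts the outermost sum or prod rule and the relevant introduction/elimination rule, and then reassembles the derivation in the new order. For example, to type $\pair{t\plus v}{u\plus w}$ starting from $\Xi;\Gamma \vdash \pair{t}{u}\plus\pair{v}{w}:A\with B$, inverting sum and $\with_i$ gives $\Xi;\Gamma \vdash t,v:A$ and $\Xi;\Gamma \vdash u,w:B$ over the same $\Gamma$ (crucially, $\with_i$ shares contexts), whence sum and $\with_i$ in the new order give the required derivation. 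The case $\elimtens(t\plus u, x^Ay^B.v)\lra \elimtens(t, x^Ay^B.v)\plus \elimtens(u, x^Ay^B.v)$ is representative of the trickier commutations: one uses that both $t$ and $u$ live in the same context $\Gamma$ (from sum), that $\otimes_e$ splits as $\Gamma,\Delta$ with $v$ typed in $\Delta,x^A,y^B$, and that the two copies of $\elimtens(\cdot,x^Ay^B.v)$ can therefore be recombined by the sum rule over $\Gamma,\Delta$. The $\oc$ cases $\oc t\plus \oc u\lra \oc(t\plus u)$ and $a\bullet \oc t\lra \oc(a\bullet t)$ work because $\oc_i$ forces an empty linear context, so sum and prod can be pushed under $\oc$; the $\forall$ cases work because of the side condition $X\notin \FV(\Xi,\Gamma)$, which is preserved.

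The main obstacle, and the reason the theorem is not completely trivial (as the authors remark), is the multiplicative nature of $\otimes_i$ and $\oplus_i$: in those two cases, the would-be commutation of sum with the introduction (e.g.\ $(t\otimes u)\plus(v\otimes w)\lra (t\plus v)\otimes(u\plus w)$) is not present in Figure~\ref{fig:figureductionrules} precisely because it cannot be typed, since the sum rule uses a shared context while $\otimes_i$ splits it. Verifying subject reduction therefore really amounts to verifying that each rule we \emph{have} included admits a typing reassembly compatible with how the context is split (shared for $\plus$, $a\bullet$, $\with_i$, $\top_i$, $\multimap_i$, $\oc_i$, $\forall_i$ and split for $\otimes_i$, $\otimes_e$, $\oplus_i$, $\oplus_e$, $\one_e$, $\multimap_e$, $\zero_e$, $\with_{ei}$, $\oc_e$). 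Once this bookkeeping is in place, every case goes through mechanically.
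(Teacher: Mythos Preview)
Your proposal is correct and follows essentially the same approach as the paper: the paper proves exactly your three substitution lemmas (type substitution as Lemma~\ref{lem:substitutiontypesonterms}, and the linear/non-linear term substitutions as the two parts of Lemma~\ref{lem:polysubstitution}), and then proceeds by induction on $\lra$, dispatching the nine cut-elimination rules via those lemmas and the sixteen commutation rules by inversion and reassembly precisely as you outline.
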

\begin{proof}
  By induction on the relation $\lra$. The proof is given in Appendix~\ref{proof:SR}.
  \qed
\end{proof}

\subsection{Confluence}

\begin{theorem}[Confluence]
\label{thm:Confluence}
The $\OC$-calculus is confluent.
\end{theorem}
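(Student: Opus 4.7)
The plan is to apply Newman's lemma: by Theorem~\ref{cor:ST} the $\OC$-calculus is strongly normalising on well-typed terms, so it suffices to establish local confluence, i.e.\ that whenever $\Xi;\Gamma\vdash t:A$ and $t \lra u_1$, $t \lra u_2$, there exists a common reduct $v$ with $u_1 \lras v$ and $u_2 \lras v$.

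I would prove local confluence by case analysis on the relative positions of the two redexes. Parallel (disjoint) redexes commute outright. When one redex lies strictly inside the other, the outer rule either preserves the inner one (for the commutation rules) or copies it through a substitution (for the cut-elimination rules); the inner reduction can then be performed in each surviving copy, closing the diagram via $\lras$. The delicate cases are the critical pairs, i.e.\ overlaps of two left-hand sides at a common root.

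Inspection of Figure~\ref{fig:figureductionrules} shows that the cut-elimination rules act on pairwise distinct (introduction, elimination) pairs and so do not overlap among themselves; the sum-commutation rules are distinguished by the introduction constructor or eliminator being commuted, and likewise for the prod-commutations, so these families also have no internal overlaps. The only potential cross-family overlaps arise when a $\plus$- or $\bullet$-commutation of an eliminator, such as $\elimplus(t\plus u,\,x^A.v,\,y^B.w) \lra \elimplus(t,x^A.v,y^B.w)\plus\elimplus(u,x^A.v,y^B.w)$ or $\elimtens(a\bullet t,\,x^A y^B.v) \lra a\bullet\elimtens(t,x^A y^B.v)$, meets an inner commutation whose left-hand side is exactly $t\plus u$ or $a\bullet t$. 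Subject reduction (Theorem~\ref{thm:SR}) rules out every such configuration as ill-typed: the inner rule produces an introduction at type $\one$, $\with$, $\multimap$, $\oc$, $\top$, or $\forall$, while the surrounding eliminator demands $\oplus$ or $\otimes$. Hence no critical pair survives on typed terms, and local confluence is immediate.

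The main obstacle is recognising that confluence here is genuinely a typed property: the raw rewriting is not confluent, as witnessed by the untypable term $\elimplus(a.\star \plus b.\star,\,x^A.v,\,y^B.w)$, whose two reducts $\elimplus(a.\star,x^A.v,y^B.w)\plus\elimplus(b.\star,x^A.v,y^B.w)$ and $\elimplus((a+b).\star,x^A.v,y^B.w)$ are both stuck. The entire argument must therefore be carried out on well-typed terms, invoking subject reduction to discard the spurious syntactic critical pairs.
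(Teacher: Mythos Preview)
Your argument is correct, and the core observation---that every syntactic critical pair involves an ill-typed term---coincides with the paper's. The two routes diverge at the global step. The paper does not use Newman's lemma: it observes that the system restricted to well-typed proof-terms is left-linear and has no critical pairs, and then invokes a general confluence theorem for orthogonal higher-order rewriting (\cite[Theorem~6.8]{Nipkow}) to obtain confluence outright, with no appeal to termination. You instead reduce to local confluence and close with Newman's lemma via strong normalisation (Corollary~\ref{cor:ST}). This is sound, but it makes confluence depend on strong normalisation, which is a much heavier result and in the paper is proved only afterwards. The compensating advantage of your route is that it is self-contained and avoids the higher-order orthogonality machinery; your explicit untypable counterexample is also a clarification the paper leaves implicit in the phrase ``applied to well-formed proof-terms''.
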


\begin{proof}
The reduction system of Figure~\ref{fig:figureductionrules} applied to
well-formed proof-terms is left linear and has no critical pairs.  By
\cite[Theorem 6.8]{Nipkow}, it is confluent.\qed
\end{proof}

\subsection{Strong normalisation}\label{sec:ST}
We now prove that all reduction sequences are finite.  To handle the symbols
$\plus$ and $\bullet$ and the associated reduction rules, we prove the strong
normalisation of an extended reduction system, in the spirit of Girard's
ultra-reduction\footnote{Ultra-reduction is used, in particular, in the
adequacy of the connectives $\plus$ and $\bullet$.}~\cite{GirardPhDThesis},
whose strong normalisation obviously implies that of the rules of
Figure~\ref{fig:figureductionrules}.

\begin{definition}[Ultra-reduction]
  Ultra-reduction is defined with the rules of Figure~\ref{fig:figureductionrules},
  plus the rules
  \[
    t \plus u  \lra  t \qquad\qquad
    t \plus u  \lra  u \qquad\qquad
    a \bullet t  \lra  t
  \]
\end{definition}

Our proof is an extension from the proof of the $\mathcal L^{\mathcal
S}$-calculus~\cite{DiazcaroDowekFSCD22,DiazcaroDowek2023} and that of System F, using the methods
introduced by Tait~\cite{TaitJSL67} for Gödel's System T and generalised to
System F by Girard~\cite{GirardPhDThesis}.

\begin{definition}
  $\SN$ is the set of strongly normalising proof-terms and $Red(t)$ is the set of one-step reducts of $t$. That is,
  $\SN = \{ t \mid t\textrm{ strongly normalises}\}$ and $Red(t) = \{u\mid t\lra u\}$.
\end{definition}

\begin{definition}[Reducibility candidates]
  A set of proof-terms $E$ is a reducibility candidate if and only if the following conditions are satisfied.

    (CR1) $E \subseteq \SN$.

    (CR2) If $t \in E$ and $t \lra t'$, then $t' \in E$.

    (CR3) If $t$ is not an introduction and $Red(t) \subseteq E$, then $t \in E$.

    (CR4) If $t \in E$, then for every $X$ and $A, (A/X)t \in E$.
\end{definition}

\begin{definition}
  Let $E,F$ be sets of proof-terms. We define the following sets.
  \begin{align*}
    E \hatmultimap F &= \{t \in \SN \mid \textrm{if }t \lras \lambda x^A.u\textrm{, then for every }v \in E, (v/x)u \in F\}\\
    E \hatotimes F &= \{t \in \SN \mid \textrm{if }t \lras u \otimes v\textrm{, then }u \in E\textrm{ and }v \in F\}\\
    E \hatand F &= \{t \in \SN \mid \textrm{if }t \lras \langle u,v \rangle\textrm{, then }u \in E\textrm{ and }v \in F\}\\
    E \hatoplus F &= \{t \in \SN \mid \textrm{if }t \lras \inl(u)\textrm{,\,then\,}u \in E\textrm{\,and if }t \lras \inr(v)\textrm{,\,then }v \in F\}\\
    \hatbang E &= \{t \in \SN \mid \textrm{if }t \lras \oc u\textrm{, then }u \in E\}
  \end{align*}
\end{definition}

The set of all reducibility candidates is called $\mathcal R$. A valuation $\rho$ is a map from proposition variables to $\mathcal R$.

\begin{definition}
  For any proposition $A$ and valuation $\rho$, the set of proof-terms $\llbracket A \rrbracket_\rho$ is defined as follows:
  \[
    \begin{array}{rl@{\quad}rl}
      \llbracket X \rrbracket_\rho &= \rho(X) & 
      \llbracket \top \rrbracket_\rho = \llbracket \zero \rrbracket_\rho &= \SN\\
      \llbracket \one \rrbracket_\rho &= \SN &
      \llbracket A \with B \rrbracket_\rho &= \llbracket A \rrbracket_\rho \hatand \llbracket B \rrbracket_\rho\\
      \llbracket A \multimap B \rrbracket_\rho &= \llbracket A \rrbracket_\rho \hatmultimap \llbracket B \rrbracket_\rho &
      \llbracket A \oplus B \rrbracket_\rho &= \llbracket A \rrbracket_\rho \hatoplus \llbracket B \rrbracket_\rho\\
      \llbracket A \otimes B \rrbracket_\rho &= \llbracket A \rrbracket_\rho \hatotimes \llbracket B \rrbracket_\rho & 
      \llbracket \oc A \rrbracket_\rho &= \hatbang\llbracket A\rrbracket_\rho\\
      \llbracket \forall X.A \rrbracket_\rho & 
      \multicolumn{3}{l}{=
	\begin{aligned}[t]
	  &\{ t \in \SN \mid \text{if }t \lras \Lambda X.u\text{, then for every }E \in \mathcal R \\
	  &\text{and every proposition }B, (B/X)u \in \llbracket A \rrbracket_{\rho, E/X}\}
	\end{aligned}
      }
    \end{array}
  \]
\end{definition}

\begin{restatable}{lemma}{typeinterpretationsarerc}
  \label{lem:typeinterpretationsarerc}
  For any proposition $A$ and valuation $\rho$, $\llbracket A \rrbracket_\rho \in \mathcal R$.
\end{restatable}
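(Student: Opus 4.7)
My plan is to prove the statement by induction on the structure of the proposition $A$, verifying clauses CR1--CR4 in each case.

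First, I would establish an auxiliary observation that term reduction is invariant under type substitution: if $t \lra t'$ then $(A/X)t \lra (A/X)t'$, and no new redexes are created in $(A/X)t$ that are not already present in $t$. In particular, type substitution preserves strong normalisation. This observation is needed both to verify CR4 for $\SN$ and to handle the type-substitution closure in every inductive case.

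The base cases are straightforward. The set $\llbracket X \rrbracket_\rho = \rho(X)$ is a reducibility candidate by the hypothesis on the valuation $\rho$. The interpretations $\llbracket \top \rrbracket_\rho$, $\llbracket \zero \rrbracket_\rho$, and $\llbracket \one \rrbracket_\rho$ all equal $\SN$, so it suffices to check that $\SN \in \mathcal R$: CR1 and CR2 are immediate, CR3 follows from the standard fact that a term whose one-step reducts are all strongly normalising is itself strongly normalising, and CR4 is precisely the preliminary observation.

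For the inductive cases corresponding to the connectives $\multimap, \otimes, \with, \oplus, \oc$, I would show that each of the operators $\hatmultimap, \hatotimes, \hatand, \hatoplus, \hatbang$ preserves membership in $\mathcal R$ when applied to candidates given by the induction hypothesis. The argument for each follows a uniform pattern. CR1 holds by definition, since every constructed set is explicitly a subset of $\SN$. CR2 follows because if $t \in E \hatmultimap F$ (say) reduces to $t'$, then $t'$ is strongly normalising and any reduction from $t'$ to an introduction extends to one from $t$, so the defining condition transfers. CR3 uses the crucial fact that since $t$ is not an introduction, any reduction from $t$ to an introduction must take at least one step, hence passes through some one-step reduct of $t$, which lies in the candidate by hypothesis. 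CR4 combines the preliminary observation with the inductive hypothesis that each component candidate is closed under type substitution, using the standard commutation of a type-substitution with the outer substitutions in the definition, after $\alpha$-renaming to ensure the type variable is fresh.

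The main obstacle, and the only point where the polymorphic setting genuinely complicates matters, is the case $\forall X.A$. Its interpretation involves a universal quantification over all candidates $E \in \mathcal R$ and all propositions $B$, and the induction hypothesis only supplies $\llbracket A \rrbracket_{\rho, E/X} \in \mathcal R$ for each $E$ separately. The verification of CR4 requires that if $t \lras \Lambda X.u$ and $(C/Y)t \lras \Lambda X.u'$, then $u' = (C/Y)u$ modulo $\alpha$-renaming chosen so that $X$ is fresh in $C$; the closure property for $\llbracket \forall X.A \rrbracket_\rho$ then follows by applying CR4 of $\llbracket A \rrbracket_{\rho, E/X}$ inductively. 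This is the step that most requires care to set up cleanly.
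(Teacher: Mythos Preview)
Your proposal is correct and follows essentially the same approach as the paper's proof: structural induction on $A$, verifying CR1--CR4 case by case, relying throughout on the fact that type substitution commutes with reduction and creates no new redexes. The paper handles CR4 for $\multimap$ and $\forall$ via exactly the fresh-variable renaming trick you allude to, applying the inductive CR4 of the component candidate twice to commute the two type substitutions.
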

\begin{proof}
  By induction on $A$. The proof is given in Appendix~\ref{proof:ST}.
  \qed
\end{proof}

\begin{lemma}[Variables]
  \label{lem:Var}
  For any proposition $A$ {and any valuation $\rho$}, the set $\llbracket A \rrbracket_{\rho}$ contains all the proof-term variables.
\end{lemma}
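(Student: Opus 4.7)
The plan is a one-shot application of condition (CR3) from the definition of reducibility candidates. Since Lemma~\ref{lem:typeinterpretationsarerc} tells me that $\llbracket A\rrbracket_\rho$ is a reducibility candidate for every proposition $A$ and every valuation $\rho$, I only need to check the two hypotheses of (CR3) for an arbitrary proof-term variable $x$; I would not need to unfold the clauses of the interpretation.

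First I would observe that $x$ is not an introduction: inspecting the grammar of Figure~\ref{fig:proofterms}, the introductions are precisely $a.\star$, $\lambda y^B.t$, $t\otimes u$, $\langle\rangle$, $\pair{t}{u}$, $\inl(t)$, $\inr(t)$, $\oc t$, and $\Lambda X.t$, and a variable matches none of these patterns. Second, since no reduction rule in Figure~\ref{fig:figureductionrules} has a bare variable as its left-hand side, we have $Red(x)=\varnothing$, so the inclusion $Red(x)\subseteq\llbracket A\rrbracket_\rho$ holds vacuously. Applying (CR3) then directly yields $x\in\llbracket A\rrbracket_\rho$, which is the conclusion of the lemma.

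There is essentially no obstacle here, and in particular no induction on $A$ is needed: the argument is uniform in $A$ and $\rho$, with all the type-specific content absorbed into Lemma~\ref{lem:typeinterpretationsarerc}. The only care required is to distinguish the syntactic notion of \emph{introduction} used in (CR3) from the reduction relation, which is done by direct inspection of Figures~\ref{fig:proofterms} and~\ref{fig:figureductionrules}.
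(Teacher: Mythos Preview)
Your proposal is correct and follows exactly the paper's own proof: invoke Lemma~\ref{lem:typeinterpretationsarerc} to obtain $\llbracket A\rrbracket_\rho\in\mathcal R$, note that a variable is not an introduction and has empty $Red$, and conclude by (CR3). The only difference is that you spell out the two hypotheses of (CR3) in more detail than the paper does.
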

\begin{proof}
  By Lemma~\ref{lem:typeinterpretationsarerc}, $\llbracket A \rrbracket_\rho \in \mathcal R$. Since $Red(x)$ is empty, by CR3, $x \in \llbracket A \rrbracket_\rho$.\qed
\end{proof}

\begin{restatable}[Adequacy]{theorem}{adequacy}
  \label{thm:adequacy}
  If $\Xi;\Gamma\vdash t:A$, then for any $\rho$ and $\sigma$ such that for each $x^B\in\Xi\cup\Gamma$, $\sigma(x)\in\llbracket B\rrbracket_\rho$, we have
  $\sigma t \in \llbracket A \rrbracket_{\rho}$.
\end{restatable}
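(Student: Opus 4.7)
\medskip

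\noindent\textbf{Proof plan.} I will proceed by induction on the derivation of $\Xi;\Gamma \vdash t:A$, following the Girard--Tait reducibility method. The two axiom rules are immediate from the hypothesis on $\sigma$ (using Lemma~\ref{lem:Var} to populate the unused side). Throughout, I freely use Lemma~\ref{lem:typeinterpretationsarerc}, that $\sigma$ commutes with term constructors modulo $\alpha$-renaming bound variables away from the domain of $\sigma$, and the four closure properties CR1--CR4.

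\medskip

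For each \emph{introduction rule} I will show that $\sigma t$ lies in the corresponding $\hat{(-)}$-set by (i) observing from the IH that its immediate subterms are in the relevant candidates, hence in $\SN$, so $\sigma t \in \SN$ by inspection; and (ii) checking the shape clause: e.g.\ for $\multimap_i$, if $\lambda x^A.\sigma t \lras \lambda x^A.t'$ then $\sigma t \lras t'$, and for any $v \in \llbracket A \rrbracket_\rho$ the IH applied with the extended substitution $\sigma[x\mapsto v]$ yields $(v/x)\sigma t \in \llbracket B \rrbracket_\rho$, which transfers to $(v/x)t'$ by CR2. The cases $\otimes_i$, $\top_i$, $\with_i$, $\oplus_{i\cdot}$, $\oc_i$, $\one_i$ are analogous. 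For each \emph{elimination rule}, the conclusion is neutral, so by CR3 it suffices to prove $\sigma t \in \SN$ and that every one-step reduct lies in the target candidate. Reductions inside subterms are handled by CR2 applied to the IH; principal redexes fire only when the eliminated subterm has the correct introduction shape, and the defining clause of the corresponding $\hat{(-)}$ operator supplies precisely the reducibility information required for the contractum. For $\forall_e$ I invoke a substitution lemma for type interpretations, $\llbracket (B/X)C \rrbracket_\rho = \llbracket C \rrbracket_{\rho[X \mapsto \llbracket B \rrbracket_\rho]}$, proved separately by induction on $C$, to match the quantifier elimination.

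\medskip

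The \emph{interstitial} cases sum and prod$(a)$ are the main technical difficulty, because the commutation rules of Figure~\ref{fig:figureductionrules} move sums and products through introductions and create new sums/products in their arguments. I will treat them by an auxiliary induction; for sum, on the pair $(|A|,\mu(\sigma t)+\mu(\sigma u))$ ordered lexicographically, where $|A|$ is the type size and $\mu$ is the (finite, by CR1) longest reduction length. Since $\sigma t \plus \sigma u$ is not an introduction, by CR3 it suffices to check all one-step reducts: internal reductions use CR2 and the inner IH at the same type; the ultra-reductions $\sigma t \plus \sigma u \lra \sigma t,\sigma u$ land in the candidate by the outer IH; a commutation such as $\pair{t_1}{t_2}\plus\pair{u_1}{u_2}\lra\pair{t_1\plus u_1}{t_2\plus u_2}$ reduces the type strictly, and the required memberships $t_i \plus u_i \in \llbracket A_i\rrbracket_\rho$ follow from the inner IH after extracting $t_i,u_i \in \llbracket A_i\rrbracket_\rho$ from the definition of $\hatand$; the zero-ary commutation $a.\star\plus b.\star\lra(a+b).\star$ lands in $\SN = \llbracket \one\rrbracket_\rho$ directly. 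The prod$(a)$ case is entirely parallel. For $\forall_i$, to verify that a reduct $\Lambda X.t'$ of $\sigma(\Lambda X.t)$ satisfies the defining clause, I apply the outer IH with the extended valuation $\rho[X\mapsto E]$ for an arbitrary $E\in\mathcal R$; the side-condition $X\notin\FV(\Xi,\Gamma)$ ensures $\sigma$ remains well-typed under this extension, and CR2/CR4 bridge the remaining reduction and the eventual substitution $(B/X)t'$.

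\medskip

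The hardest part is the interstitial case: one must match each commutation rule against the chosen lexicographic measure and, for every connective, read off from the $\hat{(-)}$-operator definitions enough reducibility information about the sub-components of \emph{both} operands to invoke the inner IH at the strictly smaller type; the other cases are essentially the standard adaptation of Girard's argument to Church-style polymorphism and to a linear/non-linear context.
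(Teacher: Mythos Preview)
Your plan is correct and follows the same Girard--Tait strategy as the paper. The paper organises things slightly differently: it factors out one ``adequacy lemma'' per proof-term constructor (Lemmas~\ref{lem:sum}--\ref{lem:typeapplication}) and then the main induction merely cites them; in particular, for $\plus$ and $\bullet$ the paper proceeds by structural induction on the proposition $A$ and, for each connective, directly checks the shape clause of $\llbracket A\rrbracket_\rho$ against the possible reduction sequences to an introduction, after first establishing $\SN$ via separate ``normalisation of a sum/product'' lemmas. Your lexicographic induction on $(|A|,|t_1|+|t_2|)$ together with a uniform appeal to CR3 achieves the same thing and is equally valid; the paper's decomposition is marginally cleaner because the non-variable type cases need only the outer induction on $A$.

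One omission worth flagging: in your treatment of the elimination rules you only mention internal reductions and principal redexes, but for $\elimtens$ and $\elimplus$ there are also the root commutations
\[
\elimtens(t'\plus t'', x^Ay^B.v)\lra\elimtens(t',\dots)\plus\elimtens(t'',\dots),\qquad
\elimtens(a\bullet t', x^Ay^B.v)\lra a\bullet\elimtens(t',\dots),
\]
and similarly for $\elimplus$. These are neither internal nor principal. The paper handles them (Lemmas~\ref{lem:elimtens}, \ref{lem:elimplus}) by observing that ultra-reduction gives $t'\plus t''\lra t'$ and $t'\plus t''\lra t''$, so by CR2 each summand is still in $\llbracket A\otimes B\rrbracket_\rho$; the inner induction (on $|t_1|+|t_2|$) then applies to each $\elimtens(t',\dots)$, and one concludes with the already-established adequacy of $\plus$ (resp.\ $\bullet$) at the target type $C$. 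This is the place where ultra-reduction is truly indispensable, and it forces the adequacy of $\plus$/$\bullet$ to be available \emph{before} the elimination cases, so your inlined version needs to respect that dependency.
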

  \begin{proof}
  By induction on $t$.
If $t$ is a variable, then, by the definition of $\sigma$, $\sigma t \in
\llbracket A \rrbracket_{\rho}$. For the other proof-term constructors, we use the
adequacy lemmas provided in Appendix~\ref{proof:Adequacy}, for example
%
if
  $t = u~B$, where $u$ is a proof-term of $\forall X.C$, then, by
    induction hypothesis, $\sigma u \in \llbracket \forall X.C
    \rrbracket_{\rho}$. Hence, by Lemma~\ref{lem:typeapplication}, $\sigma
    u~B\in \llbracket (B/X)C \rrbracket_{\rho}$, that is $\sigma t \in
    \llbracket A \rrbracket_{\rho}$. \qed
\end{proof}

\begin{corollary}[Strong normalisation]\label{cor:ST}
  If $\Xi;\Gamma\vdash t:A$, then,
  $t\in\SN$.
\end{corollary}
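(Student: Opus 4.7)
The plan is to derive strong normalisation directly from the Adequacy theorem by instantiating its hypothesis with a trivial substitution and an arbitrary valuation. Concretely, I would choose any fixed valuation $\rho$, for instance the one assigning $\SN$ to every propositional variable (which is a reducibility candidate, hence a legitimate value for $\rho$), and take $\sigma$ to be the identity substitution $\sigma(x)=x$ on proof-term variables.

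The first step is to check that this choice of $\sigma$ satisfies the hypothesis of Theorem~\ref{thm:adequacy}, namely that for each $x^B\in\Xi\cup\Gamma$, $\sigma(x)=x\in\llbracket B\rrbracket_\rho$. This is exactly the content of Lemma~\ref{lem:Var}. Applying Theorem~\ref{thm:adequacy} then yields $\sigma t=t\in\llbracket A\rrbracket_\rho$.

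The second step is to invoke Lemma~\ref{lem:typeinterpretationsarerc}, which gives $\llbracket A\rrbracket_\rho\in\mathcal{R}$, and then apply clause (CR1) of the definition of reducibility candidates: $\llbracket A\rrbracket_\rho\subseteq\SN$. Combining these, $t\in\SN$.

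There is no real obstacle here; the corollary is the standard conclusion of the reducibility method, and all the hard work, including the treatment of the interstitial rules $\plus$ and $\bullet$ via ultra-reduction and the handling of the second-order quantifier via the clause for $\llbracket\forall X.A\rrbracket_\rho$, has already been discharged in Theorem~\ref{thm:adequacy} and Lemmas~\ref{lem:typeinterpretationsarerc} and~\ref{lem:Var}. The only minor point to note is that strong normalisation of the ultra-reduction system, which we have actually established, implies strong normalisation of the reduction system of Figure~\ref{fig:figureductionrules}, since the latter is a subsystem of the former.
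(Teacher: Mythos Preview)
Your proposal is correct and follows essentially the same approach as the paper: instantiate Adequacy with the identity substitution (justified by Lemma~\ref{lem:Var}), then conclude via Lemma~\ref{lem:typeinterpretationsarerc} and (CR1). The paper's proof is simply a terser version of exactly this argument.
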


\begin{proof}
  By Lemma~\ref{lem:Var}, for each $x^B\in\Xi\cup\Gamma$, $x^{B}\in\llbracket B \rrbracket_{\rho}$.  Then, by Theorem~\ref{thm:adequacy}, $t = \mathsf{id}(t) \in \llbracket A \rrbracket_{\rho}$. Hence, by Lemma~\ref{lem:typeinterpretationsarerc}, $t\in\SN$. \qed
\end{proof}

\subsection{Introduction property}\label{sec:introductionthm}

\begin{restatable}[Introduction]{theorem}{introductions}
  \label{thm:introductions}
  Let $t$ be a closed irreducible proof-term of $A$.
  \begin{itemize}
    \item The proposition $A$ is not $X$.

    \item If $A$ is $\one$, then $t$ has the form $a.\star$.

    \item If $A$ has the form $B \multimap C$, then $t$ has the form
      $\lambda x^B.u$.

    \item If $A$ has the form $B \otimes C$, then $t$ has the form $u \otimes v$,
      $u \plus v$, or $a \bullet u$.

    \item If $A$ is $\top$, then $t$ is $\langle \rangle$.

    \item The proposition $A$ is not $\zero$.

    \item If $A$ has the form $B \with C$, then $t$ has the form
      $\pair{u}{v}$.

    \item If $A$ has the form $B \oplus C$, then $t$ has the form $\inl(u)$, $\inr(u)$, $u \plus v$, or $a \bullet u$.
    \item If $A$ has the form $\oc B$, then $t$ has the form $\oc u$.
    \item If $A$ has the form $\forall X.B$, then $t$ has the form $\Lambda \abstr{X}u$.
  \end{itemize}
\end{restatable}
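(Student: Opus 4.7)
The plan is to proceed by induction on the structure of $t$, using the typing derivation and the IH to rule out forms that would create a redex, hence contradict irreducibility.

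First I would dispatch the easy cases. Since $t$ is closed, a variable is impossible. If $t$ is an introduction form for the connective of $A$, the desired conclusion is immediate. So the work lies in (a) the elimination forms, and (b) the two interstitial constructors $t_1 \plus t_2$ and $a \bullet t_1$.

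For each elimination form, I would apply the IH to the principal subterm (which is closed, irreducible, and has the appropriate type) and show that the resulting shape forms the head of a reduction rule from Figure~\ref{fig:figureductionrules}, contradicting irreducibility. Concretely: $\elimone(u,v)$ forces $u = a.\star$; an application $u\,v$ forces $u = \lambda x^A.u'$; $\elimtens(u, x^A y^B.v)$ forces $u$ to be $u_1\otimes u_2$, $u_1\plus u_2$, or $a\bullet u_1$, all of which are redex heads (the last two via the commutation rules pulling $\plus$ and $\bullet$ out of $\elimtens$); $\elimzero(u)$ is ruled out because the IH will have already shown $A = \zero$ is uninhabited by closed irreducible terms; $\elimwith^i$, $\elimplus$, $\elimbang$, and $t_1\,A$ are handled analogously, using the commutation rules that push $\plus$ and $\bullet$ through $\elimplus$. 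Note that this is exactly where I rely on the earlier observation that $\zero$ has no closed irreducible inhabitant, so the induction has to be set up to produce that clause simultaneously.

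For $t = t_1 \plus t_2$ of type $A$, both $t_i$ are closed irreducible of type $A$, so the IH pins down their shape. I would then go through the connectives: for $A \in \{\one, B\multimap C, \top, B\with C, \oc B, \forall X.B\}$ the two subterms are both introductions that match a commutation rule (e.g.\ $a.\star\plus b.\star$, $(\lambda x.u)\plus(\lambda x.u')$, $\pair{u_1}{u_2}\plus\pair{v_1}{v_2}$, $\oc u\plus \oc v$, $(\Lambda X.u)\plus(\Lambda X.v)$, $\langle\rangle\plus\langle\rangle$), contradicting irreducibility; for $A = \zero$ the IH already gives a contradiction; and for $A \in \{B\otimes C, B\oplus C\}$ the form $t_1 \plus t_2$ is precisely one of the alternatives allowed by the statement, so no reduction is forced. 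The case $t = a\bullet t_1$ is symmetric, using the prod($a$) commutation rules instead.

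The only mildly delicate step is the joint induction needed for the $\zero$ clause (``$A$ is not $\zero$''): to rule out $\elimzero(u)$ one needs to know that $u$ of type $\zero$ cannot exist as a closed irreducible proof-term, which is exactly the statement being proved. I would handle this by stating the whole theorem as a single simultaneous induction on $t$, so that when analysing $\elimzero(u)$ the IH on the smaller proof-term $u$ already gives the impossibility. Everything else is a direct case check against Figure~\ref{fig:figureductionrules}.
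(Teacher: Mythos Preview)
Your proposal is correct and follows essentially the same approach as the paper's proof: induction on $t$, ruling out variables by closedness, ruling out each elimination form by applying the induction hypothesis to the (closed, irreducible) principal subterm to exhibit a redex, and then handling $\plus$ and $\bullet$ by a case analysis on $A$ using the commutation rules. Your remark about the $\zero$ clause needing to be proved simultaneously is exactly how the paper sets things up; the only case you left implicit is $A=X$ in the $\plus$/$\bullet$ analysis, which is dispatched the same way (the induction hypothesis on the subterm already says no closed irreducible term has type $X$).
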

\begin{proof}
  By induction on $t$. The proof is given in
Appendix~\ref{proof:introductionthm}.
\qed
\end{proof}

\section{Encodings}\label{sec:secvectorsmatrices}
In this section, we present the encodings for vectors
(Section~\ref{sec:secvectors}) and matrices (Section~\ref{sec:secmatrices}),
which were initially introduced in~\cite{DiazcaroDowekFSCD22,DiazcaroDowek2023}
and are replicated here for self-containment.  Additionally, we provide one
example that uses polymorphism for iteration as an illustration
(Section~\ref{sec:ExPoly}).

Since $\mathcal S$ is a semiring, we work with semimodules.
However, to help intuition, the reader may think of $\mathcal S$ as a field,
obtaining a vector space instead.


\subsection{Vectors}
\label{sec:secvectors}

As there is one rule $\one_i$ for each scalar $a$, there is one closed
irreducible proof-term $a.\star$ for each scalar $a$.  Thus, the closed irreducible
proof-terms $a.\star$ of $\one$ are in one-to-one correspondence with the elements
of ${\mathcal S}$.  Therefore, the proof-terms $\pair{a.\star}{b.\star}$ of $\one
\with \one$ are in one-to-one correspondence with the elements of ${\mathcal S}^2$, the
proof-terms $\pair{\pair{a.\star}{b.\star}}{c.\star}$ of $(\one \with \one) \with
\one$, and also the proof-terms $\pair{a.\star}{\pair{b.\star}{c.\star}}$ of $\one
\with (\one \with \one)$, are in one-to-one correspondence with the elements
of ${\mathcal S}^3$, etc.

\begin{definition}[The set ${\mathcal V}$]
  The set ${\mathcal V}$ is inductively defined as follows: $\one \in
  {\mathcal V}$, and if $A$ and $B$ are in ${\mathcal V}$, then so is $A
  \with B$.
\end{definition}

We now show that if $A \in {\mathcal V}$, then the set of closed
irreducible proof-terms of $A$ has a structure of $\mathcal S$-semimodule.

\begin{definition}[Zero vector]
  If $A \in {\mathcal V}$, we define the proof-term $0_A$ of $A$ by induction
  on $A$.  If $A = \one$, then $0_A = 0.\star$.  If $A = A_1 \with A_2$,
  then $0_A = \pair{0_{A_1}}{0_{A_2}}$.
  \end{definition}
  

\begin{lemma}[$\mathcal S$-semimodule structure~\citeintitle{Lemma 3.4}{DiazcaroDowekFSCD22}] \label{lem:vecstructure}
  If $A \in {\mathcal V}$ and $t$, $t_1$, $t_2$, and $t_3$ are closed proof-terms of
  $A$, then
  \begin{multicols}{2}
    \begin{enumerate}
      \item $(t_1 \plus t_2) \plus t_3 \equiv t_1 \plus (t_2 \plus t_3)$
      \item $t_1 \plus t_2 \equiv t_2 \plus t_1$
      \item $t \plus 0_A \equiv t$
      \item $a \bullet b \bullet t \equiv (a \times b) \bullet t$
      \item $1 \bullet t \equiv t$
      \item $a \bullet (t_1 \plus t_2) \equiv a \bullet t_1 \plus a \bullet t_2$
      \item $(a + b) \bullet t \equiv a \bullet t \plus b \bullet t$
	\qed
    \end{enumerate}
  \end{multicols}
\end{lemma}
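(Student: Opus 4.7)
The plan is to proceed by induction on the structure of $A \in \mathcal{V}$, using the three main results already established: Confluence (Theorem~\ref{thm:Confluence}), Strong Normalisation (Corollary~\ref{cor:ST}), and the Introduction Property (Theorem~\ref{thm:introductions}). Together these guarantee that every closed proof-term of $A$ has a unique normal form, so each equivalence $s \equiv s'$ in the statement can be verified by showing that the normal forms of $s$ and $s'$ coincide. The key reduction rules to exploit are the interstitial commutation rules for $\pair{\cdot}{\cdot}$, namely $\pair{t}{u}\plus\pair{v}{w}\lra\pair{t\plus v}{u\plus w}$ and $a\bullet\pair{t}{u}\lra\pair{a\bullet t}{a\bullet u}$, together with the zero-ary rules ${a.\star}\plus b.\star\lra(a+b).\star$ and $a\bullet b.\star\lra(a\times b).\star$.

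For the base case $A=\one$, the Introduction Property tells us that every closed irreducible proof-term has the form $a.\star$ for some scalar $a$. Hence each of the seven equivalences reduces, after normalisation, to an identity in the semiring $\mathcal{S}$: associativity and commutativity of $\plus$ collapse to the corresponding laws of $+$ in $\mathcal{S}$; the identity law uses $0_\one=0.\star$ and $a+0=a$; the two scalar laws follow from the associativity of $\times$ and from $1\times a=a$; and the two distributivity laws follow from the distributivity of $\times$ over $+$ in $\mathcal{S}$. Each of these is a one- or two-step reduction computation that I would not spell out in detail.

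For the inductive step $A=A_1\with A_2$, the Introduction Property ensures that closed irreducible proof-terms of $A$ have the form $\pair{u}{v}$ with $u,v$ closed irreducible of types $A_1, A_2$ respectively (since a reduction inside $\pair{\cdot}{\cdot}$ would contradict irreducibility). The commutation rules then guarantee that sums and scalar products performed at type $A$ decompose \emph{componentwise}: $\pair{u_1}{v_1}\plus\pair{u_2}{v_2}\lras\pair{u_1\plus u_2}{v_1\plus v_2}$ and $a\bullet\pair{u}{v}\lras\pair{a\bullet u}{a\bullet v}$, and by definition $0_A=\pair{0_{A_1}}{0_{A_2}}$. Consequently, each of the seven equivalences, once normalised, reduces to the conjunction of the same equivalences at types $A_1$ and $A_2$, which hold by the induction hypothesis. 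Subject Reduction (Theorem~\ref{thm:SR}) ensures that all intermediate proof-terms remain well-typed, so the induction hypothesis is applicable to the components.

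The only subtle point, and the step I would write out most carefully, is the verification that normalisation at a with-type really does push $\plus$ and $a\bullet$ all the way down to the leaves: formally, that a closed normal proof-term of $A\in\mathcal{V}$ is a nested pair of scalars $a.\star$ matching the tree structure of $A$. This is a small auxiliary induction on $A$ using the Introduction Property at each level, and it is what legitimises reducing every equivalence to a componentwise identity handled by the base case. No other obstacle is expected, since the construction of $0_A$ and the form of the commutation rules are precisely tailored to this decomposition.
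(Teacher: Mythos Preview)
Your approach is correct and is the natural argument. Note, however, that the paper does not supply its own proof of this lemma: it is stated as a citation to \cite[Lemma~3.4]{DiazcaroDowekFSCD22} and closed immediately with a \qed. Your induction on $A\in\mathcal V$, collapsing each equivalence to a semiring identity at the $\one$-leaves via the commutation rules for $\pair{\cdot}{\cdot}$ and the zero-ary rules for $\star$, is exactly the expected reconstruction and goes through without difficulty.
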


\begin{definition}[Dimension of a proposition in ${\mathcal V}$]
To each proposition $A \in {\mathcal V}$, we associate a positive
natural number $d(A)$, which is the number of occurrences of the
symbol $\one$ in $A$: $d(\one) = 1$ and $d(B
\with C) = d(B) + d(C)$.
\end{definition}

If $A \in {\mathcal V}$ and $d(A) = n$, then the closed irreducible proof-terms
of $A$ and the vectors of ${\mathcal S}^n$ are in one-to-one
correspondence.

\begin{definition}[One-to-one correspondence]
\label{def:onetoone}
Let $A \in {\mathcal V}$ with $d(A) = n$.  To each closed irreducible
proof-term $t$ of $A$, we associate a vector $\underline{t}$ of ${\mathcal
  S}^n$ as follows:

  If $A = \one$, then $t = a.\star$. We let $\underline{t} =
  \left(\begin{smallmatrix} a \end{smallmatrix}\right)$.

  If $A = A_1 \with A_2$, then $t = \pair{u}{v}$.  We let
  $\underline{t}$ be the vector with two blocks $\underline{u}$ and
  $\underline{v}$: $\underline{t} = \left(\begin{smallmatrix}
    \underline{u}\\\underline{v} \end{smallmatrix}\right)$.

To each vector ${\bf u}$ of ${\mathcal S}^n$, we associate a closed irreducible proof-term $\overline{\bf u}^A$ of $A$ as follows:

If $n = 1$, then ${\bf u} = \left(\begin{smallmatrix}
  a \end{smallmatrix}\right)$. We let $\overline{\bf u}^A = a.\star$.

If $n > 1$, then $A = A_1 \with A_2$, let $n_1$ and $n_2$ be
  the dimensions of $A_1$ and $A_2$.  Let ${\bf u}_1$ and ${\bf u}_2$
  be the two blocks of ${\bf u}$ of $n_1$ and $n_2$ lines, so ${\bf u}
  = \left(\begin{smallmatrix} {\bf u}_1\\ {\bf
      u}_2\end{smallmatrix}\right)$.  We let $\overline{\bf u}^A =
    \pair{\overline{{\bf u}_1}^{A_1}}{\overline{{\bf u}_2}^{A_2}}$.
\end{definition}

We extend the definition of $\underline{t}$ to any closed proof-term of
$A$, $\underline{t}$ is by definition $\underline{t'}$ where $t'$ is
the irreducible form of $t$.

The next theorem shows that the symbol $\plus$ expresses the sum of
vectors and the symbol $\bullet$, the product of a vector by a scalar.

\begin{theorem}[Sum and scalar product of vectors~\citeintitle{Lemmas 3.7 and 3.8}{DiazcaroDowekFSCD22}]
\label{thm:parallelsum}
Let $A \in {\mathcal V}$, $u$ and $v$ two closed proof-terms of $A$, and $a\in\mathcal S$ a scalar.
Then, $\underline{u \plus v} = \underline{u} + \underline{v}$ and
$\underline{a \bullet u} = a \underline{u}$.
\qed
\end{theorem}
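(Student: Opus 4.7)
The plan is to proceed by induction on the structure of $A \in \mathcal{V}$, leveraging the Introduction Property (Theorem~\ref{thm:introductions}) to pin down the shape of closed irreducible proof-terms, together with Subject Reduction (Theorem~\ref{thm:SR}), Confluence (Theorem~\ref{thm:Confluence}), and Strong Normalisation (Corollary~\ref{cor:ST}) to justify that reduction produces a well-defined unique normal form of the appropriate type. Note that the extended definition of $\underline{t}$ for arbitrary closed proof-terms implicitly relies on these three properties: the normal form $t'$ exists by strong normalisation, is unique by confluence, and shares the type of $t$ by subject reduction, so the whole argument boils down to computing the normal form of $u \plus v$ and $a \bullet u$ and applying Definition~\ref{def:onetoone}.

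For the base case $A = \one$, let $u$ and $v$ be closed proof-terms of $\one$. By the Introduction Property their normal forms are $a.\star$ and $b.\star$ for some $a, b \in \mathcal S$, so $\underline{u} = (a)$ and $\underline{v} = (b)$. Then $u \plus v \lras a.\star \plus b.\star \lra (a+b).\star$ by the zero-ary commutation rule, hence $\underline{u \plus v} = (a+b) = \underline{u} + \underline{v}$. Similarly, $a \bullet u \lras a \bullet b.\star \lra (a \times b).\star$, so $\underline{a \bullet u} = (a \times b) = a \underline{u}$.

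For the inductive case $A = A_1 \with A_2$, let $u$ and $v$ be closed proof-terms of $A$. By the Introduction Property their normal forms are $\pair{u_1}{u_2}$ and $\pair{v_1}{v_2}$, with each $u_i, v_i$ a closed proof-term of $A_i$ by subject reduction. The commutation rule $\pair{t}{u} \plus \pair{v}{w} \lra \pair{t \plus v}{u \plus w}$ gives $u \plus v \lras \pair{u_1 \plus v_1}{u_2 \plus v_2}$. Applying the induction hypothesis on $A_1$ and $A_2$, the normal forms of the components encode the vectors $\underline{u_i} + \underline{v_i}$. By the block description in Definition~\ref{def:onetoone}, stacking these gives exactly $\underline{u} + \underline{v}$. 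The scalar case is completely analogous, using $a \bullet \pair{t}{u} \lra \pair{a \bullet t}{a \bullet u}$ and the fact that scalar multiplication distributes over the blocks of a vector.

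The main potential obstacle is not a deep argument but a bookkeeping one: one must verify that at each step the proof-terms involved really do reduce to the claimed pair or scalar form, which requires combining the Introduction Property with subject reduction to guarantee that the components $u_i, v_i$ inherit the expected types in $\mathcal V$. Once this observation is made, the induction goes through mechanically because each structural case of $\mathcal V$ matches a single commutation rule, and Definition~\ref{def:onetoone} is designed so that this rule exactly mirrors the corresponding coordinatewise operation on $\mathcal S^n$.
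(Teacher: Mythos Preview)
Your proof is correct and follows the natural structural induction on $A \in \mathcal V$; the paper itself does not prove this result but simply cites \cite[Lemmas~3.7 and~3.8]{DiazcaroDowekFSCD22}, where exactly this argument is carried out. One minor notational slip: in the base case you name the normal form of $u$ as $a.\star$ and then reuse the same letter $a$ for the external scalar in $a \bullet u$; rename one of them to avoid the clash.
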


\subsection{Matrices}
\label{sec:secmatrices}

We now want to prove that if $A, B \in {\mathcal V}$ with $d(A) = m$
and $d(B) = n$, and $F$ is a linear function from ${\mathcal S}^m$ to
${\mathcal S}^n$, then there exists a closed proof-term $f$ of $A
\multimap B$ such that, for all vectors ${\bf u} \in {\mathcal S}^m$,
$\underline{f~\overline{\bf u}^A} = F({\bf u})$.  This can
equivalently be formulated as the fact that if $M$ is a matrix with
$m$ columns and $n$ lines, then there exists a closed proof-term $f$ of $A
\multimap B$ such that for all vectors ${\bf u} \in {\mathcal S}^m$,
$\underline{f~\overline{\bf u}^A} = M {\bf u}$.

A similar theorem has been proved also in~\cite{odot} for a non-linear
calculus. 

\begin{theorem}[Matrices~\citeintitle{Theorem 4.1}{DiazcaroDowekFSCD22}]
  \label{thm:matrices}
  Let $A, B \in {\mathcal V}$ with $d(A) = m$ and $d(B) = n$ and let $M$
  be a matrix with $m$ columns and $n$ lines, then there exists a closed
  proof-term $t$ of $A \multimap B$ such that, for all vectors ${\bf u} \in
  {\mathcal S}^m$, $\underline{t~\overline{\bf u}^A} = M {\bf u}$.
  \qed
\end{theorem}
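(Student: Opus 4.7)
The plan is to induct on the structure of $A \in {\mathcal V}$, building the witness $t$ explicitly from the columns of $M$ and appealing to Theorem~\ref{thm:parallelsum} to interpret the resulting reductions as matrix-vector arithmetic.

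In the base case $A = \one$, we have $m = 1$, so $M$ is determined by its single column $\mathbf{v} \in {\mathcal S}^n$. Set $t := \lambda x^\one.\elimone(x,\overline{\mathbf{v}}^B)$. For any scalar $a$ we have $\overline{(a)}^\one = a.\star$, and hence $t\,a.\star \lras a \bullet \overline{\mathbf{v}}^B$, which by Theorem~\ref{thm:parallelsum} denotes $a\mathbf{v} = M(a)$.

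In the inductive step $A = A_1 \with A_2$ with $d(A_i) = m_i$ and $m_1 + m_2 = m$, block-decompose $M$ as $(M_1 \mid M_2)$, where $M_i$ consists of the columns associated with $A_i$. By the induction hypothesis there are closed proof-terms $t_1 : A_1 \multimap B$ and $t_2 : A_2 \multimap B$ realising $M_1$ and $M_2$ respectively. Define
\[
  t := \lambda x^A.\,\elimwith^1(x, y_1^{A_1}.\,t_1\,y_1) \plus \elimwith^2(x, y_2^{A_2}.\,t_2\,y_2).
\]
For $\mathbf{u}$ with blocks $\mathbf{u}_1$ of dimension $m_1$ and $\mathbf{u}_2$ of dimension $m_2$, Definition~\ref{def:onetoone} gives $\overline{\mathbf{u}}^A = \pair{\overline{\mathbf{u}_1}^{A_1}}{\overline{\mathbf{u}_2}^{A_2}}$, whence $t\,\overline{\mathbf{u}}^A$ reduces to $t_1\,\overline{\mathbf{u}_1}^{A_1} \plus t_2\,\overline{\mathbf{u}_2}^{A_2}$; by the induction hypothesis and Theorem~\ref{thm:parallelsum} this denotes $M_1\mathbf{u}_1 + M_2\mathbf{u}_2 = M\mathbf{u}$.

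The principal subtlety is that linear $\with$-elimination by itself cannot access both components of a $\with$-typed variable. What rescues the construction is the context-sharing character of the sum rule, whose two premises share the same linear context, thus permitting the single variable $x$ to appear legitimately in both summands. Once this point is recognised, the remainder is a routine check of typability and one-step reductions, and no delicate normal-form analysis is needed.
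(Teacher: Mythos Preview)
Your proof is correct and matches the construction the paper has in mind: the paper does not spell out a proof here (it defers to \cite[Theorem~4.1]{DiazcaroDowekFSCD22}), but the example immediately following the theorem exhibits precisely your term for the $2\times 2$ case, namely $\lambda x^{\one\with\one}.\elimwith^1(x,y^\one.\elimone(y,\pair{a.\star}{b.\star}))\plus\elimwith^2(x,z^\one.\elimone(z,\pair{c.\star}{d.\star}))$, which is your inductive-step term with the base case inlined. Your observation about the sum rule sharing the linear context is exactly the point that makes this construction type.
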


\begin{example}[Matrices with two columns and two lines]
  The matrix
  $\left(\begin{smallmatrix} a & c\\b & d \end{smallmatrix}\right)$
  is expressed as the proof-term
    $t = \lambda x^{\one \with \one}. \elimwith^1(x,y^\one.
    \elimone(y,\pair{a.\star}{b.\star})) \plus \elimwith^2(x,z^\one.
    \elimone(z,\pair{c.\star}{d.\star}))$.
  Then,
    $t~\pair{e.\star}{f.\star} \lras
    \pair{(a \times e + c \times f).\star}{(b \times e + d \times f).\star}$.
\end{example}

\subsection{Matrix iterator}\label{sec:ExPoly}
The polymorphic extension included in the $\OC$-calculus allows us to encode, for example,
natural numbers in a usual way~\cite[Chapter 5]{Girard87}. 
\[
  \begin{array}{rl@{\qquad}rl}
    \mathsf{Nat} &= \forall X.X \multimap \oc (X \multimap X) \multimap X & 
    \mathsf{zero} &= \Lambda X.\lambda x^X.\lambda f^{\oc (X \multimap X)}.\elimbang(f, f'^{X \multimap X}.x)\\
    \multicolumn{4}{c}{\mathsf{succ} = \lambda n^\mathsf{Nat}.\Lambda X.\lambda x^X.\lambda f^{\oc (X \multimap X)}.\elimbang(f, f'^{X \multimap X}.f' (n~X~x~(\oc f')))}
  \end{array}
\]
  We can express the application $n$ times of a square matrix over a vector as
  follows, where $A\in\mathcal V$ with $d(A)=m$.

  \[
    \mathsf{Miter}= \lambda n^\mathsf{Nat}.\lambda m^{\oc(A\multimap A)}.\lambda v^A.n~A~v~m
  \]
  Let $M$ be a square matrix with $m$ columns and lines, and
  $t$ be the closed proof-term of $A\multimap A$ representing such a
  matrix.  For any vector ${\bf u}\in\mathcal S^m$ we have
  \(
    \underline{\mathsf{Miter}~\hat{n}~\oc t~\overline{\bf u}^A} = M^n{\bf u}
  \)
  where $\hat n$ is the encoding of $n\in\mathbb N$.

\section{Linearity}
\label{sec:seclinearity}
In this section, we prove the converse to Theorem~\ref{thm:matrices}, that is,
that if $A,B \in {\mathcal V}$, then each closed proof-term $t$ of $A \multimap B$
expresses a linear function.

This result is trivially false when we consider the exponential connective.  For example, the
proof-term $f = \lambda x^{\oc\one}.\elimbang(x, y^{\one}. 2.\star)$ of proposition
$!\one\multimap\one$ represents the constant function $2$, which is not linear.
Indeed,
\[
  f~(\oc (a.\star) \plus \oc (b.\star))
  \lra^*
  2.\star
  \neq
  4.\star
  \llas
  (f~\oc a.\star) \plus (f~\oc b.\star)
\]

Hence, this section refers to the $\OC$-calculus without $\oc$. That is, we remove the propositions $\oc A$ and the
proof-terms $\oc t$ and $\elimbang(t, x^A.u)$ from the syntax (cf.
Section~\ref{sec:secls}), together with their three corresponding reduction
rules (cf. Figure~\ref{fig:figureductionrules}). In addition, we remove the
deduction rules ax, $\oc_i$, and $\oc_e$ (cf.
Figure~\ref{fig:figuretypingrules}). With these changes, all interesting
sequents have the shape $\varnothing;\Gamma\vdash t:A$; thus, we simply write
$\Gamma\vdash t:A$ instead.

Note that the $\forall$ connective is meaningful only when a specific subproof-term is employed more than once. Within the linear deduction system, this can happen in two scenarios:
(1) In the multiplicative case, applying the $\oc$ constructor to the subproof-term becomes necessary for a valid proof-term.
(2) Conversely, in the additive case, the subproof-term can be utilised multiple times without requiring the $\oc$ constructor.
Hence, the polymorphic construction retains significance even in the absence of the exponential connective.

%

This section extends the proof for the ${\mathcal L}^{\mathcal
S}$-calculus~\cite{DiazcaroDowekFSCD22,DiazcaroDowek2023}, adding the
polymorphic cases. For self-containment, we include all the definitions.

\subsection{Observational equivalence}

We want to prove that for any closed proof-term $t$ of $A \multimap B$, if $u_1$ and
$u_2$ are closed proof-terms of $A$, then 
\begin{equation}
  \label{eq:equiv}
  t~(u_1 \plus u_2) \equiv t~u_1 \plus t~u_2 
  \qquad\qquad\textrm{and}\qquad\qquad
  t~(a \bullet u_1) \equiv a \bullet t~u_1
\end{equation}

The property, however, is not true in general. Consider, for example,
\(
  t = \lambda x^\one.\lambda y^{\one\multimap\one}. y~x
\)
and we have 
\[
  t~(1.\star \plus\,2.\star)
  \lras \lambda y^{\one\multimap\one}. y~3.{\star}
  \not\equiv
  \lambda y^{\one\multimap\one}. {(y~1.\star)} \plus {(y~2.\star)}
  \llas
  (t~{1.\star}) \plus (t~{2.\star})
\]

Nevertheless, although the proof-terms $\lambda y^{\one\multimap\one}.
y~3.\star$ and $\lambda y^{\one\multimap\one}. {(y~1.\star)} \plus
{(y~2.\star)}$ are not equivalent, if we put them in the context $\_~\lambda
z^\one. z$, then both proof-terms
$(\lambda y^{\one\multimap\one}.
y~3.\star)~\lambda z^\one. z$ and $(\lambda y^{\one\multimap\one}.
{(y~1.\star)} \plus {(y~2.\star)})~\lambda z^\one. z$ reduce to
$3.\star$.  This leads us to introduce a notion of observational equivalence.

\begin{definition}[Observational equivalence]\label{def:obseq}
  Two proof-terms $t_1$ and $t_2$ of a proposition $B$ are {\em observationally
  equivalent}, $t_1 \sim t_2$, if for all propositions $C$ in ${\mathcal V}$ and
  for all proof-terms $c$ such that $\_^B \vdash c:C$, we have 
  \(
    c\{t_1\} \equiv c\{t_2\}
  \).
\end{definition}

We shall prove (Corollary~\ref{cor:corollary0}) that for all proof-terms $t$ of
proposition $A\multimap B$ and for all closed proof-terms $u_1$ and $u_2$ of $A$, we
have
\(
  t~(u_1 \plus u_2) \sim t~u_1 \plus t~u_2
\)
and
\(
  t~(a\bullet u_1) \sim a \bullet t~u_1
\).
However, a proof of this property by induction on $t$ does not go through and to prove
it, we first prove Theorem~\ref{thm:linearity}, expressing that for all proof-terms
$t$ of $A\multimap B$, with $B \in {\mathcal V}$, and closed proof-terms
$u_1$ and $u_2$ of $A$, we have the property~\eqref{eq:equiv}.

\subsection{Measure of a proof-term}

We define the following non-increasing measure (see Lemma~\ref{lem:mured}), over which we will make the induction to prove
the linearity theorem (Theorem~\ref{thm:linearity}).

\begin{definition}[Measure of a proof-term]
  \label{def:measureofaproof}~
  We define the measure $\mu$ as follows:
  \[
    \begin{array}{rl@{\qquad }rl}
      \mu(x) &= 0 &
      \mu(t \plus u) &= 1 + \max(\mu(t), \mu(u)) \\
      \mu(a \bullet t) &= 1 + \mu(t) &
      \mu(a.\star) &= 1 \\
      \mu(\elimone(t,u)) &= 1 + \mu(t) + \mu(u) &
      \mu(\lambda x^A.t) &= 1 + \mu(t) \\
      \mu(t~u) &= 1 + \mu(t) + \mu(u) &
      \mu(t \otimes u) &= 1 + \mu(t) + \mu(u) \\
      \mu(\elimtens(t,x^A y^B.u)) &= 1 + \mu(t) + \mu(u) &
      \mu(\topintro) &= 1\\
      \mu(\elimzero(t)) &= 1 + \mu(t) &
      \mu(\pair{t}{u}) &= 1 + \max(\mu(t), \mu(u))\\
      \mu(\elimwith^1(t,y^A.u)) &= 1 + \mu(t) + \mu(u) &
      \mu(\elimwith^2(t,y^A.u)) &= 1 + \mu(t) + \mu(u)\\
      \mu(\inl(t)) &= 1 + \mu(t) &
      \mu(\inr(t)) &= 1 + \mu(t)\\
      \multicolumn{4}{c}{\mu(\elimplus(t,y^A.u,z^B.v)) = 1 + \mu(t) + \max(\mu(u), \mu(v))} \\
      \mu(\Lambda X.t) &=1+\mu(t) &
      \mu(t~A) &=1+\mu(t)
    \end{array}
  \]
\end{definition}

\subsection{Elimination contexts}

Any proof-term in the linear fragment of the $\OC$-calculus can be decomposed into a sequence of
elimination rules, forming an elimination context, and a proof-term $u$ that is
either a variable, an introduction, a sum, or a product.

\begin{definition}[Elimination context]
  An elimination context is a proof-term with a single free variable, written
  $\_$, that is a proof-term in the language
  \begin{align*}
    K =& \_
    \mid \elimone(K,u)
    \mid K~u
    \mid \elimtens(K,x^A y^B.v)
    \mid \elimzero(K)\\
    \mid& \elimwith^1(K,x^A.r)
    \mid \elimwith^2(K,x^B.r)
    \mid \elimplus(K,x^A.r,y^B.s)
    \mid K~A
  \end{align*}
  where $u$ is a closed proof-term,
  $\fv(v) = \{x,y\}$,
  $\fv(r) \subseteq \{x\}$, and $\fv(s) \subseteq \{y\}$.
\end{definition}

\begin{restatable}[Decomposition of a proof-term]{lemma}{decomp}
    \label{lem:elim}
    If $t$ is an irreducible proof-term such that $x^C \vdash t:A$, then there
    exist an elimination context $K$, a proof-term $u$, and a proposition $B$,
    such that $\_^B \vdash K:A$, $x^C \vdash u:B$, $u$ is either the
    variable $x$, an introduction, a sum, or a product, and $t = K\{u\}$.
  \end{restatable}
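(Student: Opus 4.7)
The plan is to proceed by structural induction on the proof-term $t$. The base cases correspond exactly to the possibilities allowed for the head proof-term $u$ in the conclusion: if $t$ is the variable $x$, an introduction (one of $a.\star$, $\lambda y^D.s$, $s \otimes s'$, $\topintro$, $\pair{s}{s'}$, $\inl(s)$, $\inr(s)$, or $\Lambda X.s$), a sum $s \plus s'$, or a product $a \bullet s$, then we set $K = \_$, $u = t$, and $B = A$; the equality $t = K\{u\}$ and the typing $\_^A \vdash K:A$ are immediate.

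For the inductive step, $t$ is an elimination of one of the forms $\elimone(s, s')$, $s~s'$, $\elimtens(s, y^D z^E.s')$, $\elimzero(s)$, $\elimwith^i(s, y^D.s')$, $\elimplus(s, y^D.r, z^E.r')$, or $s~A'$. In each such rule the linear context splits into a $\Gamma$ for the principal premise $s$ and a $\Delta$ for the continuations; since our linear context is the singleton $\{x^C\}$, either $x \in \Gamma$ or $x \in \Delta$. I claim the latter is impossible. Indeed, if $x \in \Delta$, then $s$ is closed, irreducible, and of the type dictated by the elimination rule, so by the Introduction Property (Theorem~\ref{thm:introductions}), $s$ must be an introduction, a sum, or a product. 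Each such possibility makes $t$ a redex: $\elimone(a.\star, s') \lra a \bullet s'$; $(\lambda y^D.s'')~s'$ is a $\beta$-redex; each of $s_1 \otimes s_2$, $s_1 \plus s_2$, $a \bullet s_1$ in the principal position of $\elimtens$ triggers a $\otimes$-cut or a commutation rule, and similarly for $\elimwith^i$, $\elimplus$, and type application; and for $\elimzero(s)$ the Introduction Property excludes any closed $s$ of type $\zero$ outright. Each conclusion contradicts the irreducibility of $t$, so $x \in \Gamma$.

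Hence $s$, being a subterm of the irreducible $t$, is itself irreducible and is typed as $x^C \vdash s:B'$ for some $B'$. The induction hypothesis applied to $s$ supplies an elimination context $K'$, a head proof-term $u$, and a proposition $B$ with $s = K'\{u\}$, $\_^B \vdash K':B'$, and $u$ of the prescribed shape. Wrapping $K'$ with the appropriate elimination head (e.g.\ $K = \elimone(K', s')$, $K = K'~s'$, $K = \elimtens(K', y^D z^E.s')$, and so on) yields the desired $K$, which inherits its typing from the original elimination rule. The side-conditions in the definition of elimination context — closedness of the argument in $\elimone(K,u)$ and $K~u$, and the prescribed free-variable sets for the bound-variable continuations — follow from linearity: with $x \in \Gamma$, the continuations carry no free variables from the linear context beyond those bound by the elimination itself, and linear typing forces each such bound variable to appear free. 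The main obstacle is the case-by-case verification that placing $x$ in the continuation forces $t$ to be a redex; this is precisely where the irreducibility of $t$ is used, jointly with the Introduction Property. Once this subcase is dispatched, the rest of the argument is a routine induction.
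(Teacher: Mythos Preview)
Your argument is correct and follows the same approach as the paper: structural induction on $t$, with the trivial base cases $K=\_$ when $t$ is a variable, introduction, sum, or product, and the elimination cases handled by showing that the free variable $x$ must occur in the principal premise (else, by Theorem~\ref{thm:introductions}, that premise would be a closed introduction/sum/product and $t$ would be reducible), then applying the induction hypothesis and wrapping the resulting context with the outer eliminator. The paper's proof in Appendix~\ref{proof:seclinearity} carries out exactly this case analysis; your uniform presentation is a faithful summary of it.
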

  \begin{proof}
    See Appendix~\ref{proof:seclinearity}.
    \qed
  \end{proof}

  \subsection{Linearity}
  
  We now have the tools to prove the linearity theorem.  We first prove
  (Theorem~\ref{thm:linearity}) that for any proof-term $t$ of $B\in\mathcal V$ such
  that $x^A \vdash t:B$, we have the property~\eqref{eq:equiv}.
  Then, Corollary~\ref{cor:corollary0} generalises the proof-term for any $B$ stating
  the observational equivalence. Finally, Corollary~\ref{cor:corollary2} is just a
  reformulation of it, in terms of linear functions.

\begin{restatable}[Linearity]{theorem}{linearity}
  \label{thm:linearity}
  If $A$ is a proposition, $B$ is proposition of ${\mathcal V}$, $t$ is
  a proof-term such that $x^A \vdash t:B$ and $u_1$ and $u_2$ are two closed
  proof-terms of $A$, then
  \(
    t\{u_1 \plus u_2\} \equiv t\{u_1\} \plus t\{u_2\}
  \)
  and
  \(
    t\{a \bullet u_1\} \equiv a \bullet t\{u_1\}
  \).
\end{restatable}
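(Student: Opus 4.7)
The plan is to proceed by strong induction on the measure $\mu(t)$ of Definition~\ref{def:measureofaproof}. Confluence (Theorem~\ref{thm:Confluence}), strong normalisation (Corollary~\ref{cor:ST}), and the fact that $\mu$ is non-increasing under reduction (Lemma~\ref{lem:mured}) allow me to replace $t$ by its normal form without loss of generality, since both sides of each target equivalence are preserved by $\equiv$ and the hypothesis $x^A\vdash t:B$ is preserved by subject reduction (Theorem~\ref{thm:SR}). I then apply the decomposition lemma (Lemma~\ref{lem:elim}) to write $t=K\{v\}$ with $K$ an elimination context and $v$ either the variable $x$, an introduction, a sum, or a product. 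The rest of the proof is a case analysis on $v$ and, when $v=x$, on the outermost constructor of $K$.

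For $v$ an introduction, irreducibility of $t$ forces $K=\_$; since $B\in\mathcal V$ and $x\in\fv(t)$ by linearity of the context (which rules out the closed form $v=a.\star$), the only possibility is $v=\pair{t_1}{t_2}$, and the induction hypothesis on each $t_i$ together with the commutations $\pair{t}{u}\plus\pair{v}{w}\lra\pair{t\plus v}{u\plus w}$ and $a\bullet\pair{t}{u}\lra\pair{a\bullet t}{a\bullet u}$ closes the case. When $v=v_1\plus v_2$ or $v=a\bullet v_1$ and $K=\_$, the induction hypothesis on the operands combined with the associativity-commutativity and distributivity laws of Lemma~\ref{lem:vecstructure} (valid because after substitution the operands are closed proof-terms of $\mathcal V$-type) settles the case.

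The heart of the proof is the case $v=x$ with nontrivial $K$. For each possible head eliminator of $K$, I plan to invoke the induction hypothesis on the immediate sub-elimination-context wrapping $x$ whenever its type is in $\mathcal V$, and then push the resulting sum or product outward using the relevant reduction rule and the scalar laws of Lemma~\ref{lem:vecstructure}. For instance, when $K=\elimone(K',u')$ the type of $K'\{x\}$ is $\one\in\mathcal V$, the induction hypothesis gives $K'\{u_1\plus u_2\}\equiv K'\{u_1\}\plus K'\{u_2\}$, both sides reduce to scalar normal forms, the $\one$-elimination then fires, and item~7 of Lemma~\ref{lem:vecstructure} closes the equivalence on the $\mathcal V$-typed $u'$. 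Analogous arguments handle the heads $\elimwith^i$, $\elimtens$, $\elimplus$, $\elimzero$, and the polymorphic type-application $K'~C$, the last using the commutation $a\bullet\Lambda X.t\lra\Lambda X.a\bullet t$.

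The main obstacle is the sub-case $K=K'~u'$ (linear application) with $x\in\fv(K')$, because the intermediate type $A''\multimap B$ of $K'\{x\}$ lies outside $\mathcal V$, blocking a direct application of the induction hypothesis to $K'\{x\}$. My plan here is to bypass $K'\{x\}$ altogether: the closed proof-term $K'\{u_j\}$ of type $A''\multimap B$ normalises, by the introduction property (Theorem~\ref{thm:introductions}), to $\lambda y^{A''}.r_j$; firing the $\beta$-redex against the closed $u'$ produces $(u'/y)r_j$ of type $B\in\mathcal V$, and the commutations $(\lambda y.t)\plus(\lambda y.u)\lra\lambda y.(t\plus u)$ and $a\bullet\lambda y.t\lra\lambda y.a\bullet t$ together with Lemma~\ref{lem:vecstructure} close the equivalence; I expect to need either a mildly stronger induction hypothesis (for example, a curried variant where $K'\{x\}\,w$ is treated as a $\mathcal V$-typed sub-claim for closed $w$) or a separate auxiliary lemma to make this step fully rigorous. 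The polymorphic sub-case $K=K'~C$ is treated in parallel with the $\Lambda$-commutations. Once Theorem~\ref{thm:linearity} is established, Corollaries~\ref{cor:corollary0} and~\ref{cor:corollary2} follow by embedding $t$ into a $\mathcal V$-typed observation context.
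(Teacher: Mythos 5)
Your setup (reduce $t$ to normal form, induct on $\mu$, decompose via Lemma~\ref{lem:elim}, and split on whether the term at the hole is $x$, a pair, a sum, or a product) is the same as the paper's, and your treatment of the introduction, sum, and product cases is essentially right (though note that when the term at the hole is a sum or a product the context $K$ need not be $\_$, since only $\elimtens$ and $\elimplus$ commute with $\plus$ and $\bullet$). The genuine gap is in the case $v=x$: you peel the \emph{outermost} eliminator of $K$ and try to apply the induction hypothesis to the inner context $K'$, which requires the intermediate type of $K'\{x\}$ to be in $\mathcal V$. That holds for the head $\elimone$ (type $\one$), but essentially never otherwise: for $\elimtens$, $\elimplus$, $\elimzero$ and type application the intermediate type is $C\otimes D$, $C\oplus D$, $\zero$, $\forall X.D$, none of which is in $\mathcal V$, and for $\elimwith^i$ it is $C\with D$ with arbitrary $C,D$. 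So the obstacle you flag for application in fact blocks almost every case, and your "analogous arguments" do not go through. Moreover your workaround for application cannot close the gap: normalising $K'\{u_1\}$ and $K'\{u_2\}$ to $\lambda y.r_1$ and $\lambda y.r_2$ computes the right-hand side, but gives no information about $K'\{u_1\plus u_2\}$; relating it to $\lambda y.r_1\plus\lambda y.r_2$ is precisely linearity at the type $A''\multimap B\notin\mathcal V$, which is \emph{false} for $\equiv$ (the paper's own counterexample $\lambda x^{\one}.\lambda y^{\one\multimap\one}.y~x$), so no straightforward strengthening of the induction hypothesis of that shape can exist.

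The paper resolves this by decomposing the elimination context from the other end (Lemma~\ref{lem:horrible}): $K=K_1\{K_2\}$ where $K_2$ is the \emph{innermost} eliminator, the one whose principal argument is the hole. Then $K_2$ is determined by the top connective of $A$, the closed proof-terms $u_1,u_2$ of $A$ normalise to introductions of that connective by Theorem~\ref{thm:introductions}, the corresponding commutation and cut-elimination rules fire inside $K_2$, and the induction hypothesis is applied to the \emph{outer} context $K_1$, whose target type is still $B\in\mathcal V$ and whose measure is smaller (using Lemmas~\ref{lem:mured} and~\ref{lem:strengtheningmsubst} to control the measure after substitution, e.g.\ in the $\elimwith^i$ case). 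You should restructure the $v=x$ case around this inside-out decomposition; as written, the central step of your argument is missing.
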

\begin{proof}[Sketch, see Appendix~\ref{proof:seclinearity}]
  Without loss of generality, we can assume that $t$ is irreducible.  We
  proceed by induction on $\mu(t)$, using Lemma~\ref{lem:elim}, to decompose the proof-term $t$ into $K\{t'\}$ where
  $t'$ is either the variable $x$, an introduction, a sum, or a product. Then, we analyse case by case.
  \qed
\end{proof}

We can now generalise the linearity result, as explained at the beginning of the section, by using the observational equivalence $\sim$. 

\begin{restatable}[\hspace{-.1mm}{\protect\cite[Corollary 4.11]{DiazcaroDowek2023}}]{corollary}{corollarylin}
    \label{cor:corollary0}
    If $A$ and $B$ are any propositions,
    $t$ a proof-term such that $x^A \vdash t:B$, and
    $u_1$ and $u_2$ two closed proof-terms of $A$, then  
    \(
      t\{u_1 \plus u_2\} \sim t\{u_1\} \plus t\{u_2\}
    \)
    and
    \(
      t\{a \bullet u_1\} \sim a \bullet t\{u_1\}
    \).
  \end{restatable}
  \begin{proof}
    See Appendix~\ref{proof:seclinearity}.
    \qed
  \end{proof}

Finally, the next corollary is the converse of Theorem~\ref{thm:matrices}.

\begin{corollary}
  \label{cor:corollary2}
  Let $A, B \in {\mathcal V}$, such that $d(A) = m$ and $d(B) = n$, 
  and  $t$ be a closed proof-term of $A \multimap B$.
  Then the function $F$ from ${\mathcal S}^m$ to ${\mathcal S}^n$,
  defined as
  $F({\bf u}) = \underline{t~\overline{\bf u}^A}$ is linear.
\end{corollary}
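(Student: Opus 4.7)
The plan is to reduce the linearity of $F$ to the syntactic linearity already established in Theorem~\ref{thm:linearity}, using the bijective correspondence between closed irreducible proof-terms of propositions in $\mathcal V$ and vectors in $\mathcal S^n$ from Definition~\ref{def:onetoone}, together with Theorem~\ref{thm:parallelsum} for translating $\plus$ and $\bullet$ into vector sum and scalar product.

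First I would observe that since $t$ is closed of type $A \multimap B$ with $B \in \mathcal V$, the judgement $x^A \vdash t~x : B$ holds for a fresh variable $x$. Thus Theorem~\ref{thm:linearity} applied to the proof-term $t~x$ yields, for any closed proof-terms $v_1, v_2$ of $A$ and scalar $a$, the syntactic equivalences $t~(v_1 \plus v_2) \equiv t~v_1 \plus t~v_2$ and $t~(a \bullet v_1) \equiv a \bullet t~v_1$.

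Next I would specialise to $v_i = \overline{{\bf u}_i}^A$ for vectors ${\bf u}_i \in \mathcal S^m$ and argue that $\overline{{\bf u}_1}^A \plus \overline{{\bf u}_2}^A \equiv \overline{{\bf u}_1 + {\bf u}_2}^A$, using that both terms have the same image under $\underline{\cdot}$ (by Theorem~\ref{thm:parallelsum} and the identity $\underline{\overline{{\bf u}}^A} = {\bf u}$, an easy induction on $A$) combined with the bijectivity of the correspondence; similarly $a \bullet \overline{{\bf u}_1}^A \equiv \overline{a\, {\bf u}_1}^A$. Confluence (Theorem~\ref{thm:Confluence}) and strong normalisation (Corollary~\ref{cor:ST}) guarantee that $\equiv$-equivalent terms share a unique normal form, so $\underline{\cdot}$ is well defined on $\equiv$-classes. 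Applying $\underline{\cdot}$ to both sides of the two equivalences above, and invoking Theorem~\ref{thm:parallelsum} once more on the right-hand sides, gives $F({\bf u}_1 + {\bf u}_2) = F({\bf u}_1) + F({\bf u}_2)$ and $F(a\, {\bf u}_1) = a F({\bf u}_1)$, exactly linearity of $F$.

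There is no serious obstacle here: the substance of the argument is contained in Theorem~\ref{thm:linearity}, and this corollary is essentially its semantic reformulation via the vector/proof-term dictionary of Section~\ref{sec:secvectors}. The only point requiring a little care is checking compatibility of the bijection $\overline{\cdot}^A$ with $\plus$ and $\bullet$ up to $\equiv$, but this follows immediately from Theorem~\ref{thm:parallelsum} and the bijectivity of the correspondence.
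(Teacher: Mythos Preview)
Your proposal is correct and follows essentially the same route as the paper's proof: apply Theorem~\ref{thm:linearity} to the term $t~x$ (legitimate since $B \in \mathcal V$), use Theorem~\ref{thm:parallelsum} together with the bijectivity of Definition~\ref{def:onetoone} to pass between $\overline{\mathbf u+\mathbf v}^A$ and $\overline{\mathbf u}^A \plus \overline{\mathbf v}^A$, and read off linearity of $F$. You are in fact more explicit than the paper about why $\underline{\cdot}$ is invariant on $\equiv$-classes (confluence and strong normalisation), which the paper leaves implicit; note also that the paper's proof contains a self-reference typo (``Using Corollary~\ref{cor:corollary2}'') where Theorem~\ref{thm:linearity} is clearly intended.
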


\begin{proof}
  Using Corollary~\ref{cor:corollary2} and Theorem~\ref{thm:parallelsum}, we have
  \begin{align*}
    F({\bf u} + {\bf v}) &= \underline{t~\overline{\bf u + \bf v}^A}
    = \underline{t~(\overline{\bf u}^A \plus \overline{\bf v}^A)} =
    \underline{t~\overline{\bf u}^A \plus t~\overline{\bf v}^A}
    \\
    &=
    \underline{t~\overline{\bf u}^A} +
    \underline{t~\overline{\bf v}^A}
    =
    F({\bf u}) + F({\bf v})
    \\
    F(a {\bf u}) &= \underline{t~\overline{a \bf u}^A}
    = \underline{t~(a \bullet \overline{\bf u}^A)} =
    \underline{a\bullet t~\overline{\bf u}^A}
    = a \underline{t~\overline{\bf u}^A} = a F({\bf u})
    \qedHERE
  \end{align*}
\end{proof}

\section{Conclusion}
In this paper, we have presented the $\OC$-calculus, an extension of the
$\mathcal L^{\mathcal S}$-calculus with second-order polymorphism and the
exponential connective, allowing non-linear functions and making it a more
expressive language.  We have proved all its correctness properties, including
algebraic linearity for the linear fragment.

The $\mathcal L^{\mathcal S}$-calculus was originally introduced as a core
language for quantum computing. Its ability to represent matrices and vectors
makes it suitable for expressing quantum programs when taking $\mathcal
S=\mathbb C$.  Moreover, by taking $\mathcal S=\mathbb R^+$, one can consider a
probabilistic language, and by taking $\mathcal S=\{\star\}$, a linear
extension of the parallel lambda calculus~\cite{BoudolIC94}.

To consider this calculus as a proper quantum language, we would need not only
to ensure algebraic linearity but also to ensure unitarity, using techniques
such as those in~\cite{DiazcaroGuillermoMiquelValironLICS19}.
Also, the language $\mathcal L^{\mathcal S}$ can be extended with a non-deterministic connective $\odot$~\cite{odot}, from which a quantum measurement operator can be encoded. We did not add such a connective to our presentation, to stay in a pure linear logic setting, however, the extension is straightforward.
Another possible future work is to extend the categorical model
of the $\mathcal L^{\mathcal S}$-calculus given in~\cite{DiazcaroMalherbe22}.
To accommodate the $\OC$-calculus, we would need to use hyperdoctrines~\cite{Crole},
following the approach of~\cite{Maneggia}, a direction we are willing to pursue.

\bibliographystyle{abbrv}
\bibliography{polymorphicLScalculus}

\appendix
\section{Proof of Section~\ref{sec:SR}}\label{proof:SR}
We need two substitution lemmas to prove subject reduction (Theorem~\ref{thm:SR}).

\begin{lemma}[Substitution of propositions]
  \label{lem:substitutiontypesonterms}
  If $\Xi; \Gamma \vdash t:A$ and $B$ is a proposition, then
  \(
    (B/X)\Xi; (B/X)\Gamma \vdash (B/X)t:(B/X)A
  \).
\end{lemma}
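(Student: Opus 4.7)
The plan is to proceed by straightforward induction on the derivation of $\Xi;\Gamma \vdash t:A$. Since substitution of propositions commutes with every type constructor (e.g.\ $(B/X)(A_1 \multimap A_2) = (B/X)A_1 \multimap (B/X)A_2$, $(B/X)(\oc A_1) = \oc(B/X)A_1$), and similarly commutes with every proof-term constructor whose type annotations get substituted, the bulk of the cases are mechanical.

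For the two axiom rules (lin-ax and ax), the conclusion is immediate: substituting $B$ for $X$ in the context preserves the form required by the axiom. For the interstitial rules (sum, prod($a$)) and for every introduction and elimination of the propositional connectives $\one$, $\multimap$, $\otimes$, $\top$, $\zero$, $\with$, $\oplus$, and $\oc$, I would apply the induction hypothesis to each premise and then re-apply the same rule in the substituted contexts, using the commutation of $(B/X)(\cdot)$ with the connective, and with the appropriate splitting of contexts $\Gamma,\Delta$ (which is preserved by pointwise substitution).

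The nontrivial cases are the two polymorphic rules. For $\forall_i$, the derivation ends with $\Xi;\Gamma \vdash \Lambda Y.t : \forall Y.A$ from $\Xi;\Gamma \vdash t:A$ with side condition $Y \notin \FV(\Xi,\Gamma)$. By $\alpha$-equivalence on propositions and proof-terms, I may assume $Y \neq X$ and $Y \notin \FV(B)$, so that $(B/X)$ does not interact with the binder and the side condition still holds in the substituted context. The induction hypothesis then yields $(B/X)\Xi;(B/X)\Gamma \vdash (B/X)t : (B/X)A$, and re-applying $\forall_i$ gives the result. For $\forall_e$, the derivation concludes $\Xi;\Gamma \vdash t~C : (C/Y)D$ from $\Xi;\Gamma \vdash t:\forall Y.D$; the induction hypothesis gives $(B/X)\Xi;(B/X)\Gamma \vdash (B/X)t:\forall Y.(B/X)D$, and applying $\forall_e$ with the proposition $(B/X)C$ yields a proof-term of type $((B/X)C/Y)((B/X)D)$. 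Closing the case then reduces to the purely syntactic identity $((B/X)C/Y)((B/X)D) = (B/X)((C/Y)D)$, the standard substitution lemma on propositions, which holds once we assume $Y \neq X$ and $Y \notin \FV(B)$.

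The main obstacle is thus entirely bureaucratic: tracking $\alpha$-conventions so that the type substitution $(B/X)(\cdot)$ can be pushed under the binder $\Lambda Y.(\cdot)$ and under the bound type variable of $\forall Y.(\cdot)$ without capture, and invoking (or first establishing, as a preliminary lemma) the compositional identity for iterated proposition substitution used in the $\forall_e$ case.
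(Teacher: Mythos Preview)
Your proposal is correct and follows essentially the same route as the paper: a routine induction on the typing derivation (the paper phrases it as induction on $t$, which is equivalent since the system is syntax-directed), with the only non-mechanical cases being $\forall_i$ and $\forall_e$, handled exactly as you describe via $\alpha$-renaming of the bound type variable and the standard iterated-substitution identity on propositions. If anything, your treatment of the $\forall_e$ case is slightly more careful than the paper's, which glosses over the fact that the instantiating proposition must be $(B/X)C$ rather than $C$.
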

\begin{proof}
  By induction on $t$.
  \begin{itemize}
    \item If $t = x$, then either $\Gamma = \{x^A\}$ or $\Gamma$ is empty and $x^A \in \Xi$.
      \begin{itemize}
	\item In the first case, by rule lin-ax, $(B/X)\Xi; x:(B/X)A \vdash x:(B/X)A$.
	\item In the second case, by rule ax, $(B/X)\Xi; \varnothing \vdash x:(B/X)A$.
      \end{itemize}

    \item If $t = u \plus v$, then $\Xi;\Gamma \vdash u:A$ and $\Xi;\Gamma \vdash v:A$. By the induction hypothesis, $(B/X)\Xi;(B/X)\Gamma \vdash (B/X)u:(B/X)A$ and $(B/X)\Xi;(B/X)\Gamma \vdash (B/X)v:(B/X)A$. Therefore, by rule sum, $(B/X)\Xi;(B/X)\Gamma \vdash (B/X)(u \plus v):(B/X)A$.

    \item If $t = a \bullet u$, then $\Xi;\Gamma \vdash u:A$. By the induction hypothesis, we have that $(B/X)\Xi;(B/X)\Gamma \vdash (B/X)u:(B/X)A$. Therefore, by rule prod($a$), $(B/X)\Xi;(B/X)\Gamma \vdash (B/X)(a \bullet u):(B/X)A$.

    \item If $t = a.\star$, then $A = \one$ and $\Gamma$ is empty. By rule $\one_i$($a$), $(B/X)\Xi; \varnothing \vdash a.\star:\one$.

    \item If $t = \elimone(u,v)$, then there are $\Gamma_1$ and $\Gamma_2$ such that $\Gamma = \Gamma_1, \Gamma_2$, $\Xi;\Gamma_1 \vdash u:\one$ and $\Xi;\Gamma_2 \vdash v:A$. By the induction hypothesis, $(B/X)\Xi;(B/X)\Gamma_1 \vdash (B/X)u:\one$ and $(B/X)\Xi;(B/X)\Gamma_2 \vdash (B/X)v:(B/X)A$. Therefore, by rule $\one_e$, $(B/X)\Xi;(B/X)\Gamma \vdash (B/X)\elimone(u, v):(B/X)A$.

    \item If $t = \lambda x^C.u$, then $A = C \multimap D$ and $\Xi;\Gamma, x^C \vdash u:D$. By the induction hypothesis, $(B/X)\Xi;(B/X)\Gamma, x^{(B/X)C} \vdash (B/X)u:(B/X)D$. Therefore, by rule $\multimap_i$, $(B/X)\Xi;(B/X)\Gamma \vdash (B/X)\lambda x^C.u:(B/X)(C \multimap D)$.

    \item If $t = u~v$, then there are $\Gamma_1$ and $\Gamma_2$ such that $\Gamma = \Gamma_1, \Gamma_2$, $\Xi;\Gamma_1 \vdash u:C \multimap A$ and $\Xi;\Gamma_2 \vdash v:C$. 
      By the induction hypothesis, $(B/X)\Xi;(B/X)\Gamma_1 \vdash (B/X)u:(B/X)C\!\multimap\!(B/X)A$ and $(B/X)\Xi;(B/X)\Gamma_2\! \vdash\! (B/X)v:(B/X)C$.
      Therefore, by rule $\multimap_e$, $(B/X)\Xi;(B/X)\Gamma \vdash (B/X)(u~v):(B/X)A$.

    \item If $t = u \otimes v$, then $A = C \otimes D$ and there are $\Gamma_1$ and $\Gamma_2$ such that $\Gamma = \Gamma_1, \Gamma_2$, $\Xi;\Gamma_1 \vdash u:C$ and $\Xi;\Gamma_2 \vdash v:D$. By the induction hypothesis, $(B/X)\Xi;(B/X)\Gamma_1 \vdash (B/X)u:(B/X)C$ and $(B/X)\Xi;(B/X)\Gamma_2 \vdash (B/X)v:(B/X)D$. Therefore, by rule $\otimes_i$, $(B/X)\Xi;(B/X)\Gamma \vdash (B/X)(u \otimes v):(B/X)(C \otimes D)$.

    \item If $t = \elimtens(u, x^C y^D.v)$, then there are $\Gamma_1$ and $\Gamma_2$ such that $\Gamma = \Gamma_1, \Gamma_2$, $\Xi;\Gamma_1 \vdash u:C \otimes D$ and $\Xi;\Gamma_2, x^C, y^D \vdash v:A$. By the induction hypothesis, $(B/X)\Xi;(B/X)\Gamma_1 \vdash (B/X)u:(B/X)C \otimes (B/X)D$ and\\ $(B/X)\Xi;(B/X)\Gamma_2, x^{(B/X)C}, y^{(B/X)D} \vdash (B/D)v:(B/X)A$. Therefore, by rule $\otimes_e$, $(B/X)\Xi;(B/X)\Gamma \vdash (B/X)\elimtens(u, x^C y^D.v):(B/X)A$.

    \item If $t = \langle\rangle$, then $A = \top$. By rule $\top_i$, $(B/X)\Xi;(B/X)\Gamma \vdash \langle\rangle:\top$.

    \item If $t = \elimzero(u)$, then there are $\Gamma_1$ and $\Gamma_2$ such that $\Gamma = \Gamma_1, \Gamma_2$ and $\Xi;\Gamma_1 \vdash u:\zero$. By the induction hypothesis, $(B/X)\Xi;(B/X)\Gamma_1 \vdash (B/X)u:\zero$. By rule $\zero_e$, $(B/X)\Xi;(B/X)\Gamma \vdash (B/X)\elimzero(u):(B/X)A$.

    \item If $t = \langle u,v \rangle$, then $A = C \with D$, $\Xi;\Gamma \vdash u:C$ and $\Xi;\Gamma \vdash v:D$. By the induction hypothesis, we have $(B/X)\Xi;(B/X)\Gamma \vdash (B/X)u:(B/X)C$ and $(B/X)\Xi;(B/X)\Gamma \vdash (B/X)v:(B/X)D$. Therefore, by rule $\with_i$, we have $(B/X)\Xi;(B/X)\Gamma \vdash (B/X)\langle u, v \rangle:(B/X)(C \with D)$.

    \item If $t = \elimwith^1(u, x^C.v)$, then there are $\Gamma_1$ and $\Gamma_2$ such that $\Gamma = \Gamma_1, \Gamma_2$, $\Xi;\Gamma_1 \vdash u:C \with D$ and $\Xi;\Gamma_2, x^C \vdash v:A$. By the induction hypothesis, we have $(B/X)\Xi;(B/X)\Gamma_1 \vdash (B/X)u:(B/X)C \with (B/X)D$ and\\ $(B/X)\Xi;(B/X)\Gamma_2, x^{(B/X)C} \vdash (B/X)v:(B/X)A$. Therefore, by rule $\with_{e1}$, $(B/X)\Xi;(B/X)\Gamma \vdash (B/X)\elimwith^1(u, x^C.v):(B/X)A$.

    \item If $t = \elimwith^2(u, x^C.v)$, then there are $\Gamma_1$ and $\Gamma_2$ such that $\Gamma = \Gamma_1, \Gamma_2$, $\Xi;\Gamma_1 \vdash u:D \with C$ and $\Xi;\Gamma_2, x^C \vdash v:A$. By the induction hypothesis, we have $(B/X)\Xi;(B/X)\Gamma_1 \vdash (B/X)u:(B/X)D \with (B/X)C$ and\\ $(B/X)\Xi;(B/X)\Gamma_2, x^{(B/X)C} \vdash (B/X)v:(B/X)A$. Therefore, by rule $\with_{e2}$, $(B/X)\Xi;(B/X)\Gamma \vdash (B/X)\elimwith^2(u, x^C.v):(B/X)A$.

    \item If $t = \inl(u)$, then $A = C \oplus D$ and $\Xi;\Gamma \vdash u:C$. By the induction hypothesis, $(B/X)\Xi;(B/X)\Gamma \vdash (B/X)u:(B/X)C$. By rule $\oplus_{i1}$, $(B/X)\Xi;(B/X)\Gamma \vdash (B/X)\inl(u):(B/X)(C \oplus D)$.

    \item If $t = \inr(u)$, then $A = C \oplus D$ and $\Xi;\Gamma \vdash u:D$. By the induction hypothesis, $(B/X)\Xi;(B/X)\Gamma \vdash (B/X)u:(B/X)D$. By rule $\oplus_{i2}$, $(B/X)\Xi;(B/X)\Gamma \vdash (B/X)\inr(u):(B/X)(C \oplus D)$.

    \item If $t = \elimplus(u, x^C.v, y^D.w)$, then there are $\Gamma_1$ and $\Gamma_2$ such that $\Gamma = \Gamma_1, \Gamma_2$, $\Xi;\Gamma_1 \vdash u:C \oplus D$, $\Xi;\Gamma_2, x^C \vdash v:A$ and $\Xi;\Gamma_2, y^D \vdash w:A$. By the induction hypothesis, we have $(B/X)\Xi;(B/X)\Gamma_1 \vdash (B/X)u:(B/X)C \oplus (B/X)D$, $(B/X)\Xi;(B/X)\Gamma_2, x^{(B/X)C} \vdash (B/X)v:(B/X)A$ and\\ $(B/X)\Xi;(B/X)\Gamma_2, y^{(B/X)D} \vdash (B/X)w:(B/X)A$. Therefore, by rule $\oplus_e$, $(B/X)\Xi;(B/X)\Gamma \vdash (B/X)\elimplus(u, x^C.v, y^D.w):(B/X)A$.

    \item If $t = \oc u$, then $A = \oc C$, $\Gamma$ is empty and $\Xi; \varnothing \vdash u:C$. By the induction hypothesis, $(B/X)\Xi; \varnothing \vdash (B/X)u:(B/X)C$. Therefore, by rule $\oc_i$, $(B/X)\Xi; \varnothing \vdash (B/X)\oc u:(B/X)\oc C$.

    \item If $t = \elimbang(u, x^C.v)$, then there are $\Gamma_1$ and $\Gamma_2$ such that\\ $\Gamma = \Gamma_1, \Gamma_2$, $\Xi; \Gamma_1 \vdash u:\oc C$ and $\Xi, x^C;\Gamma_2 \vdash v:A$. By the induction hypothesis, we have $(B/X)\Xi; (B/X)\Gamma_1 \vdash (B/X)u:(B/X)\oc C$ and\\ $(B/X)\Xi, x^{(B/X)C}; (B/X)\Gamma_2 \vdash (B/X)v:(B/X)A$. Therefore, by rule $\oc_e$, we have $(B/X)\Xi; (B/X)\Gamma \vdash (B/X)\elimbang(u, x^C.v):(B/X)A$.

    \item If $t = \Lambda Y.u$, then $A = \forall Y.C$, $\Xi;\Gamma \vdash u:C$ and $Y \notin \FV(\Xi,\Gamma)$. By the induction hypothesis, $(B/X)\Xi;(B/X)\Gamma \vdash (B/X)u:(B/X)C$. Since $Y \notin \FV((B/X)\Xi, (B/X)\Gamma)$, by rule $\forall_i$ $(B/X)\Xi;(B/X)\Gamma \vdash (B/X)\Lambda Y.u: (B/X)\forall Y.C$.

    \item If $t = u~C$, then $A = (C/Y)D$ and $\Xi;\Gamma \vdash u:\forall Y.D$. By the induction hypothesis, $(B/X)\Xi;(B/X)\Gamma \vdash (B/X)u:\forall Y.(B/X)D$. By rule $\forall_e$, $(B/X)\Xi;(B/X)\Gamma \vdash (B/X)(u~C):(C/Y)(B/X)D$, since $Y\notin \FV(B)$, $(C/Y)(B/X)D=(B/X)(C/Y)D$.
      \qed
  \end{itemize}
\end{proof}

\begin{lemma}[Substitution of proof-terms]
  \label{lem:polysubstitution}
  ~
  \begin{enumerate}
    \item If $\Xi;\Gamma, x^B \vdash t:A$ and $\Xi;\Delta \vdash u:B$, then $\Xi;\Gamma, \Delta \vdash (u/x)t:A$.
    \item If $\Xi, x^B;\Gamma \vdash t:A$ and $\Xi;\varnothing\vdash u:B$, then $\Xi;\Gamma\vdash (u/x)t:A$.
  \end{enumerate}
\end{lemma}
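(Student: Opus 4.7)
The plan is a simultaneous induction on the structure of $t$ (equivalently, on the derivation), with case analysis on the last rule used. Both parts must be proved together because the $\oc_e$ rule introduces a new variable in the non-linear context of its right premise, so the induction for one part calls into the other part at the inner subterm.

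First I would dispatch the axiom cases. For part 1 with $t$ a variable, the lin-ax rule forces $\Gamma = \varnothing$ and $t = x$, so $(u/x)t = u$ is delivered immediately by the hypothesis on $u$; the ax rule cannot apply, since it requires an empty linear context, contradicting $x^B \in \Gamma, x^B$. For part 2, the ax rule gives either $t = x$ (again $(u/x)t = u$ and we use the hypothesis on $u$) or $t = y \neq x$ (reapply ax in the smaller non-linear context $\Xi$); the lin-ax rule is reapplied verbatim with the smaller non-linear context.

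For the inductive cases the pattern is uniform. The interstitial rules (sum, prod), the introduction rules that do not split the context ($\multimap_i$, $\top_i$, $\oplus_{i1}$, $\oplus_{i2}$, $\with_i$, $\oc_i$, $\forall_i$), and the additive eliminations ($\with_{e1}$, $\with_{e2}$, $\oplus_e$) all preserve the linear context across every premise, so one simply applies the inductive hypothesis to each premise and reapplies the rule; the $\forall_i$ case uses the Barendregt convention so that the bound proposition variable is fresh for $\Xi$, $\Gamma$, and $u$, and the $\forall_e$ case is direct. The multiplicative rules ($\one_e$, $\multimap_e$, $\otimes_i$, $\otimes_e$) split the linear context as $\Gamma, x^B = \Gamma_1, \Gamma_2$: for part 1 the variable $x^B$ lies in exactly one of $\Gamma_1, \Gamma_2$, so the induction hypothesis applies to that premise while $x \notin \fv$ of the other subterm and the substitution there is vacuous; for part 2, $x$ may occur in either subterm and the part 2 hypothesis is applied to each.

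The main obstacle, and the step I expect to require the most care, is the $\oc_e$ rule $t = \elimbang(t_1, y^C.t_2)$, whose right premise is typed in the extended non-linear context $\Xi, y^C$. For part 1, the splitting of the linear context either places $x^B$ in $t_1$'s context (in which case $x \notin \fv(t_2)$ and substitution on the second argument is trivial), or in $t_2$'s context, which requires invoking part 1 on $t_2$ with $u$ retyped under $\Xi, y^C$. For part 2 one must likewise retype $u$ under $\Xi, y^C$ in order to recurse into $t_2$. Both cases are handled by a short auxiliary non-linear weakening lemma, namely that if $\Xi; \Delta \vdash u:B$ and $y$ is fresh, then $\Xi, y^C; \Delta \vdash u:B$ (proved by a direct induction on the derivation of $u$). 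Alternatively, the entire obstacle is absorbed by strengthening the statements to allow an arbitrary non-linear suffix $\Xi, x^B, \Xi'$ in the hypothesis, so that the recursive call into $t_2$ automatically sees the correct non-linear context without any explicit weakening step.
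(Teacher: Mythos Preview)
Your approach is essentially the paper's: induction on $t$ with case analysis on the last rule. Two corrections, neither fatal.

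First, the two parts do \emph{not} need to be proved simultaneously. In the $\oc_e$ case, Part~1 recurses only into Part~1 (with the enlarged non-linear context $\Xi, y^C$), and Part~2 only into Part~2; neither calls the other. The paper proves them separately. What \emph{is} genuinely needed, as you correctly spot, is non-linear weakening on $u$ (from $\Xi;\Delta \vdash u:B$ to $\Xi, y^C;\Delta \vdash u:B$). The paper's proof uses this silently, so your explicit treatment here is an improvement.

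Second, your classification of $\with_{e1}$, $\with_{e2}$, $\oplus_e$ as rules that ``preserve the linear context across every premise'' is wrong for this system: look at the typing rules again---all three split the linear context between the scrutinee and the continuation(s), exactly like $\one_e$, $\multimap_e$, $\otimes_e$. They require the same two-sub-case analysis (which premise contains $x^B$). Relatedly, $\oc_i$ and $\one_i$ demand an empty linear context, so in Part~1 these cases are vacuous rather than ``apply the IH''; in Part~2 they behave as you describe. The $\zero_e$ rule also splits the context (with one half unconstrained) and deserves a mention.
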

\begin{proof}
  ~
  \begin{enumerate}
    \item By induction on $t$.
      \begin{itemize}
	\item If $t = x$, then $\Gamma$ is empty and $A = B$. Thus, $\Xi;\Gamma, \Delta \vdash (u/x)t:A$ is the same as $\Xi;\Delta \vdash u:B$ and this is valid by hypothesis.

	\item If $t = v \plus w$, then $\Xi;\Gamma, x^B \vdash v:A$ and $\Xi;\Gamma, x^B \vdash w:A$. By the induction hypothesis, $\Xi;\Gamma, \Delta \vdash (u/x)v:A$ and $\Xi;\Gamma, \Delta \vdash (u/x)w:A$. Therefore, by rule sum, $\Xi;\Gamma, \Delta \vdash (u/x)(v \plus w):A$.

	\item If $t = a \bullet v$, then $\Xi;\Gamma, x^B \vdash v:A$. By the induction hypothesis, $\Xi;\Gamma, \Delta \vdash (u/x)v:A$. Therefore, by rule prod($a$), $\Xi;\Gamma, \Delta \vdash (u/x)(a \bullet v):A$.

	\item If $t = \elimone(v, w)$, then there are $\Gamma_1, \Gamma_2$ such that $\Gamma = \Gamma_1, \Gamma_2$ and there are two cases.
	  \begin{itemize}
	    \item If $\Xi;\Gamma_1, x^B \vdash v:\one$ and $\Xi;\Gamma_2 \vdash w:A$, by the induction hypothesis $\Xi;\Gamma_1, \Delta \vdash (u/x)v:\one$. By rule $\one_e$, $\Xi;\Gamma, \Delta \vdash \elimone((u/x)v, w):A$.
	    \item If $\Xi;\Gamma_1 \vdash v:\one$ and $\Xi;\Gamma_2, x^B \vdash w:A$, by the induction hypothesis $\Xi;\Gamma_2, \Delta \vdash (u/x)w:A$. By rule $\one_e$, $\Xi;\Gamma, \Delta \vdash \elimone(v,(u/x)w):A$.
	  \end{itemize}
	  Therefore, $\Xi;\Gamma, \Delta \vdash (u/x)\elimone(v,w):A$.

	\item If $t = \lambda y^C.v$, then $A = C \multimap D$ and $\Xi;\Gamma, x^B, y^C \vdash v:D$. By the induction hypothesis, $\Xi;\Gamma, y^C, \Delta \vdash (u/x)v:D$. Therefore, by rule $\multimap_i$, $\Xi;\Gamma, \Delta \vdash (u/x)\lambda y^C.v:C \multimap D$.

	\item If $t = v~w$, then there are $\Gamma_1$ and $\Gamma_2$ such that $\Gamma = \Gamma_1, \Gamma_2$ and there are two cases.
	  \begin{itemize}
	    \item If $\Xi;\Gamma_1, x^B \vdash v:C \multimap A$ and $\Xi;\Gamma_2 \vdash w:C$, by the induction hypothesis $\Xi;\Gamma_1, \Delta \vdash (u/x)v:C \multimap A$. By rule $\multimap_e$, $\Xi;\Gamma, \Delta \vdash (u/x)v~w:A$.
	    \item If $\Xi;\Gamma_1 \vdash v:C \multimap A$ and $\Xi;\Gamma_2, x^B \vdash w:C$, by the induction hypothesis $\Xi;\Gamma_2, \Delta \vdash (u/x)w:C$. By rule $\multimap_e$, $\Xi;\Gamma, \Delta \vdash v~(u/x)w:A$.
	  \end{itemize}
	  Therefore, $\Xi;\Gamma, \Delta \vdash (u/x)(v~w):A$.

	\item If $t = v \otimes w$, then $A = C \otimes D$ and there are $\Gamma_1$ and $\Gamma_2$ such that $\Gamma = \Gamma_1, \Gamma_2$ and there are two cases.
	  \begin{itemize}
	    \item If $\Xi;\Gamma_1, x^B \vdash v:C$ and $\Xi;\Gamma_2 \vdash w:D$, by the induction hypothesis $\Xi;\Gamma_1, \Delta \vdash (u/x)v:C$. By rule $\otimes_i$, $\Xi;\Gamma, \Delta \vdash ((u/x)v) \otimes w:C \otimes D$.
	    \item If $\Xi;\Gamma_1 \vdash v:C$ and $\Xi;\Gamma_2, x^B \vdash w:D$, by the induction hypothesis $\Xi;\Gamma_2, \Delta \vdash (u/x)w:D$. By rule $\otimes_i$, $\Xi;\Gamma, \Delta \vdash v \otimes ((u/x)w):C \otimes D$.
	  \end{itemize}
	  Therefore, $\Xi;\Gamma, \Delta \vdash (u/x)(v \otimes w):C \otimes D$.

	\item If $t = \elimtens(v, y^C z^D.w)$, then there are $\Gamma_1$ and $\Gamma_2$ such that $\Gamma = \Gamma_1, \Gamma_2$ and there are two cases.
	  \begin{itemize}
	    \item If $\Xi;\Gamma_1, x^B \vdash v:C \otimes D$ and $\Xi;\Gamma_2, y^C, z^D \vdash w:A$, by the induction hypothesis $\Xi;\Gamma_1, \Delta \vdash (u/x)v:C \otimes D$. By rule $\otimes_e$, $\Xi;\Gamma, \Delta \vdash \elimtens((u/x)v, y^C z^D.w):A$.
	    \item If $\Xi;\Gamma_1 \vdash v:C \otimes D$ and $\Xi;\Gamma_2, x^B, y^C, z^D \vdash w:A$, by the induction hypothesis $\Xi;\Gamma_2, y^C, z^D, \Delta \vdash (u/x)w:A$. By rule $\otimes_e$, $\Xi;\Gamma, \Delta \vdash \elimtens(v, y^C z^D.(u/x)w):A$.
	  \end{itemize}
	  Therefore, $\Xi;\Gamma, \Delta \vdash (u/x)\elimtens(v, y^C z^D.w):A$.

	\item If $t = \langle\rangle$, then $A = \top$. By rule $\top_i$ $\Xi;\Gamma, \Delta \vdash \langle\rangle:\top$.

	\item If $t = \elimzero(v)$, then there are $\Gamma_1$ and $\Gamma_2$ such that $\Gamma = \Gamma_1, \Gamma_2$ and there are two cases.
	  \begin{itemize}
	    \item If $\Xi;\Gamma_1, x^B \vdash v:\zero$, by the induction hypothesis $\Xi;\Gamma_1, \Delta \vdash (u/x)v:\zero$. By rule $\zero_e$, $\Xi;\Gamma, \Delta \vdash \elimzero((u/x)v):A$.
	    \item If $\Xi;\Gamma_1 \vdash v:\zero$ and $x \notin \fv(v)$, by rule $\zero_e$ $\Xi;\Gamma, \Delta \vdash \elimzero(v):A$.
	  \end{itemize}
	  Therefore, $\Xi;\Gamma, \Delta \vdash (u/x)\elimzero(v):A$.

	\item If $t = \langle v, w \rangle$, then $A = C \with D$, $\Xi;\Gamma, x^B \vdash v:C$ and $\Xi;\Gamma, x^B \vdash w:D$. By the induction hypothesis, $\Xi;\Gamma, \Delta \vdash (u/x)v:C$ and $\Xi;\Gamma, \Delta \vdash (u/x)w:D$. By rule $\with_i$, $\Xi;\Gamma, \Delta \vdash (u/x)\langle v,w \rangle:C \with D$.

	\item If $t = \elimwith^1(v, y^C.w)$, then there are $\Gamma_1$ and $\Gamma_2$ such that $\Gamma = \Gamma_1, \Gamma_2$ and there are two cases.
	  \begin{itemize}
	    \item If $\Xi;\Gamma_1, x^B \vdash v:C \with D$ and $\Xi;\Gamma_2, y^C \vdash w:A$, by the induction hypothesis $\Xi;\Gamma_1, \Delta \vdash (u/x)v:C \with D$. By rule $\with_{e1}$, $\Xi;\Gamma, \Delta \vdash \elimwith^1((u/x)v, y^C.w):A$.
	    \item If $\Xi;\Gamma_1 \vdash v:C \with D$ and $\Xi;\Gamma_2, x^B, y^C \vdash w:A$, by the induction hypothesis $\Xi;\Gamma_2, y^C, \Delta \vdash (u/x)w:A$. By rule $\with_{e1}$, $\Xi;\Gamma, \Delta \vdash \elimwith^1(v, y^C.(u/x)w):A$.
	  \end{itemize}
	  Therefore, $\Xi;\Gamma, \Delta \vdash (u/x)\elimwith^1(v, y^C.w):A$.

	\item If $t = \elimwith^2(v, y^D.w)$, then there are $\Gamma_1$ and $\Gamma_2$ such that $\Gamma = \Gamma_1, \Gamma_2$ and there are two cases.
	  \begin{itemize}
	    \item If $\Xi;\Gamma_1, x^B \vdash v:C \with D$ and $\Xi;\Gamma_2, y^D \vdash w:A$, by the induction hypothesis $\Xi;\Gamma_1, \Delta \vdash (u/x)v:C \with D$. By rule $\with_{e2}$, $\Xi;\Gamma, \Delta \vdash \elimwith^2((u/x)v, y^D.w):A$.
	    \item If $\Xi;\Gamma_1 \vdash v:C \with D$ and $\Xi;\Gamma_2, x^B, y^D \vdash w:A$, by the induction hypothesis $\Xi;\Gamma_2, y^D, \Delta \vdash (u/x)w:A$. By rule $\with_{e2}$, $\Xi;\Gamma, \Delta \vdash \elimwith^2(v, y^D.(u/x)w):A$.
	  \end{itemize}
	  Therefore, $\Xi;\Gamma, \Delta \vdash (u/x)\elimwith^2(v, y^D.w):A$.

	\item If $t = \inl(v)$, then $A = C \oplus D$ and $\Xi;\Gamma, x^B \vdash v:C$. By the induction hypothesis, $\Xi;\Gamma, \Delta \vdash (u/x)v:C$. Therefore, by rule $\oplus_{i1}$, $\Xi;\Gamma, \Delta \vdash (u/x)\inl(v):C \oplus D$.

	\item If $t = \inr(v)$, then $A = C \oplus D$ and $\Xi;\Gamma, x^B \vdash v:D$. By the induction hypothesis, $\Xi;\Gamma, \Delta \vdash (u/x)v:D$. Therefore, by rule $\oplus_{i2}$, $\Xi;\Gamma, \Delta \vdash (u/x)\inr(v):C \oplus D$.

	\item If $t = \elimplus(v, y^C.w, z^D.s)$, then there are $\Gamma_1$ and $\Gamma_2$ such that $\Gamma = \Gamma_1, \Gamma_2$ and there are two cases.
	  \begin{itemize}
	    \item If $\Xi;\Gamma_1, x^B \vdash v:C \oplus D$, $\Xi;\Gamma_2, y^C \vdash w:A$ and $\Xi;\Gamma_2, z^D \vdash s:A$, by the induction hypothesis $\Xi;\Gamma_1, \Delta \vdash (u/x)v:C \oplus D$. By rule $\oplus_e$, $\Xi;\Gamma, \Delta \vdash \elimplus((u/x)v, y^C.w, z^D.s):A$.
	    \item If $\Xi;\Gamma_1 \vdash v:C \oplus D$, $\Xi;\Gamma_2, x^B, y^C \vdash w:A$ and $\Xi;\Gamma_2, x^B, z^D \vdash s:A$, by the induction hypothesis $\Xi;\Gamma_2, y:C \vdash (u/x)w:A$ and $\Xi;\Gamma_2, z:D \vdash (u/x)s:A$. By rule $\oplus_e$,\\ $\Xi;\Gamma, \Delta \vdash \elimplus(v, y^C.(u/x)w, z^D.(u/x)s):A$.
	  \end{itemize}
	  Therefore, $\Xi;\Gamma, \Delta \vdash (u/x)\elimplus(v, y^C.w, z^D.s):A$.

	\item If $t = \oc v$, this is not possible since the linear context should be empty.

	\item If $t = \elimbang(v, y^C.w)$, then there are $\Gamma_1$ and $\Gamma_2$ such that $\Gamma = \Gamma_1, \Gamma_2$ and there are two cases.
	  \begin{itemize}
	    \item If $\Xi; \Gamma_1, x^B \vdash v:\oc C$ and $\Xi, y^C; \Gamma_2 \vdash w:A$, by the induction hypothesis $\Xi; \Gamma_1, \Delta \vdash (u/x) v:\oc C$. By rule $\oc_e$, $\Xi; \Gamma, \Delta \vdash \elimbang((u/x)v, y^C.w):A$.
	    \item If $\Xi; \Gamma_1 \vdash v:\oc C$ and $\Xi, y^C; \Gamma_2, x^B \vdash w:A$, by the induction hypothesis $\Xi, y^C; \Gamma_2, \Delta \vdash (u/x)w:A$. By rule $\oc_e$,\\ $\Xi; \Gamma, \Delta \vdash \elimbang(v, y^C.(u/x)w):A$.
	  \end{itemize}
	  Therefore, $\Xi; \Gamma, \Delta \vdash (u/x)\elimbang(v, y^C.w):A$.

	\item If $t = \Lambda X.v$, then $A = \forall X.C$, $\Xi;\Gamma, x^B \vdash v:C$ and $X \notin \FV(\Xi, \Gamma, B)$. By the induction hypothesis, $\Xi;\Gamma \vdash (u/x)v:C$. Since $X \notin \FV(\Xi, \Gamma)$, by rule $\forall_i$ $\Xi;\Gamma \vdash (u/x) \Lambda X.v:\forall X.C$.

	\item If $t = v~C$, then $A = (C/X)D$ and $\Xi;\Gamma, x^B \vdash v:\forall X.D$. By the induction hypothesis, $\Xi;\Gamma, \Delta \vdash (u/x)v:\forall X.D$. Therefore, by rule $\forall_e$, $\Xi;\Gamma, \Delta \vdash (u/x)v:(C/X)D$.
      \end{itemize}
    \item By induction on $t$.
      \begin{itemize}
	\item If $t = x$, then $\Gamma$ is empty and $A = B$. Thus, $\Xi;\Gamma \vdash (u/x)t:A$ is the same as $\Xi;\varnothing \vdash u:B$ and this is valid by hypothesis.

	\item If $t = y \neq x$, then either $\Gamma = \{y^A\}$ or $\Gamma$ is empty and $y^A \in \Xi$.
	  \begin{itemize}
	    \item In the first case, $\Xi; y^A \vdash y:A$ by rule lin-ax.
	    \item In the second case, $\Xi; \varnothing \vdash y:A$ by rule ax.
	  \end{itemize}
	  Therefore, $\Xi; \Gamma \vdash y:A$.

	\item If $t = v \plus w$, then $\Xi, x^B;\Gamma \vdash v:A$ and $\Xi, x^B;\Gamma \vdash w:A$. By the induction hypothesis, $\Xi;\Gamma \vdash (u/x)v:A$ and $\Xi;\Gamma \vdash (u/x)w:A$. Therefore, by rule sum, $\Xi;\Gamma \vdash (u/x)(v \plus w):A$.

	\item If $t = a \bullet v$, then $\Xi, x^B;\Gamma \vdash v:A$. By the induction hypothesis, $\Xi;\Gamma \vdash (u/x)v:A$. Therefore, by rule prod($a$), $\Xi;\Gamma \vdash (u/x)(a \bullet v):A$.

	\item If $t = \elimone(v, w)$, then there are $\Gamma_1, \Gamma_2$ such that $\Gamma = \Gamma_1, \Gamma_2$, $\Xi, x^B; \Gamma_1 \vdash v:\one$ and $\Xi, x^B; \Gamma_2 \vdash w:A$. By the induction hypothesis, $\Xi; \Gamma_1 \vdash (u/x)v:\one$ and $\Xi; \Gamma_2 \vdash (u/x)w:A$. By rule $\one_e$, $\Xi; \Gamma \vdash (u/x)\elimone(v, w):A$.

	\item If $t = \lambda y^C.v$, then $A = C \multimap D$ and $\Xi, x^B;\Gamma, y^C \vdash v:D$. By the induction hypothesis, $\Xi;\Gamma, y^C \vdash (u/x)v:D$. Therefore, by rule $\multimap_i$, $\Xi;\Gamma \vdash (u/x)\lambda y^C.v:C \multimap D$.

	\item If $t = v~w$, then there are $\Gamma_1$ and $\Gamma_2$ such that $\Gamma = \Gamma_1, \Gamma_2$, $\Xi, x^B; \Gamma_1 \vdash v:C \multimap A$ and $\Xi, x^B; \Gamma_2 \vdash w:C$. By the induction hypothesis, $\Xi; \Gamma_1 \vdash (u/x)v:C \multimap A$ and $\Xi; \Gamma_2 \vdash (u/x)w:C$. By rule $\multimap_e$, $\Xi;\Gamma \vdash (u/x)(v~w):A$.

	\item If $t = v \otimes w$, then $A = C \otimes D$ and there are $\Gamma_1$ and $\Gamma_2$ such that $\Gamma = \Gamma_1, \Gamma_2$, $\Xi, x^B; \Gamma_1 \vdash v:C$ and $\Xi, x^B; \Gamma_2 \vdash w:D$. By the induction hypothesis, $\Xi; \Gamma_1 \vdash (u/x)v:C$ and $\Xi; \Gamma_2 \vdash (u/x)w:D$. By rule $\otimes_i$, $\Xi;\Gamma \vdash (u/x)(v \otimes w):C \otimes D$.

	\item If $t = \elimtens(v, y^C z^D.w)$, then there are $\Gamma_1$ and $\Gamma_2$ such that $\Gamma = \Gamma_1, \Gamma_2$, $\Xi, x^B; \Gamma_1 \vdash v:C \otimes D$ and $\Xi, x^B; \Gamma_2, y^C, z^D \vdash w:A$. By the induction hypothesis, $\Xi; \Gamma_1 \vdash (u/x)v:C \otimes D$ and $\Xi; \Gamma_2, y^C, z^D \vdash (u/x)w:A$. By rule $\otimes_e$, $\Xi;\Gamma \vdash (u/x)\elimtens(v, y^C z^D.w):A$.

	\item If $t = \langle\rangle$, then $A = \top$. By rule $\top_i$ $\Xi;\Gamma \vdash \langle\rangle:\top$.

	\item If $t = \elimzero(v)$, then there are $\Gamma_1$ and $\Gamma_2$ such that $\Gamma = \Gamma_1, \Gamma_2$ and $\Xi, x^B; \Gamma_1 \vdash v:\zero$. By the induction hypothesis $\Xi; \Gamma_1 \vdash (u/x)v:\zero$. By rule $\zero_e$, $\Xi;\Gamma \vdash (u/x)\elimzero(v):A$.

	\item If $t = \langle v, w \rangle$, then $A = C \with D$, $\Xi, x^B;\Gamma \vdash v:C$ and $\Xi, x^B;\Gamma \vdash w:D$. By the induction hypothesis, $\Xi;\Gamma, \vdash (u/x)v:C$ and $\Xi;\Gamma, \vdash (u/x)w:D$. By rule $\with_i$, $\Xi;\Gamma \vdash (u/x)\langle v,w \rangle:C \with D$.

	\item If $t = \elimwith^1(v, y^C.w)$, then there are $\Gamma_1$ and $\Gamma_2$ such that $\Gamma = \Gamma_1, \Gamma_2$, $\Xi, x^B; \Gamma_1 \vdash v:C \with D$ and $\Xi, x^B; \Gamma_2, y^C \vdash w:A$. By the induction hypothesis, $\Xi; \Gamma_1 \vdash (u/x)v:C \with D$ and $\Xi; \Gamma_2, y^C \vdash (u/x)w:A$. By rule $\with_{e1}$, $\Xi;\Gamma \vdash (u/x)\elimwith^1(v, y^C.w):A$.

	\item If $t = \elimwith^2(v, y^D.w)$, then there are $\Gamma_1$ and $\Gamma_2$ such that $\Gamma = \Gamma_1, \Gamma_2$, $\Xi, x^B; \Gamma_1 \vdash v:C \with D$ and $\Xi, x^B; \Gamma_2, y^C \vdash w:A$. By the induction hypothesis, $\Xi; \Gamma_1 \vdash (u/x)v:C \with D$ and $\Xi; \Gamma_2, y^C \vdash (u/x)w:A$. By rule $\with_{e2}$, $\Xi;\Gamma \vdash (u/x)\elimwith^2(v, y^D.w):A$.

	\item If $t = \inl(v)$, then $A = C \oplus D$ and $\Xi, x^B;\Gamma \vdash v:C$. By the induction hypothesis, $\Xi;\Gamma \vdash (u/x)v:C$. Therefore, by rule $\oplus_{i1}$, $\Xi;\Gamma \vdash (u/x)\inl(v):C \oplus D$.

	\item If $t = \inr(v)$, then $A = C \oplus D$ and $\Xi, x^B;\Gamma \vdash v:D$. By the induction hypothesis, $\Xi;\Gamma \vdash (u/x)v:D$. Therefore, by rule $\oplus_{i2}$, $\Xi;\Gamma \vdash (u/x)\inr(v):C \oplus D$.

	\item If $t = \elimplus(v, y^C.w, z^D.s)$, then there are $\Gamma_1$ and $\Gamma_2$ such that $\Gamma = \Gamma_1, \Gamma_2$, $\Xi, x^B; \Gamma_1 \vdash v:C \oplus D$, $\Xi, x^B; \Gamma_2, y^C \vdash w:A$ and $\Xi, x^B; \Gamma_2, z^D \vdash s:A$. By the induction hypothesis, $\Xi; \Gamma_1 \vdash (u/x)v:C \oplus D$, $\Xi; \Gamma_2, y^C \vdash (u/x)w:A$ and $\Xi; \Gamma_2, z^D \vdash (u/x)s:A$. By rule $\oplus_e$,\\ $\Xi;\Gamma \vdash (u/x)\elimplus(v, y^C.w, z^D.s):A$.

	\item If $t = \oc v$, then $A = \oc C$, $\Gamma$ is empty and $\Xi, x^B; \varnothing \vdash v:C$. By the induction hypothesis $\Xi; \varnothing \vdash (u/x)v:C$, and by rule $\oc_i$ $\Xi; \varnothing \vdash (u/x)\oc v: \oc C$.

	\item If $t = \elimbang(v, y^C.w)$, then there are $\Gamma_1$ and $\Gamma_2$ such that $\Gamma = \Gamma_1, \Gamma_2$, $\Xi, x^B; \Gamma_1 \vdash v:\oc C$ and $\Xi, x^B; \Gamma_2, y^C \vdash w:A$. By the induction hypothesis, $\Xi; \Gamma_1 \vdash (u/x)v:\oc C$ and $\Xi; \Gamma_2, y^C \vdash (u/x)w:A$. By rule $\oc_e$, $\Xi; \Gamma \vdash (u/x)\elimbang(v, y^C.w):A$.

	\item If $t = \Lambda X.v$, then $A = \forall X.C$, $\Xi, x^B;\Gamma \vdash v:C$ and $X \notin \FV(\Xi, \Gamma, B)$. By the induction hypothesis, $\Xi;\Gamma \vdash (u/x)v:C$. Since $X \notin \FV(\Xi, \Gamma)$, by rule $\forall_i$ $\Xi;\Gamma \vdash (u/x) \Lambda X.v:\forall X.C$.

	\item If $t = v~C$, then $A = (C/X)D$ and $\Xi, x^B;\Gamma \vdash v:\forall X.D$. By the induction hypothesis, $\Xi;\Gamma \vdash (u/x)v:\forall X.D$. Therefore, by rule $\forall_e$, $\Xi;\Gamma \vdash (u/x)v:(C/X)D$. 
  \qed
      \end{itemize}
  \end{enumerate}
\end{proof}

\SR*
\begin{proof}
  By induction on the relation $\lra$. 
  \begin{itemize}
    \item If $t = \elimone(a.\star, v)$ and $u = a \bullet v$, then $\Xi;\Gamma \vdash v:A$. Therefore, by rule prod($a$), $\Xi;\Gamma \vdash a \bullet v:A$.

    \item If $t = (\lambda x^B.v_1)~v_2$ and $u = (v_2/x)v_1$, then there are $\Gamma_1$ and $\Gamma_2$ such that $\Gamma = \Gamma_1, \Gamma_2$, $\Xi;\Gamma_1, x^B \vdash v_1:A$ and $\Xi;\Gamma_2 \vdash v_2:B$. By Lemma~\ref{lem:polysubstitution}, $\Xi;\Gamma \vdash (v_2/x)v_1:A$.

    \item If $t = \elimtens(v_1 \otimes v_2, x^B y^C.v_3)$ and $u = (v_1/x, v_2/y)v_3$, then there are $\Gamma_1$, $\Gamma_2$ and $\Gamma_3$ such that $\Gamma = \Gamma_1, \Gamma_2, \Gamma_3$, $\Xi;\Gamma_1 \vdash v_1:B$, $\Xi;\Gamma_2 \vdash v_2:C$ and $\Xi;\Gamma_3, x^B, y^C \vdash v_3:A$. By Lemma~\ref{lem:polysubstitution} twice, $\Xi;\Gamma \vdash (v_1/x, v_2/y)v_3:A$.

    \item If $t = \elimwith^1(\langle v_1, v_2 \rangle, x^B.v_3)$ and $u = (v_1/x)v_3$, then there are $\Gamma_1$ and $\Gamma_2$ such that $\Gamma = \Gamma_1, \Gamma_2$, $\Xi;\Gamma_1 \vdash v_1:B$, $\Xi;\Gamma_1 \vdash v_2:C$ and $\Xi;\Gamma_2, x^B \vdash v_3:A$. By Lemma~\ref{lem:polysubstitution}, $\Xi;\Gamma \vdash (v_1/x)v_3:A$.

    \item If $t = \elimwith^2(\langle v_1, v_2 \rangle, x^B.v_3)$ and $u = (v_2/x)v_3$, then there are $\Gamma_1$ and $\Gamma_2$ such that $\Gamma = \Gamma_1, \Gamma_2$, $\Xi;\Gamma_1 \vdash v_1:C$, $\Xi;\Gamma_1 \vdash v_2:B$ and $\Xi;\Gamma_2, x^B \vdash v_3:A$. By Lemma~\ref{lem:polysubstitution}, $\Xi;\Gamma \vdash (v_2/x)v_3:A$.

    \item If $t = \elimplus(\inl(v_1), x^B.v_2, y^C.v_3)$ and $u = (v_1/x)v_2$, then there are $\Gamma_1$ and $\Gamma_2$ such that $\Gamma = \Gamma_1, \Gamma_2$, $\Xi;\Gamma_1 \vdash v_1:B$, $\Xi;\Gamma_2, x^B \vdash v_2:A$ and $\Xi;\Gamma_2, y^C \vdash v_3:A$. By Lemma~\ref{lem:polysubstitution}, $\Xi;\Gamma \vdash (v_1/x)v_2:A$.

    \item If $t = \elimplus(\inr(v_1), x^B.v_2, y^C.v_3)$ and $u = (v_1/y)v_3$, then there are $\Gamma_1$ and $\Gamma_2$ such that $\Gamma = \Gamma_1, \Gamma_2$, $\Xi;\Gamma_1 \vdash v_1:C$, $\Xi;\Gamma_2, x^B \vdash v_2:A$ and $\Xi;\Gamma_2, y^C \vdash v_3:A$. By Lemma~\ref{lem:polysubstitution}, $\Xi;\Gamma \vdash (v_1/y)v_3:A$.

    \item If $t = (\Lambda X.v)~B$ and $u = (B/X)v$, then $A = (B/X)C$, $\Xi;\Gamma \vdash v:C$ and $X \notin \FV(\Xi, \Gamma)$. By Lemma~\ref{lem:substitutiontypesonterms}, $(B/X)\Xi;(B/X)\Gamma \vdash (B/X)v:(B/X)C$. Since $X \notin \FV(\Xi, \Gamma)$, then $\Xi;\Gamma \vdash (B/X)v:(B/X)C$.

    \item If $t = \elimbang(\oc v_1, x^B.v_2)$ and $u = (v_1/x)v_2$, then $\Xi; \varnothing \vdash v_1:B$ and $\Xi, x^B; \Gamma \vdash v_2:A$. By Lemma~\ref{lem:polysubstitution}, $\Xi; \Gamma \vdash (v_1/x)v_2:A$.

    \item If $t = {a.\star} \plus b.\star$ and $u = (a+b).\star$, then $A = \one$ and $\Gamma$ is empty. By rule $\one_i$($a+b$), $\Xi; \varnothing \vdash (a+b).\star:\one$.

    \item If $t = (\lambda x^B.v_1) \plus (\lambda x^B.v_2)$ and $u = \lambda x^B.(v_1 \plus v_2)$, then $A = B \multimap C$, $\Xi; \Gamma, x^B \vdash v_1:C$ and $\Xi; \Gamma, x^B \vdash v_2:C$. By rule sum, $\Xi; \Gamma, x^B \vdash v_1 \plus v_2:C$. Therefore, by rule $\multimap_i$, $\Xi; \Gamma \vdash \lambda x^B.(v_1 \plus v_2): B \multimap C$.

    \item If $t = \elimtens(v_1 \plus v_2, x^B y^C.v_3)$ and $u = \elimtens(v_1, x^B y^C.v_3) \plus \elimtens(v_2, x^B y^C.v_3)$, then there are $\Gamma_1$ and $\Gamma_2$ such that $\Gamma = \Gamma_1, \Gamma_2$, $\Xi; \Gamma_1 \vdash v_1:B \otimes C$, $\Xi; \Gamma_1 \vdash v_2:B \otimes C$ and $\Xi; \Gamma_2, x^B, y^C \vdash v_3:A$. By rule $\otimes_e$, $\Xi; \Gamma \vdash \elimtens(v_1, x^B y^C.v_3):A$ and $\Xi; \Gamma \vdash \elimtens(v_2, x^B y^C.v_3):A$. Therefore, by rule sum, $\Xi; \Gamma \vdash \elimtens(v_1, x^B y^C.v_3) \plus \elimtens(v_2, x^B y^C.v_3):A$.

    \item If $t = \langle\rangle \plus \langle\rangle$ and $u = \langle\rangle$, then $A = \top$. By rule $\top_i$, $\Xi;\Gamma \vdash \langle\rangle:\top$.

    \item If $t = \langle v_1, v_2 \rangle \plus \langle v_3, v_4 \rangle$ and $u = \langle v_1 \plus v_3, v_2 \plus v_4 \rangle$, then $A = B \with C$, $\Xi; \Gamma \vdash v_1:B$, $\Xi; \Gamma \vdash v_2:C$, $\Xi; \Gamma \vdash v_3:B$ and $\Xi; \Gamma \vdash v_4:C$. By rule sum, $\Xi; \Gamma \vdash v_1 \plus v_3:B$ and $\Xi; \Gamma \vdash v_2 \plus v_4:C$. Therefore, by rule $\with_i$, $\Xi; \Gamma \vdash \langle v_1 \plus v_3, v_2 \plus v_4 \rangle:B \with C$.

    \item If $t = \elimplus(v_1 \plus v_2, x^B.v_3, y^C.v_4)$ and\\ $u = \elimplus(v_1, x^B.v_3, y^C.v_4) \plus \elimplus(v_2, x^B.v_3, y^C.v_4)$, then there are $\Gamma_1$ and $\Gamma_2$ such that $\Gamma = \Gamma_1, \Gamma_2$, $\Xi; \Gamma_1 \vdash v_1:B \oplus C$, $\Xi; \Gamma_1 \vdash v_2:B \oplus C$, $\Xi; \Gamma_2, x^B \vdash v_3:A$ and $\Xi; \Gamma_2, y^C \vdash v_4:A$. By rule $\oplus_e$, $\Xi; \Gamma \vdash \elimplus(v_1, x^B.v_3, y^C.v_4):A$ and $\Xi; \Gamma \vdash \elimplus(v_2, x^B.v_3, y^C.v_4):A$. Therefore, by rule sum, $\Xi; \Gamma \vdash \elimplus(v_1, x^B.v_3, y^C.v_4) \plus \elimplus(v_2, x^B.v_3, y^C.v_4):A$.

    \item If $t = (\Lambda X.v_1) \plus (\Lambda X.v_2)$ and $u = \Lambda X.(v_1 \plus v_2)$, then $A = \forall X.B$, $\Xi; \Gamma \vdash v_1:B$, $\Xi; \Gamma \vdash v_2:B$ and $X \notin \FV(\Xi, \Gamma)$. By rule sum, $\Xi; \Gamma \vdash v_1 \plus v_2:B$. Therefore, by rule $\forall_i$, $\Xi; \Gamma \vdash \Lambda X.(v_1 \plus v_2):\forall X.B$.

    \item If $t = \oc v_1 \plus \oc v_2$ and $u = \oc (v_1 \plus v_2)$, then $A = \oc B$, $\Xi; \varnothing \vdash v_1:B$ and $\Xi; \varnothing \vdash v_2:B$. By rule sum, $\Xi; \varnothing \vdash v_1 \plus v_2:B$. By rule $\oc_i$, $\Xi; \varnothing \vdash \oc(v_1 \plus v_2):\oc B$.

    \item If $t = a \bullet b.\star$ and $u = (a \times b).\star$, then $A = \one$ and $\Gamma$ is empty. By rule $\one_i$($a \times b$), $\Xi; \varnothing \vdash (a \times b).\star:\one$.

    \item If $t = a \bullet \lambda x^B.v$ and $u = \lambda x^B.a \bullet v$, then $A = B \multimap C$ and $\Xi; \Gamma, x^B \vdash v:C$. By rule prod($a$), $\Xi; \Gamma, x^B \vdash a \bullet v:C$. Therefore, by rule $\multimap_i$, $\Xi; \Gamma \vdash \lambda x^B.a \bullet v:B \multimap C$.

    \item If $t = \elimtens(a \bullet v_1, x^B y^C.v_2)$ and $u = a \bullet \elimtens(v_1, x^B y^C.v_2)$, then there are $\Gamma_1$ and $\Gamma_2$ such that $\Gamma = \Gamma_1, \Gamma_2$, $\Xi; \Gamma_1 \vdash v_1:B \otimes C$ and $\Xi; \Gamma_2, x^B, y^C \vdash v_2:A$. By rule $\otimes_e$, $\Xi; \Gamma \vdash \elimtens(v_1, x^B y^C.v_2):A$. Therefore, by rule prod($a$), $\Xi; \Gamma \vdash a \bullet \elimtens(v_1, x^B y^C.v_2):A$.

    \item If $t = a \bullet \langle\rangle$ and $u = \langle\rangle$, then $A = \top$. Therefore, by rule $\top_i$, $\Xi; \Gamma \vdash \langle\rangle:\top$.

    \item If $t = a \bullet \langle v_1, v_2 \rangle$ and $u = \langle a \bullet v_1, a \bullet v_2 \rangle$, then $A = B \with C$, $\Xi; \Gamma \vdash v_1:B$ and $\Xi; \Gamma \vdash v_2:C$. By rule prod($a$), $\Xi; \Gamma \vdash a \bullet v_1:B$ and $\Xi; \Gamma \vdash a \bullet v_2:C$. Therefore, by rule $\with_i$, $\Xi; \Gamma \vdash \langle a \bullet v_1, a \bullet v_2 \rangle:B \with C$.

    \item If $t = \elimplus(a \bullet v_1, x^B.v_2, y^C.v_3)$ and $u = a \bullet \elimplus(v_1, x^B.v_2, y^C.v_3)$, then there are $\Gamma_1$ and $\Gamma_2$ such that $\Gamma = \Gamma_1, \Gamma_2$, $\Xi; \Gamma_1 \vdash v_1:B \oplus C$, $\Xi; \Gamma_2, x^B \vdash v_2:A$ and $\Xi; \Gamma_2, y^C \vdash v_3:A$. By rule $\oplus_e$, $\Xi; \Gamma \vdash \elimplus(v_1, x^B.v_2, y^C.v_3):A$. Therefore, by rule prod($a$), $\Xi; \Gamma \vdash a \bullet \elimplus(v_1, x^B.v_2, y^C.v_3):A$.

    \item If $t = a \bullet \Lambda X.v$ and $u = \Lambda X.a \bullet v$, then $A = \forall X.B$, $\Xi; \Gamma \vdash v:B$ and $X \notin \FV(\Xi, \Gamma)$. By rule prod($a$), $\Xi; \Gamma \vdash a \bullet v:B$. Therefore, by rule $\forall_i$, $\Xi; \Gamma \vdash \Lambda X.v:\forall X.B$.

    \item If $t = a \bullet \oc v$ and $u = \oc(a \bullet v)$, then $A = \oc B$ and $\Xi; \varnothing \vdash v:B$. By rule prod($a$), $\Xi; \varnothing \vdash a \bullet v: B$. By rule $\oc_i$, $\Xi; \varnothing \vdash \oc(a \bullet v): \oc B$.\qed
  \end{itemize}
\end{proof}

\section{Proof of Section~\ref{sec:ST}}\label{proof:ST}
\typeinterpretationsarerc*
\begin{proof}
  By induction on $A$.
  \begin{itemize}
    \item If $A = X$, $\rho(X) \in \mathcal R$ since $\rho$ is a valuation.
    \item If $A = \one$, $\SN$ has the properties CR1, CR2, CR3 and CR4.
    \item If $A = B \multimap C$:
      \begin{itemize}
        \item Let $t \in \llbracket B \rrbracket_\rho \hatmultimap \llbracket C \rrbracket_\rho$, then $t \in \SN$.
        \item Let $t \in \llbracket B \rrbracket_\rho \hatmultimap \llbracket C \rrbracket_\rho$ such that $t \lra t'$. Then $t' \in \SN$, and if $t' \lras \lambda x^D.u$, $t \lras \lambda x^D.u$. Therefore, for all $v \in \llbracket B \rrbracket_\rho$, $(v/x)u \in \llbracket C \rrbracket_\rho$.
        \item Let $t$ be a proof-term that is not an introduction such that $Red(t) \subseteq \llbracket B \rrbracket_\rho \hatmultimap \llbracket C \rrbracket_\rho$. Since $Red(t) \subseteq \SN$, $t \in \SN$. If $t \lras \lambda x^D.u$, the rewrite sequence has at least one step because $t$ is not an introduction. Then, there is a proof-term $t' \in Red(t)$ such that $t' \lras \lambda x^D.u$. Therefore, for all $v \in \llbracket B \rrbracket_\rho$, $(v/x)u \in \llbracket C \rrbracket_\rho$.
        \item Let $t \in \llbracket B \rrbracket_\rho \hatmultimap \llbracket C \rrbracket_\rho$ such that $(D/X)t \lras \lambda x^E.u'$. Then, there is a proof-term $u$ such that $t \lras \lambda x^E.u$ and $u' = (D/X)u$. Let $v \in \llbracket B \rrbracket_\rho$, we have that $(v/x)(D/X)u = (X/Y)(D/X)((Y/X)v/x)u$, where $Y$ is a fresh variable. By the induction hypothesis, $(Y/X)v \in \llbracket B \rrbracket_\rho$. Then, $((Y/X)v/x)u \in \llbracket C \rrbracket_\rho$. Therefore, by the induction hypothesis twice, $(X/Y)(D/X)((Y/X)v/x)u \in \llbracket C \rrbracket_\rho$.
      \end{itemize}
    \item If $A = B \otimes C$:
      \begin{itemize}
        \item Let $t \in \llbracket B \rrbracket_\rho \hatotimes \llbracket C \rrbracket_\rho$, then $t \in \SN$.
        \item Let $t \in \llbracket B \rrbracket_\rho \hatotimes \llbracket C \rrbracket_\rho$ such that $t \lra t'$. Then $t' \in \SN$, and if $t' \lras u \otimes v$, $t \lras u \otimes v$. Therefore, $u \in \llbracket B \rrbracket_\rho$ and $v \in \llbracket C \rrbracket_\rho$.
        \item Let $t$ be a proof-term that is not an introduction such that $Red(t) \subseteq \llbracket B \rrbracket_\rho \hatotimes \llbracket C \rrbracket_\rho$. Since $Red(t) \subseteq \SN$, $t \in \SN$. If $t \lras u \otimes v$, the rewrite sequence has at least one step because $t$ is not an introduction. Then, there is a proof-term $t' \in Red(t)$ such that $t' \lras u \otimes v$. Therefore, $u \in \llbracket B \rrbracket_\rho$ and $v \in \llbracket C \rrbracket_\rho$.
        \item Let $t \in \llbracket B \rrbracket_\rho \hatotimes \llbracket C \rrbracket_\rho$ such that $(D/X)t \lras u' \otimes v'$. Then, there are proof-terms $u$ and $v$ such that $t \lras u \otimes v$ with $u' = (D/X)u$ and $v' = (D/X)v$. Then, $u \in \llbracket B \rrbracket_\rho$ and $v \in \llbracket C \rrbracket_\rho$. By the induction hypothesis, $(D/X)u \in \llbracket B \rrbracket_\rho$ and $(D/X)v \in \llbracket C \rrbracket_\rho$.
      \end{itemize}
    \item If $A = \top$, $\SN$ has the properties CR1, CR2, CR3 and CR4.
    \item If $A = \zero$, $\SN$ has the properties CR1, CR2, CR3 and CR4.
    \item If $A = B \with C$:
    \begin{itemize}
      \item Let $t \in \llbracket B \rrbracket_\rho \hatand \llbracket C \rrbracket_\rho$, then $t \in \SN$.
      \item Let $t \in \llbracket B \rrbracket_\rho \hatand \llbracket C \rrbracket_\rho$ such that $t \lra t'$. Then $t' \in \SN$, and if $t' \lras \langle u,v \rangle$, $t \lras \langle u,v \rangle$. Therefore, $u \in \llbracket B \rrbracket_\rho$ and $v \in \llbracket C \rrbracket_\rho$.
      \item Let $t$ be a proof-term that is not an introduction such that $Red(t) \subseteq \llbracket B \rrbracket_\rho \hatand \llbracket C \rrbracket_\rho$. Since $Red(t) \subseteq \SN$, $t \in \SN$. If $t \lras \langle u,v \rangle$, the rewrite sequence has at least one step because $t$ is not an introduction. Then, there is a proof-term $t' \in Red(t)$ such that $t' \lras \langle u,v \rangle$. Therefore, $u \in \llbracket B \rrbracket_\rho$ and $v \in \llbracket C \rrbracket_\rho$.
      \item Let $t \in \llbracket B \rrbracket_\rho \hatand \llbracket C \rrbracket_\rho$ such that $(D/X)t \lras \langle u', v' \rangle$. Then, there are proof-terms $u$ and $v$ such that $t \lras \langle u,v \rangle$ with $u' = (D/X)u$ and $v' = (D/X)v$. Then, $u \in \llbracket B \rrbracket_\rho$ and $v \in \llbracket C \rrbracket_\rho$. By the induction hypothesis, $(D/X)u \in \llbracket B \rrbracket_\rho$ and $(D/X)v \in \llbracket C \rrbracket_\rho$.
    \end{itemize}
    \item If $A = B \oplus C$:
      \begin{itemize}
        \item Let $t \in \llbracket B \rrbracket_\rho \hatoplus \llbracket C \rrbracket_\rho$, then $t \in \SN$.
        \item Let $t \in \llbracket B \rrbracket_\rho \hatoplus \llbracket C \rrbracket_\rho$ such that $t \lra t'$. Then $t' \in \SN$, and if $t' \lras \inl(u)$, $t \lras \inl(u)$. Therefore, $u \in \llbracket B \rrbracket_\rho$. If $t' \lras \inr(v)$, $t \lras \inr(v)$. Therefore, $v \in \llbracket C \rrbracket_\rho$.
        \item Let $t$ be a proof-term that is not an introduction such that $Red(t) \subseteq \llbracket B \rrbracket_\rho \hatoplus \llbracket C \rrbracket_\rho$. Since $Red(t) \subseteq \SN$, $t \in \SN$. If $t \lras \inl(u)$, the rewrite sequence has at least one step because $t$ is not an introduction. Then, there is a proof-term $t' \in Red(t)$ such that $t' \lras \inl(u)$. Therefore, $u \in \llbracket B \rrbracket_\rho$. If $t \lras \inr(v)$, the rewrite sequence has at least one step because $t$ is not an introduction. Then, there is a proof-term $t' \in Red(t)$ such that $t' \lras \inr(v)$. Therefore, $v \in \llbracket C \rrbracket_\rho$.
        \item Let $t \in \llbracket B \rrbracket_\rho \hatoplus \llbracket C \rrbracket_\rho$ such that $(D/X)t \lras \inl(u')$. Then, there is a proof-term $u$ such that $t \lras \inl(u)$ with $u' = (D/X)u$. Then, $u \in \llbracket B \rrbracket_\rho$. By the induction hypothesis, $(D/X)u \in \llbracket B \rrbracket_\rho$. Let $t' \in \llbracket B \rrbracket_\rho \hatoplus \llbracket C \rrbracket_\rho$ such that $(D/X)t' \lras \inr(v')$. Then, there is a proof-term $v$ such that $t' \lras \inr(v)$ with $v' = (D/X)v$. Then, $v \in \llbracket C \rrbracket_\rho$. By the induction hypothesis, $(D/X)v \in \llbracket C \rrbracket_\rho$.
      \end{itemize}
    \item If $A = \oc B$:
      \begin{itemize}
	\item Let $t \in \hatbang\llbracket B\rrbracket_\rho$, then $t\in \SN$.
	\item Let $t \in \hatbang\llbracket B \rrbracket_\rho$ such that $t \lra t'$. Then $t' \in \SN$, and if $t' \lras \oc u$, $t \lras\oc u$. Therefore, $u \in \llbracket B \rrbracket_\rho$. 
        \item Let $t$ be a proof-term that is not an introduction such that $Red(t) \subseteq \hatbang\llbracket B \rrbracket_\rho$. Since $Red(t) \subseteq \SN$, $t \in \SN$. If $t \lras \oc u$, the rewrite sequence has at least one step because $t$ is not an introduction. Then, there is a proof-term $t' \in Red(t)$ such that $t' \lras \oc u$. Therefore, $u \in \llbracket B \rrbracket_\rho$. 
	\item Let $t \in \hatbang \llbracket B \rrbracket_\rho$ such that $(C/X)t \lras \oc u'$. Then, there is a proof-term $u$ such that $t \lras \oc u$ with $u' = (C/X)u$. Then, $u \in \llbracket B \rrbracket_\rho$. By the induction hypothesis, $(C/X)u \in \llbracket B \rrbracket_\rho$. 
      \end{itemize}
    \item If $A = \forall X.B$:
      \begin{itemize}
        \item Let $t \in \llbracket \forall X.B \rrbracket_\rho$ such that $t \lras \Lambda X.u$, then for every proposition $C$ and every $E \in \mathcal R$, $(C/X)u \in \llbracket B \rrbracket_{\rho, E/X}$. By the induction hypothesis, $\llbracket B \rrbracket_{\rho, E/X} \subseteq \SN$. Then, $(C/X)u \in \SN$ and therefore $t \in \SN$.
        \item Let $t \in \llbracket \forall X.B \rrbracket_\rho$ such that $t \lras t'$. Then $t' \in \SN$, and if $t' \lras \Lambda X.u$, $t \lras \Lambda X.u$. Therefore, for every proposition $C$ and every $E \in \mathcal R$, $(C/X)u \in \llbracket B \rrbracket_{\rho, E/X}$.
        \item Let $t$ be a proof-term that is not an introduction such that $Red(t) \subseteq \llbracket \forall X.B \rrbracket_\rho$. Since $Red(t) \subseteq \SN$, $t \in \SN$. If $t \lras \Lambda X.u$, the rewrite sequence has at least one step because $t$ is not an introduction. Then, there is a proof-term $t' \in Red(t)$ such that $t' \lras \Lambda X.u$. Therefore, for every $E \in \mathcal R$ and every proposition $C$, $(C/X)u \in \llbracket B \rrbracket_{\rho, E/X}$.
        \item Let $t \in \llbracket \forall X.B \rrbracket_\rho$. Let $C$ be a proposition and $E \in \mathcal R$, we have that $(D/X)((C/Y)t) = (Y/Z)((C/Y)((Z/Y)D/X)t)$, where $Z$ is a fresh variable. Since $(Z/Y)D$ is a proposition, $((Z/Y)D/X)t \in \llbracket B \rrbracket_{\rho, E/X}$. Therefore, by the induction hypothesis twice, $(Y/Z)((C/Y)((Z/Y)D/X)t) \in \llbracket B \rrbracket_{\rho, E/X}$.
      \end{itemize}
  \end{itemize}
\end{proof}

\subsection{Proof of Adequacy}
\label{proof:Adequacy}

The following auxiliary lemma is needed to prove the adequacy of proposition application.

\begin{lemma}
  \label{lem:substRho}
  For any $A, B$, and valuation $\rho$, $\llbracket (B/X)A \rrbracket_\rho = \llbracket A \rrbracket_{\rho, \llbracket B \rrbracket_\rho/X}$.
\end{lemma}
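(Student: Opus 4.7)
The plan is to proceed by structural induction on the proposition $A$. Throughout, I would assume, by $\alpha$-conversion, that all bound propositional variables occurring in $A$ are distinct from $X$ and from the free variables of $B$, so that the syntactic substitution $(B/X)A$ commutes with descent into subformulas in the obvious way.

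For the base cases, if $A = X$, then $(B/X)A = B$ and the left-hand side is $\llbracket B\rrbracket_\rho$, while the right-hand side is $(\rho,\llbracket B\rrbracket_\rho/X)(X) = \llbracket B\rrbracket_\rho$; if $A = Y$ with $Y \neq X$, both sides reduce to $\rho(Y)$. The cases $A\in\{\one,\top,\zero\}$ are immediate since both sides are $\SN$ and do not depend on $\rho$. For the binary connectives $\multimap,\otimes,\with,\oplus$ and for $\oc$, one unfolds the definition of $\llbracket\cdot\rrbracket$ on both sides; the outer constructors $\hatmultimap,\hatotimes,\hatand,\hatoplus,\hatbang$ are defined purely in terms of the interpretations of the immediate subpropositions, so the induction hypothesis applied componentwise yields the equality.

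The main obstacle is the quantifier case $A = \forall Y.C$. By the $\alpha$-convention we may assume $Y \neq X$ and $Y \notin \FV(B)$, so $(B/X)(\forall Y.C) = \forall Y.(B/X)C$. Unfolding the definition, $t\in\llbracket\forall Y.(B/X)C\rrbracket_\rho$ iff $t\in\SN$ and whenever $t\lras \Lambda Y.u$, for every $E\in\mathcal R$ and every proposition $D$ one has $(D/Y)u \in \llbracket (B/X)C\rrbracket_{\rho,E/Y}$. On the other side, $t\in\llbracket\forall Y.C\rrbracket_{\rho,\llbracket B\rrbracket_\rho/X}$ iff the analogous condition holds with $(D/Y)u\in\llbracket C\rrbracket_{\rho,\llbracket B\rrbracket_\rho/X,E/Y}$. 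To match these, I would apply the induction hypothesis to $C$ with the valuation $\rho,E/Y$, obtaining
\[
  \llbracket (B/X)C\rrbracket_{\rho,E/Y} \;=\; \llbracket C\rrbracket_{\rho,E/Y,\llbracket B\rrbracket_{\rho,E/Y}/X}.
\]
To conclude, it remains to show that $\llbracket B\rrbracket_{\rho,E/Y} = \llbracket B\rrbracket_\rho$, since $Y\notin\FV(B)$, and then to observe that the resulting valuations $\rho,E/Y,\llbracket B\rrbracket_\rho/X$ and $\rho,\llbracket B\rrbracket_\rho/X,E/Y$ agree on all relevant variables ($X\neq Y$), so they give the same interpretation of $C$.

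I therefore expect to need an auxiliary \emph{coincidence lemma}: if $Y\notin\FV(A)$ then $\llbracket A\rrbracket_\rho = \llbracket A\rrbracket_{\rho,E/Y}$ for every $E\in\mathcal R$. This is itself a routine induction on $A$, with only the $\forall$ case requiring the standard renaming argument, and it should be proved just before the present lemma (or simultaneously by mutual induction). Once it is in place, the $\forall$ case of the main statement closes cleanly, and all other cases are a direct application of the induction hypothesis to the immediate subpropositions.
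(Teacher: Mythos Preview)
Your proposal is correct and follows essentially the same structural induction on $A$ as the paper. The paper handles the $\forall Y.C$ case by directly writing $\llbracket (B/X)C\rrbracket_{\rho,E/Y} = \llbracket C\rrbracket_{\rho',E/Y}$ after invoking the induction hypothesis, leaving implicit both the $\alpha$-convention ($Y\neq X$, $Y\notin\FV(B)$) and the coincidence fact $\llbracket B\rrbracket_{\rho,E/Y}=\llbracket B\rrbracket_\rho$ that you make explicit; your version is simply a more careful rendering of the same argument.
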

\begin{proof}
  By induction on $A$. Let $\rho' = \rho, \llbracket B \rrbracket_\rho/X$.
  \begin{itemize}
    \item If $A = X$, we have $\llbracket X \rrbracket_{\rho'} = \rho'(X) = \llbracket B \rrbracket_\rho = \llbracket (B/X)X \rrbracket_\rho$.
    \item If $A = Y \neq X$, we have $\llbracket Y \rrbracket_{\rho'} = \rho'(Y) = \rho(Y) = \llbracket Y \rrbracket_\rho = \llbracket (B/X)Y \rrbracket_\rho$.
    \item If $A = \one$, we have $\llbracket (B/X)\one \rrbracket_\rho = \SN = \llbracket \one \rrbracket_{\rho'}$.
    \item If $A = C \multimap D$, we have $\llbracket (B/X)(C \multimap D) \rrbracket_\rho = \llbracket (B/X)C \rrbracket_\rho \hatmultimap \llbracket (B/X)D \rrbracket_\rho$. By the induction hypothesis, $\llbracket (B/X)C \rrbracket_\rho = \llbracket C \rrbracket_{\rho'}$ and $\llbracket (B/X)D \rrbracket_\rho = \llbracket D \rrbracket_{\rho'}$. Therefore, $\llbracket (B/X)(C \multimap D) \rrbracket_\rho = \llbracket C \multimap D \rrbracket_{\rho'}$.
    \item If $A = C \otimes D$, we have $\llbracket (B/X)(C \otimes D) \rrbracket_\rho = \llbracket (B/X)C \rrbracket_\rho \hatotimes \llbracket (B/X)D \rrbracket_\rho$. By the induction hypothesis, $\llbracket (B/X)C \rrbracket_\rho = \llbracket C \rrbracket_{\rho'}$ and $\llbracket (B/X)D \rrbracket_\rho = \llbracket D \rrbracket_{\rho'}$. Therefore, $\llbracket (B/X)(C \otimes D) \rrbracket_\rho = \llbracket C \otimes D \rrbracket_{\rho'}$.
    \item If $A = \top$, we have $\llbracket (B/X)\top \rrbracket_\rho = \SN = \llbracket \top \rrbracket_{\rho'}$.
    \item If $A = \zero$, we have $\llbracket (B/X)\zero \rrbracket_\rho = \SN = \llbracket \zero \rrbracket_{\rho'}$.
    \item If $A = C \with D$, we have $\llbracket (B/X)(C \with D) \rrbracket_\rho = \llbracket (B/X)C \rrbracket_\rho \hatand \llbracket (B/X)D \rrbracket_\rho$. By the induction hypothesis, $\llbracket (B/X)C \rrbracket_\rho = \llbracket C \rrbracket_{\rho'}$ and $\llbracket (B/X)D \rrbracket_\rho = \llbracket D \rrbracket_{\rho'}$. Therefore, $\llbracket (B/X)(C \with D) \rrbracket_\rho = \llbracket C \with D \rrbracket_{\rho'}$.
    \item If $A = C \oplus D$, we have $\llbracket (B/X)(C \oplus D) \rrbracket_\rho = \llbracket (B/X)C \rrbracket_\rho \hatoplus \llbracket (B/X)D \rrbracket_\rho$. By the induction hypothesis, $\llbracket (B/X)C \rrbracket_\rho = \llbracket C \rrbracket_{\rho'}$ and $\llbracket (B/X)D \rrbracket_\rho = \llbracket D \rrbracket_{\rho'}$. Therefore, $\llbracket (B/X)(C \oplus D) \rrbracket_\rho = \llbracket C \oplus D \rrbracket_{\rho'}$.
    \item If $A = \oc C$, we have $\llbracket (B/X)(\oc C) \rrbracket_\rho = \hatbang \llbracket (B/X)C \rrbracket_\rho$. By the induction hypothesis, $\llbracket (B/X)C \rrbracket_{\rho} = \llbracket C \rrbracket_{\rho'}$. Therefore, $\llbracket (B/X)(\oc C) \rrbracket_{\rho} = \llbracket \oc C \rrbracket_{\rho'}$.
    \item If $A = \forall Y.C$, we have $\llbracket (B/X) \forall Y.C \rrbracket_\rho = \llbracket \forall Y.(B/X)C \rrbracket_\rho$. By definition, $t \in \llbracket \forall Y.(B/X)C \rrbracket_\rho$ if and only if $t \in \SN$ and if $t \lras \Lambda Y.u$, then for every proposition $D$ and every $E \in \mathcal R$, $(D/Y)u \in \llbracket (B/X)C \rrbracket_{\rho, E/Y}$. By the induction hypothesis, $\llbracket (B/X)C \rrbracket_{\rho, E/Y} = \llbracket C \rrbracket_{\rho', E/Y}$. Therefore, $\llbracket \forall Y.(B/X)C \rrbracket_\rho = \llbracket \forall Y.C \rrbracket_{\rho'}$.
      \qed
  \end{itemize}
\end{proof}

In Lemmas~\ref{lem:sum} to~{\ref{lem:typeapplication}}, we prove the adequacy of each proof-term constructor. If $t$ is a strongly normalising proof-term, we write $|t|$ for the maximum length of a reduction sequence issued from $t$.

\begin{lemma}[Normalisation of a sum]
    \label{lem:terminationsum}
    If $t$ and $u$ strongly normalise, then so does $t \plus u$. 
  \end{lemma}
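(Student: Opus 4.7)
Since $t,u\in\SN$, the quantities $|t|$ and $|u|$ (maximal reduction length) are well-defined natural numbers. The plan is to proceed by well-founded induction on the lexicographic pair $\langle\,|t|+|u|,\ \mathsf{size}(t)+\mathsf{size}(u)\,\rangle$. To conclude $t\plus u\in\SN$, it suffices to show that every one-step reduct of $t\plus u$ is in $\SN$, and the reducts split cleanly into three families.

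First, the internal reductions $t\plus u\lra t'\plus u$ with $t\lra t'$ (and symmetrically on the right): here $t'\in\SN$ and $|t'|+|u|<|t|+|u|$, so the first coordinate of the measure strictly decreases and the induction hypothesis yields $t'\plus u\in\SN$. Second, the two ultra-reductions $t\plus u\lra t$ and $t\plus u\lra u$ give reducts that are in $\SN$ by assumption. Third, the head commutations, which fire exactly when $t$ and $u$ share a common outermost introduction: the scalar rule produces the irreducible $(a+b).\star$ and the rule $\topintro\plus\topintro\lra\topintro$ is immediate; the remaining rules produce $\lambda x^A.(t_1\plus u_1)$, $\oc(t_1\plus u_1)$, $\Lambda X.(t_1\plus u_1)$, or $\pair{t_1\plus v_1}{t_2\plus v_2}$.

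For each such head commutation, every inner sum $s_1\plus s_2$ appearing in the reduct has $s_1,s_2$ as strict subterms of $t$ and $u$ respectively. Since reductions of a subterm lift to reductions of the whole, $|s_1|\le|t|$ and $|s_2|\le|u|$, so $|s_1|+|s_2|\le|t|+|u|$, while the size strictly decreases. The induction hypothesis then gives $s_1\plus s_2\in\SN$, and wrapping this inside the common outer constructor preserves $\SN$ (that constructor admits no head reduction when taken in isolation; in the pair case, the IH is invoked twice, once per component).

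The main obstacle, and the reason the induction must be lexicographic rather than a single induction on $|t|+|u|$, is precisely the head commutation case: pushing $\plus$ under a constructor does not decrease maximal reduction length, but it does strictly decrease the secondary size measure, which is what makes the recursion terminate.
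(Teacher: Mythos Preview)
Your proof is correct and follows essentially the same approach as the paper: a lexicographic induction on $(|t|+|u|,\text{size})$, followed by a case analysis on the one-step reducts of $t\plus u$ covering internal reductions, the ultra-reductions to $t$ and to $u$, and the head commutations that push $\plus$ under a common introduction. The only cosmetic difference is your choice of secondary component $\mathsf{size}(t)+\mathsf{size}(u)$ where the paper uses $\mathsf{size}(t)$; either works, since the head commutations strictly shrink both.
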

  \begin{proof}
    We prove that all the one-step reducts of 
    $t \plus u$ strongly normalise, by induction first on 
    $|t| + |u|$ and then on the size of $t$. 
  
    If the reduction takes place in $t$ or in $u$ we apply the induction
    hypothesis.
    Otherwise, the reduction occurs at the root and the rule used is either
    \begin{align*}
      {a.\star} \plus {b.\star} &\lra  (a + b).\star\\
      (\lambda x^A.t') \plus (\lambda x^A.u')
      &\lra  \lambda x^A.(t' \plus u')\\
      \langle \rangle \plus \langle \rangle &\lra \langle \rangle\\
      \pair{t'_1}{t'_2} \plus \pair{u'_1}{u'_2}
      &\lra  \pair{t'_1 \plus u'_1}{t'_2 \plus u'_2}\\
      (\Lambda X.t') \plus (\Lambda X.u') &\lra
      \Lambda X.(t' \plus u')\\
      \oc t' \plus \oc u' &\lra \oc(t' \plus u')\\
      t \plus u &\lra t\\
      t \plus u &\lra u
    \end{align*}
  
    In the first case, the proof-term $(a + b).\star$ is irreducible, hence it
    strongly normalises. In the second, {and fifth, and sixth}, 
    by induction hypothesis, the proof-term $t' \plus u'$
    strongly normalises, thus so do the proof-terms
    $\lambda x^A.(t' \plus u')$, { $\Lambda X.(t' \plus u')$, and $\oc (t'\plus u')$}.
    In the third, the proof-term $\langle \rangle$ is irreducible, hence it
    strongly normalises. 
    In the fourth, 
    by induction hypothesis, the proof-terms
    $t'_1 \plus u'_1$ and $t'_2 \plus u'_2$
    strongly normalise, hence so does the proof-term
    $\pair{t'_1 \plus u'_1}{t'_2 \plus u'_2}$.
    In the seventh and eighth, the proof-terms $t$ and $u$ strongly normalise. 
    \qed
  \end{proof}
  
  \begin{lemma}[Normalisation of a product]
  \label{lem:terminationprod}
  If $t$ strongly normalises, then so does $a \bullet t$. 
  \end{lemma}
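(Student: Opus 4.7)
The plan is to mirror the proof of Lemma~\ref{lem:terminationsum}. Assume $t \in \SN$. We show that every one-step reduct of $a \bullet t$ is in $\SN$, by induction first on $|t|$ and then on the size of $t$; strong normalisation of $a \bullet t$ then follows.

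If the reduction takes place inside $t$, say $a \bullet t \lra a \bullet t'$ with $t \lra t'$, then $|t'| < |t|$ and the induction hypothesis applies directly to $a \bullet t'$. Otherwise the reduction occurs at the root, and the rule used is one of
\[
\begin{array}{l}
a \bullet b.\star \lra (a \times b).\star\\
a \bullet \langle\rangle \lra \langle\rangle\\
a \bullet \lambda x^A.t' \lra \lambda x^A. a \bullet t'\\
a \bullet \oc t' \lra \oc(a \bullet t')\\
a \bullet \Lambda X.t' \lra \Lambda X. a \bullet t'\\
a \bullet \pair{t_1}{t_2} \lra \pair{a \bullet t_1}{a \bullet t_2}\\
a \bullet t \lra t\quad\text{(ultra-reduction).}
\end{array}
\]
In the first two cases the reduct is irreducible, hence in $\SN$; in the last, $t \in \SN$ by hypothesis. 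In the four remaining cases the reduct is built by pushing $a \bullet$ inside an introduction, so strong normalisation of $a \bullet t'$ (respectively of both $a \bullet t_1$ and $a \bullet t_2$) suffices, since an introduction admits only internal reductions. Since $t$ itself is also an introduction in each of these four cases, it admits no root reduction, so $|t'| = |t|$ (respectively $|t_i| \le |t|$), while the size of the inner sub-proof-term is strictly less than that of $t$; hence the lexicographic measure strictly decreases and the induction hypothesis applies.

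No real obstacle is anticipated: the structure is entirely parallel to that of Lemma~\ref{lem:terminationsum}, and the new commutation rule $a \bullet \oc t' \lra \oc(a \bullet t')$ fits the same pattern without complication. The ultra-reduction rule $a \bullet t \lra t$ is handled immediately by the hypothesis on $t$.
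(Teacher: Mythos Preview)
Your proposal is correct and follows essentially the same approach as the paper's proof: induction on the lexicographic pair $(|t|,\text{size of }t)$, with the same case analysis on the root rules (including the ultra-reduction rule) and the same observation that introductions admit only internal reductions. Your justification for why the induction hypothesis applies in the commutation cases is in fact slightly more explicit than the paper's, but the argument is the same.
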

  \begin{proof}
  We prove that all the one-step reducts of 
  $a \bullet t$ strongly normalise, by induction first on 
  $|t|$ and then on the size of $t$. 
  
  If the reduction takes place in $t$, we apply the induction
  hypothesis.
  Otherwise, the reduction occurs at the root, and the rule used is either
  \begin{align*}
  a \bullet b.\star &\lra  (a \times b).\star\\
  a \bullet (\lambda x^A.t') 
  &\lra  \lambda x^A. a \bullet t'\\
  a \bullet \langle \rangle &\lra  \langle \rangle\\
  a \bullet \pair{t'_1}{t'_2} 
      &\lra  \pair{a \bullet t'_1}{a \bullet t'_2}\\
  a \bullet (\Lambda X.t')
      &\lra \Lambda X.a \bullet t'\\
  a \bullet \oc t'
      &\lra \oc (a \bullet t')\\
  a \bullet t &\lra t
  \end{align*}
  In the first case, the proof-term $(a \times b).\star$ is irreducible,
  hence it strongly normalises. In the second, { fifth, and sixth}, by induction hypothesis,
  the proof-term $a \bullet t'$ strongly normalises, thus so do the proof-terms
  $\lambda x^A.a \bullet t'$, { $\Lambda X. a \bullet t'$, and $\oc (a \bullet t')$}.  In the third, the proof-term $\langle
  \rangle$ is irreducible, hence it strongly normalises.  In the
  fourth, by induction hypothesis, the proof-terms $a \bullet t'_1$ and
  $a \bullet t'_2$ strongly normalise, hence so does the proof-term
  $\pair{a \bullet t'_1}{a \bullet t'_2}$. In the seventh, the proof-term $t$
  strongly normalises. \qed
  \end{proof}

\begin{lemma}[Adequacy of $\plus$]
\label{lem:sum}
{For every valuation $\rho$, if $\Xi;\Gamma \vdash t_1:A$, $\Xi;\Gamma \vdash t_2:A$,} $t_1 \in \llbracket A \rrbracket_{\rho}$ and $t_2 \in \llbracket A
\rrbracket_{\rho}$, then $t_1 \plus t_2 \in \llbracket A \rrbracket_{\rho}$.
\end{lemma}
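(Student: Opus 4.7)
The plan is to show $t_1 \plus t_2 \in \llbracket A \rrbracket_\rho$ by applying CR3, which is available since $\llbracket A \rrbracket_\rho$ is a reducibility candidate by Lemma~\ref{lem:typeinterpretationsarerc}. Strong normalisation of $t_1 \plus t_2$ follows from Lemma~\ref{lem:terminationsum} together with $t_1, t_2 \in \SN$ by CR1, and $t_1 \plus t_2$ is not an introduction by construction. So CR3 reduces the problem to showing that every one-step reduct of $t_1 \plus t_2$ lies in $\llbracket A \rrbracket_\rho$. I would carry this out by a double induction, with outer induction on the proposition $A$ and inner induction on $|t_1| + |t_2|$.

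The reducts fall into three families. Ultra-reductions $t_1 \plus t_2 \lra t_1$ and $t_1 \plus t_2 \lra t_2$ are immediate since $t_i \in \llbracket A \rrbracket_\rho$ by assumption. Internal reductions $t_1 \plus t_2 \lra t_1' \plus t_2$ (or symmetrically in $t_2$) are handled by CR2, which yields $t_1' \in \llbracket A \rrbracket_\rho$, together with the inner induction hypothesis applied to the strictly smaller pair $(t_1',t_2)$; typing of the components is preserved under the reduction by Theorem~\ref{thm:SR}.

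The substantive case is a root commutation rule, which applies only when $t_1$ and $t_2$ are introductions whose shapes match, forcing $A$ to have the corresponding connective at the head. For example, for $(\lambda x^B.u) \plus (\lambda x^B.v) \lra \lambda x^B.(u \plus v)$ at $A = B \multimap C$, for any $w \in \llbracket B \rrbracket_\rho$ one gets $(w/x)u, (w/x)v \in \llbracket C \rrbracket_\rho$ by unfolding $\hatmultimap$; the outer induction hypothesis on the strictly smaller proposition $C$ then yields $(w/x)u \plus (w/x)v \in \llbracket C \rrbracket_\rho$, and for any further reduct $\lambda x^B.r$ of $\lambda x^B.(u \plus v)$ we have $(w/x)(u \plus v) \lras (w/x)r$, so CR2 gives $(w/x)r \in \llbracket C \rrbracket_\rho$, establishing $\lambda x^B.(u \plus v) \in \llbracket B \multimap C \rrbracket_\rho$. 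The cases for $\pair{-}{-}$, $\oc$, and $\Lambda X.{-}$ proceed analogously by invoking the outer induction on the relevant subpropositions; the zero-ary root rules $a.\star \plus b.\star \lra (a+b).\star$ and $\langle\rangle \plus \langle\rangle \lra \langle\rangle$ are immediate since $\llbracket \one \rrbracket_\rho = \llbracket \top \rrbracket_\rho = \SN$.

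The main obstacle I anticipate is the $\forall$ case: for $(\Lambda X.u) \plus (\Lambda X.v) \lra \Lambda X.(u \plus v)$ at $A = \forall X.B$, the outer induction must be invoked at $B$ but with the \emph{extended} valuation $\rho, E/X$ for an arbitrary $E \in \mathcal R$, so the statement of the lemma has to be read uniformly in $\rho$. Once this is in place, every other case reduces to routine unfolding of definitions and applications of CR2 and the two induction hypotheses.
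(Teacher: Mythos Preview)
Your approach is correct but organised differently from the paper's. Both argue by induction on $A$ and obtain $t_1\plus t_2\in\SN$ from Lemma~\ref{lem:terminationsum}, but the paper invokes CR3 together with an inner induction on $|t_1|+|t_2|$ \emph{only} in the base case $A=X$; for each connective it instead verifies directly the defining clause of $\llbracket A\rrbracket_\rho$ (strong normalisation plus the ``if $t_1\plus t_2\lras\text{introduction}$ then \dots'' condition) by a multi-step case analysis on how such a reduction can arise---either both $t_i$ first reduce to matching introductions and the commutation rule fires (and then the outer induction hypothesis handles the summed components, followed by CR2), or an ultra-reduction step discards one summand. Your uniform CR3 route is tidier in that it replaces this multi-step analysis by one-step reducts plus the inner induction, at the cost of carrying that inner induction through every case and of having to argue separately, for each commutation reduct (which \emph{is} an introduction, so CR3 no longer applies to it), that it lies in $\llbracket A\rrbracket_\rho$. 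Both routes make the same essential appeal to the outer induction hypothesis at the immediate subpropositions, and your observation that the $\forall$ case requires invoking it at the extended valuation $\rho,E/X$ matches the paper exactly.
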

\begin{proof}
  By induction on $A$.
  The proof-terms $t_1$ and $t_2$ strongly normalise.  Thus, by
  Lemma~\ref{lem:terminationsum}, the proof-term $t_1 \plus t_2$ strongly
  normalises.
  Furthermore:
  \begin{itemize}
    \item If the proposition $A$ has the form $X$, then $t_1, t_2 \in \rho(X) \in \mathcal R$. Using CR3, we need to prove that each of the one step reducts of $t_1 \plus t_2$ is in $\rho(X)$. Since $t_1, t_2 \in \rho(X)$, we have that $t_1, t_2 \in \SN$. We proceed by induction on $|t_1| + |t_2|$.
      \begin{itemize}
	\item If $t_1 \lra t_1'$, then $t_1 \plus t_2 \lra t_1' \plus t_2$. By CR2, $t_1' \in \rho(X)$. Since $|t_1'| < |t_1|$, by the induction hypothesis $t_1' \plus t_2 \in \rho(X)$.
	\item  If $t_2 \lra t_2'$, then $t_1 \plus t_2 \lra t_1 \plus t_2'$. By CR2, $t_2' \in \rho(X)$. Since $|t_2'| < |t_2|$, by the induction hypothesis $t_1 \plus t_2' \in \rho(X)$.
	\item By ultra-reduction, we have that $t_1 \plus t_2 \lra t_1$. By hypothesis, $t_1 \in \rho(X)$.
	\item By ultra-reduction, we have that $t_1 \plus t_2 \lra t_2$. By hypothesis, $t_2 \in \rho(X)$.
	\item There are no more cases since $t_1$ and $t_2$ are proof-terms of $X$.
      \end{itemize}
    \item If the proposition $A$ has the form $B \multimap C$, and $t_1
      \plus t_2 \lra^* \lambda x^B. v$ then either $t_1 \lra^*
      \lambda x^B. u_1$, $t_2 \lra^* \lambda x^B. u_2$, and $u_1
      \plus u_2 \lra^* v$, or $t_1 \lra^* \lambda x^B. v$, or $t_2
      \lra^* \lambda x^B. v$.

      In the first case, as $t_1$ and $t_2$ are in $\llbracket A
      \rrbracket_{\rho}$, for every $w$ in $\llbracket B \rrbracket_{\rho}$, $(w/x)u_1
      \in \llbracket C \rrbracket_{\rho}$ and $(w/x)u_2 \in \llbracket C
      \rrbracket_{\rho}$.  By induction hypothesis, $(w/x)(u_1 \plus u_2) =
      (w/x)u_1 \plus (w/x)u_2 \in \llbracket C \rrbracket_{\rho}$ and by
      {CR2}, $(w/x)v \in \llbracket C \rrbracket_{\rho}$.

      In the second and the third, as $t_1$ and $t_2$ are in $\llbracket A
      \rrbracket_{\rho}$, for every $w$ in $\llbracket B \rrbracket_{\rho}$, $(w/x)v \in
      \llbracket C \rrbracket_{\rho}$.

    \item If the proposition $A$ has the form $B \otimes C$, and $t_1
      \plus t_2 \lra^* v \otimes v'$ then $t_1 \lra^* v \otimes v'$, or
      $t_2 \lra^* v \otimes v'$.  As $t_1$ and $t_2$ are in $\llbracket
      A \rrbracket_{\rho}$, $v \in \llbracket B \rrbracket_{\rho}$ and $v' \in
      \llbracket C \rrbracket_{\rho}$.

    \item If the proposition $A$ has the form $B \with C$, and $t_1 \plus t_2
      \lra^* \pair{v}{v'}$ then $t_1 \lra^* \pair{u_1}{u'_1}$, $t_2 \lra^*
      \pair{u_2}{u'_2}$, $u_1 \plus u_2 \lra^* v$, and $u'_1 \plus u'_2
      \lra^* v'$, or $t_1 \lra^* \pair{v}{v'}$, or $t_2 \lra^*
      \pair{v}{v'}$.

      In the first case, as $t_1$ and $t_2$ are in $\llbracket A
      \rrbracket_{\rho}$, $u_1$ and $u_2$ are in $\llbracket B \rrbracket_{\rho}$ and
      $u'_1$ and $u'_2$ are in $\llbracket C \rrbracket_{\rho}$.  By induction
      hypothesis, $u_1 \plus u_2 \in \llbracket B \rrbracket_{\rho}$ and $u'_1
      \plus u'_2 \in \llbracket C \rrbracket_{\rho}$ and by {CR2}, $v \in \llbracket B \rrbracket_{\rho}$ and $v'
      \in \llbracket C \rrbracket_{\rho}$.

      In the second and the third, as $t_1$ and $t_2$ are in $\llbracket A
      \rrbracket_{\rho}$, $v \in \llbracket B \rrbracket_{\rho}$ and $v' \in \llbracket
      C \rrbracket_{\rho}$.

    \item If the proposition $A$ has the form $B \oplus C$, and $t_1 \plus
      t_2 \lra^* \inl(v)$ then $t_1 \lra^* \inl(v)$ or $t_2 \lra^*
      \inl(v)$.  As $t_1$ and $t_2$ are in $\llbracket A \rrbracket_{\rho}$, $v
      \in \llbracket B \rrbracket_{\rho}$.

      The proof is similar if $t_1 \plus t_2 \lra^* \inr(v)$.
    \item
	If the proposition $A$ has the form $\oc B$, and $t_1 \plus t_2
	\lra^* \oc v$ then $t_1 \lra^*\oc u_1$, $t_2 \lra^*
	\oc u_2$, $u_1 \plus u_2 \lra^* v$, or $t_1 \lra^* \oc v$, or $t_2 \lra^* \oc v$.
	In the first case, as $t_1$ and $t_2$ are in $\llbracket A
	\rrbracket_{\rho}$, $u_1$ and $u_2$ are in $\llbracket B \rrbracket_{\rho}$.
	By induction hypothesis, $u_1 \plus u_2 \in \llbracket B \rrbracket_{\rho}$,
	and by {CR2}, $v \in \llbracket B \rrbracket_{\rho}$.
	In the second and the third, as $t_1$ and $t_2$ are in $\llbracket A
	\rrbracket_{\rho}$, $v \in \llbracket B \rrbracket_{\rho}$.
    \item
	If the proposition $A$ has the form $\forall X.B$, and $t_1 \plus t_2 \lras \Lambda X.v$ then either $t_1 \lras \Lambda X.u_1$, $t_2 \lras \Lambda X.u_2$ and $u_1 \plus u_2 \lras v$, or $t_1 \lras \Lambda X.v$, or $t_2 \lras \Lambda X.v$. Let $E \in \mathcal R$ and $C$ be a proposition.
	In the first case, as $t_1, t_2 \in \llbracket \forall X.B \rrbracket_\rho$, $(C/X)u_1, (C/X)u_2 \in \llbracket B \rrbracket_{\rho, E/X}$. By the induction hypothesis, $(C/X)u_1 \plus (C/X)u_2 = (C/X)(u_1 \plus u_2) \in \llbracket B \rrbracket_{\rho, E/X}$. By CR2, since $(C/X)(u_1 \plus u_2) \lras (C/X)v$, $(C/X)v \in \llbracket B \rrbracket_{\rho, E/X}$.
	In the second and third cases, as $t_1, t_2 \in \llbracket \forall X.B \rrbracket_\rho$, $(C/X)v \in \llbracket B \rrbracket_{\rho, E/X}$.\qed
  \end{itemize}
\end{proof}

\begin{lemma}[Adequacy of $\bullet$]
\label{lem:prod}
{For every valuation $\rho$, if $\Xi;\Gamma \vdash t:A$ and} $t \in \llbracket A \rrbracket_{\rho}$, then $a \bullet t \in \llbracket A
\rrbracket_{\rho}$.
\end{lemma}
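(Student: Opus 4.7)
The plan is to mirror the structure of Lemma~\ref{lem:sum}, replacing the role of Lemma~\ref{lem:terminationsum} by Lemma~\ref{lem:terminationprod}. From $t \in \llbracket A \rrbracket_\rho$, Lemma~\ref{lem:typeinterpretationsarerc} together with CR1 yields $t \in \SN$, so Lemma~\ref{lem:terminationprod} immediately gives $a \bullet t \in \SN$. The rest of the argument will be a case analysis on the shape of $A$, exactly as in the sum case, and at each step I will isolate the possible ways that $a \bullet t$ can reduce to an introduction of the appropriate connective.

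For $A = X$, the set $\rho(X)$ is a reducibility candidate and I will apply CR3: the one-step reducts of $a \bullet t$ are either obtained from $t \lra t'$ (handled by a secondary induction on $|t|$ and CR2, as in the $\plus$ case) or arise from the ultra-reduction $a \bullet t \lra t$, in which case the conclusion follows directly from $t \in \rho(X)$. The cases $A = \one$, $A = \top$, $A = \zero$ amount to strong normalisation, already established. For $A = B \otimes C$ and $A = B \oplus C$ no commutation rule pushes $a \bullet$ past a tensor pair, an $\inl$, or an $\inr$, so the only way a reduct of $a \bullet t$ can take such a shape is through the ultra-reduction $a \bullet t \lra t$; thus the desired property transfers directly from $t \in \llbracket A \rrbracket_\rho$.

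The more substantive cases are $B \multimap C$, $B \with C$, $\oc B$, and $\forall X.B$, where the commutation rules $a \bullet \lambda x^B.u \lra \lambda x^B. a \bullet u$, $a \bullet \pair{u_1}{u_2} \lra \pair{a \bullet u_1}{a \bullet u_2}$, $a \bullet \oc u \lra \oc(a \bullet u)$, and $a \bullet \Lambda X.u \lra \Lambda X. a \bullet u$ do apply. In each of these cases, if $a \bullet t \lras I$ with $I$ an introduction of the appropriate shape, the reduction either (i) factors through the corresponding commutation, so that $t \lras I'$ where $I'$ is the introduction with an inner $a \bullet (\cdot)$, and then the induction hypothesis on the strictly smaller subproposition gives membership for the inner subterm, propagated to $I$ via CR2, or (ii) factors through the ultra-reduction $a \bullet t \lra t \lras I$, reducing the claim to the hypothesis on $t$.

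The one step that deserves care, and which I expect to be the only mildly delicate point, is the $\forall X.B$ case. There, for a reduct $a \bullet t \lras \Lambda X.v$, I pick an arbitrary $E \in \mathcal R$ and proposition $C$ and must show $(C/X)v \in \llbracket B \rrbracket_{\rho, E/X}$. In the commutation sub-case, $t \lras \Lambda X.u$ with $a \bullet u \lras v$; the hypothesis $t \in \llbracket \forall X.B \rrbracket_\rho$ gives $(C/X)u \in \llbracket B \rrbracket_{\rho, E/X}$, the induction hypothesis at $B$ yields $a \bullet (C/X)u = (C/X)(a \bullet u) \in \llbracket B \rrbracket_{\rho, E/X}$, and CR2 transports this along $(C/X)(a \bullet u) \lras (C/X)v$. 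In the ultra-reduction sub-case, $t \lras \Lambda X.v$ and the claim is immediate from $t \in \llbracket \forall X.B \rrbracket_\rho$.
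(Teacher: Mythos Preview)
Your proposal is correct and follows essentially the same route as the paper's proof: induction on $A$, strong normalisation via Lemma~\ref{lem:terminationprod}, then for each connective the same dichotomy between the commutation sub-case and the ultra-reduction sub-case, with CR2 and the induction hypothesis on the subproposition closing the former. Your detailed treatment of the $\forall X.B$ case matches the paper's argument verbatim; the only imprecision is the sentence ``$t \lras I'$ where $I'$ is the introduction with an inner $a \bullet(\cdot)$'' --- it is $a\bullet t$, not $t$, that reduces to the introduction carrying the inner $a\bullet$ (after $t \lras \lambda x^B.u$, say, one commutes to $\lambda x^B.a\bullet u$) --- but your later spelled-out $\forall$ case shows you have the mechanism right.
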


\begin{proof}
  By induction on $A$.  The proof-term $t$ strongly normalises.  Thus, by
  Lemma~\ref{lem:terminationprod}, the proof-term $a \bullet t$ strongly
  normalises.  Furthermore:

  \begin{itemize}
      \item If the proposition $A$ has the form $X$, then $t \in \rho(X) \in \mathcal R$. Using CR3, we need to prove that each of the one step reducts of $a \bullet t$ is in $\rho(X)$. Since $t \in \rho(X)$, we have that $t \in \SN$. We proceed by induction on $|t|$.
	\begin{itemize}
	  \item If $t \lra t'$, $a \bullet t \lra a \bullet t'$. By CR2, $t' \in \rho(X)$. Since $|t'| < |t|$, by the induction hypothesis, $a \bullet t' \in \rho(X)$.
	  \item By ultra-reduction, we have that $a \bullet t \lra t$. By hypothesis, $t \in \rho(X)$.
	  \item There are no more cases since $t_1$ and $t_2$ are proof-terms of $X$.
	\end{itemize}
    \item If the proposition $A$ has the form $B \multimap C$, and $a
      \bullet t \lra^* \lambda x^B. v$ then either $t \lra^* \lambda
      x^B. u$ and $a \bullet u \lra^* v$, or $t \lra^* \lambda
      x^B. v$.

      In the first case, as $t$ is in $\llbracket A \rrbracket_{\rho}$, for
      every $w$ in $\llbracket B \rrbracket_{\rho}$, $(w/x)u \in \llbracket C
      \rrbracket_{\rho}$.  By induction hypothesis, $(w/x) (a \bullet u) = a
      \bullet (w/x)u \in \llbracket C \rrbracket_{\rho}$ and by
      {CR2}, $(w/x)v \in \llbracket C \rrbracket_{\rho}$.

      In the second, as $t$ is in $\llbracket A \rrbracket_{\rho}$, for every $w$
      in $\llbracket B \rrbracket_{\rho}$, $(w/x)v \in \llbracket C \rrbracket_{\rho}$.

    \item If the proposition $A$ has the form $B \otimes C$, and $a
      \bullet t \lra^* v \otimes v'$ then $t \lra^* v \otimes v'$.  As $t$
      is in $\llbracket A \rrbracket_{\rho}$, $v \in \llbracket B \rrbracket_{\rho}$ and
      $v' \in \llbracket C \rrbracket_{\rho}$.

    \item If the proposition $A$ has the form $B \with C$, and $a \bullet t
      \lra^* \pair{v}{v'}$ then $t \lra^* \pair{u}{u'}$, $a \bullet u
      \lra^* v$, and $a \bullet u' \lra^* v'$, or $t \lra^* \pair{v}{v'}$.

      In the first case, as $t$ is in $\llbracket A \rrbracket_{\rho}$, $u$ is in
      $\llbracket B \rrbracket_{\rho}$ and $u'$ is in $\llbracket C \rrbracket_{\rho}$.
      By induction hypothesis, $a \bullet u \in \llbracket B \rrbracket_{\rho}$
      and $a \bullet u' \in \llbracket C \rrbracket_{\rho}$ and by
      {CR2}, $v \in \llbracket B \rrbracket_{\rho}$ and $v'
      \in \llbracket C \rrbracket_{\rho}$.

      In the second, as $t$ is in $\llbracket A \rrbracket_{\rho}$, $v \in
      \llbracket B \rrbracket_{\rho}$ and $v' \in \llbracket C \rrbracket_{\rho}$.

    \item If the proposition $A$ has the form $B \oplus C$, and $a \bullet t
      \lra^* \inl(v)$ then $t \lra^* \inl(v)$.
      Then, by {CR2},
      $\inl(v) \in \llbracket A \rrbracket_{\rho}$ hence, $v \in \llbracket
      B \rrbracket_{\rho}$.

      The proof is similar if $a \bullet t \lra^* \inr(v)$.
    \item 
	If the proposition $A$ has the form $\oc B$, and $a \bullet t \lra^* \oc v$ then $t \lra^* \oc u$ and $a \bullet u \lra^* v$, or $t \lra^* \oc v$.
	In the first case, as $t$ is in $\llbracket A \rrbracket_{\rho}$, $u$ is in
	$\llbracket B \rrbracket_{\rho}$.
	By induction hypothesis, $a \bullet u \in \llbracket B \rrbracket_{\rho}$
	and by {CR2}, $v \in \llbracket B \rrbracket_{\rho}$.
	In the second, as $t$ is in $\llbracket A \rrbracket_{\rho}$, $v \in
	\llbracket B \rrbracket_{\rho}$.

    \item
	If the proposition $A$ has the form $\forall X.B$, and $a \bullet t \lras \Lambda X.v$ then either $t \lras \Lambda X.u$ and $a \bullet u \lras v$, or $t \lras \Lambda X.v$. Let $E \in \mathcal R$ and $C$ be a proposition.
	In the first case, as $t \in \llbracket \forall X.B \rrbracket_\rho$, $(C/X)u \in \llbracket B \rrbracket_{\rho, E/X}$. By the induction hypothesis, $a \bullet (C/X)u = (C/X) (a \bullet u) \in \llbracket B \rrbracket_{\rho, E/X}$. By CR2, since $(C/X)(a \bullet u) \lras (C/X)v$, $(C/X)v \in \llbracket B \rrbracket_{\rho, E/X}$.
	In the second case, as $t \in \llbracket \forall X.B \rrbracket_\rho$, $(C/X)v \in \llbracket B \rrbracket_{\rho, E/X}$.\qed
  \end{itemize}
\end{proof}

\begin{lemma}[Adequacy of $a.\star$]
  \label{lem:star}
  {For every valuation $\rho$,} we have $a.\star \in \llbracket \one \rrbracket_{\rho}$.
  \end{lemma}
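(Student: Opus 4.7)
The plan is to observe that this is essentially immediate from the definitions and the fact that $a.\star$ is itself irreducible.

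First, I would unfold the definition: by the clause $\llbracket \one \rrbracket_\rho = \SN$, the statement reduces to showing $a.\star \in \SN$.

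Next, I would inspect the reduction rules in Figure~\ref{fig:figureductionrules}. The proof-term $a.\star$ is a leaf of the syntax tree (it has no subterms that could reduce internally), and no reduction rule in Figure~\ref{fig:figureductionrules} has $a.\star$ alone on its left-hand side: the rules that mention $a.\star$ are $\elimone(a.\star,t) \lra a\bullet t$, which requires $a.\star$ to sit inside an elimination context; $a.\star \plus b.\star \lra (a+b).\star$, which requires a sum context with another summand of the form $b.\star$; and $a \bullet b.\star \lra (a\times b).\star$, which requires a product context. None of these apply to the bare proof-term $a.\star$.

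Therefore $a.\star$ is irreducible, hence trivially strongly normalising, so $a.\star \in \SN = \llbracket \one \rrbracket_\rho$. There is no genuine obstacle here; this lemma is a base case whose entire purpose is to record that the atomic proof-term $a.\star$ lies in the interpretation of its type, to be invoked later in the adequacy argument.
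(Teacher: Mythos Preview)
Your proof is correct and matches the paper's own argument: the paper simply observes that $a.\star$ is irreducible, hence strongly normalising, hence in $\llbracket \one \rrbracket_{\rho} = \SN$. Your additional inspection of the reduction rules (and, implicitly, of the ultra-reduction rules, none of which apply to $a.\star$ either) just spells out what the paper leaves tacit.
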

  
  \begin{proof}
  As $a.\star$ is irreducible, it strongly normalises, hence
  $a.\star \in \llbracket \one \rrbracket_{\rho}$. \qed
  \end{proof}
  
  \begin{lemma}[Adequacy of $\lambda$]
  \label{lem:abstraction}
  {For every valuation $\rho$,} if, for all $u \in \llbracket A \rrbracket_{\rho}$, $(u/x)t \in \llbracket B
  \rrbracket_{\rho}$, then $\lambda x^C.t \in \llbracket A \multimap B
  \rrbracket_{\rho}$.
  \end{lemma}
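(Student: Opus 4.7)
The plan is to unfold the definition $\llbracket A \multimap B \rrbracket_\rho = \llbracket A \rrbracket_\rho \hatmultimap \llbracket B \rrbracket_\rho$ and verify its two defining clauses in turn, under the reading $C = A$ (the annotation $C$ in the statement is evidently meant to be $A$, since otherwise the hypothesis cannot be well-typed).

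For the strong normalisation clause I would instantiate the hypothesis with $u = x$, which lies in $\llbracket A \rrbracket_\rho$ by Lemma~\ref{lem:Var}, obtaining $t = (x/x)t \in \llbracket B \rrbracket_\rho$ and hence $t \in \SN$ via CR1 applied to $\llbracket B \rrbracket_\rho$, which is a candidate by Lemma~\ref{lem:typeinterpretationsarerc}. A quick inspection of Figure~\ref{fig:figureductionrules} then confirms that no rule fires at the root of $\lambda x^A.t$: every rule acting on a $\lambda$ at the top position requires a surrounding application ($\beta$-redex), a surrounding sum $\plus$, or a surrounding scalar product $\bullet$. Consequently every one-step reduct of $\lambda x^A.t$ has the form $\lambda x^A.t'$ with $t \lra t'$, and strong normalisation of $\lambda x^A.t$ is inherited from that of $t$.

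For the closure clause I would observe that any reduction $\lambda x^A.t \lras \lambda x^A.u'$ is necessarily internal by the same syntactic analysis, hence $t \lras u'$. Given any $v \in \llbracket A \rrbracket_\rho$, the hypothesis yields $(v/x)t \in \llbracket B \rrbracket_\rho$; since substitution commutes with reduction, $(v/x)t \lras (v/x)u'$, and iterating CR2 — available because $\llbracket B \rrbracket_\rho$ is a reducibility candidate — delivers $(v/x)u' \in \llbracket B \rrbracket_\rho$, which is exactly what membership in $\llbracket A \rrbracket_\rho \hatmultimap \llbracket B \rrbracket_\rho$ demands.

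The argument is essentially routine; the only step requiring care is the syntactic case analysis ruling out root-level reductions of $\lambda x^A.t$, so that strong normalisation of $t$ lifts cleanly to the abstraction. Everything else reduces to a single application of Lemma~\ref{lem:Var}, CR1, and an iteration of CR2.
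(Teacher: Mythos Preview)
Your proof is correct and follows essentially the same approach as the paper: instantiate the hypothesis with $u=x$ via Lemma~\ref{lem:Var} to obtain $t\in\SN$, observe that all reductions of $\lambda x^C.t$ are internal so the abstraction is in $\SN$, and then for the closure clause use that $\lambda x^C.t\lras\lambda x^C.t'$ forces $t\lras t'$, whence $(v/x)t\lras(v/x)t'$ and CR2 finishes. One small remark: your insistence that $C$ must be read as $A$ is unnecessary---the lemma, as stated in the paper, holds for an arbitrary annotation $C$, since reducibility candidates are untyped sets of proof-terms and the argument never uses the annotation; the paper's own proof keeps $C$ generic throughout.
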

  
  \begin{proof}
  By Lemma~\ref{lem:Var}, $x \in \llbracket A \rrbracket_{\rho}$, thus
  $t = (x/x)t \in \llbracket B \rrbracket_{\rho}$. Hence, $t$ strongly
  normalises.  Consider a reduction sequence issued from $\lambda
  x^C.t$.  This sequence can only reduce $t$ hence it is finite. Thus,
  $\lambda x^C.t$ strongly normalises.
  
  Furthermore, if $\lambda x^C.t \lras \lambda x^C.t'$, then
  $t \lra^* t'$.  Let $u \in \llbracket A \rrbracket_{\rho}$,
  $(u/x)t \lra^* (u/x)t'$.
  As $(u/x)t \in \llbracket B
  \rrbracket_{\rho}$, by {CR2}, $(u/x)t' \in
  \llbracket B \rrbracket_{\rho}$.\qed
  \end{proof}
  
  \begin{lemma}[Adequacy of $\otimes$]
  \label{lem:tensor}
  {For every valuation $\rho$,} if $t_1 \in \llbracket A \rrbracket_{\rho}$ and $t_2 \in \llbracket B
  \rrbracket_{\rho}$, then $t_1 \otimes t_2 \in \llbracket A \otimes B
  \rrbracket_{\rho}$.
  \end{lemma}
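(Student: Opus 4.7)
The plan is to verify the two defining conditions of $\llbracket A \otimes B \rrbracket_\rho = \llbracket A \rrbracket_\rho \hatotimes \llbracket B \rrbracket_\rho$ directly from the definition: namely, that $t_1 \otimes t_2 \in \SN$, and that whenever $t_1 \otimes t_2$ reduces to a pair $u \otimes v$, we have $u \in \llbracket A \rrbracket_\rho$ and $v \in \llbracket B \rrbracket_\rho$.

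The key observation I would rely on is that no reduction rule of the $\OC$-calculus has a tensor as its root symbol on the left-hand side. Inspecting Figure~\ref{fig:figureductionrules}, a tensor does not appear as an introduction being contracted (only its elimination does, via $\elimtens$), and it is not permuted with $\plus$ or $\bullet$ at the root either. Consequently, every single-step reduction from $t_1 \otimes t_2$ must act inside one of the two components: either $t_1 \otimes t_2 \lra t_1' \otimes t_2$ with $t_1 \lra t_1'$, or $t_1 \otimes t_2 \lra t_1 \otimes t_2'$ with $t_2 \lra t_2'$.

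With this in hand, the rest is routine. Since $\llbracket A \rrbracket_\rho$ and $\llbracket B \rrbracket_\rho$ are reducibility candidates by Lemma~\ref{lem:typeinterpretationsarerc}, CR1 gives $t_1, t_2 \in \SN$, and strong normalisation of $t_1 \otimes t_2$ follows by a standard induction on $|t_1| + |t_2|$ (an infinite reduction from $t_1 \otimes t_2$ would project to an infinite reduction from $t_1$ or from $t_2$). For the second condition, if $t_1 \otimes t_2 \lras u \otimes v$, the same head-analysis forces $t_1 \lras u$ and $t_2 \lras v$, whence iterated application of CR2 to $\llbracket A \rrbracket_\rho$ and $\llbracket B \rrbracket_\rho$ yields $u \in \llbracket A \rrbracket_\rho$ and $v \in \llbracket B \rrbracket_\rho$.

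I expect no real obstacle here: this lemma is substantially simpler than the adequacy of $\plus$ and $\bullet$ (Lemmas~\ref{lem:sum} and~\ref{lem:prod}), where the many root-level commutation rules governing the interstitial sum and product forced a case analysis on the shape of reducts for each connective. The absence of any root-level reduction involving $\otimes_i$ makes both required conditions follow almost immediately from the projection to the components.
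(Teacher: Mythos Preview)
Your proposal is correct and matches the paper's proof essentially verbatim: the paper also observes that any reduction sequence from $t_1 \otimes t_2$ can only reduce inside $t_1$ or $t_2$, deduces $t_1 \otimes t_2 \in \SN$ from $t_1, t_2 \in \SN$, and then uses CR2 on the componentwise reductions $t_1 \lras t_1'$, $t_2 \lras t_2'$ to conclude. Your additional remark that no root-level rule (including the ultra-reduction rules) has $\otimes$ on the left is exactly the justification the paper leaves implicit.
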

  
  \begin{proof}
  The proof-terms $t_1$ and $t_2$ strongly normalise. Consider a reduction
  sequence issued from $t_1 \otimes t_2$.  This sequence can only reduce
  $t_1$ and $t_2$, hence it is finite.  Thus, $t_1 \otimes t_2$ strongly
  normalises.
  
  Furthermore, if $t_1 \otimes t_2 \lras t'_1 \otimes t'_2$,
  then $t_1 \lra^* t'_1$ and $t_2 \lra^* t'_2$.  By {CR2},
  $t'_1 \in \llbracket A \rrbracket_{\rho}$ and $t'_2 \in
  \llbracket B \rrbracket_{\rho}$. \qed
  \end{proof}

  \begin{lemma}[Adequacy of $\langle \rangle$]
  \label{lem:unit}
  {For every valuation $\rho$,} we have $\langle \rangle \in \llbracket \top \rrbracket_{\rho}$.
  \end{lemma}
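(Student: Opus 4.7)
The plan is essentially immediate from the definitions. Recall that $\llbracket \top \rrbracket_{\rho} = \SN$ for every valuation $\rho$, so the goal reduces to showing that $\langle \rangle \in \SN$.

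I would observe that the proof-term $\langle \rangle$ has no subterms and does not match the left-hand side of any reduction rule in Figure~\ref{fig:figureductionrules}: it is neither a redex for any cut-elimination rule, nor a sum or product that could commute anywhere (it only appears on the right-hand sides of the rules $\langle\rangle \plus \langle\rangle \lra \langle\rangle$ and $a\bullet\langle\rangle \lra \langle\rangle$). Hence $\langle\rangle$ is irreducible, and thus the only reduction sequence issued from it is the empty one, which is finite. This gives $\langle\rangle \in \SN = \llbracket \top \rrbracket_{\rho}$.

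There is no real obstacle here; the lemma is a one-line base case analogous to Lemma~\ref{lem:star} for $a.\star$. The role of this lemma in the larger adequacy argument is simply to discharge the $\top_i$ rule in the induction of Theorem~\ref{thm:adequacy}: since $\langle\rangle$ contains no free variables, the substitution $\sigma\langle\rangle = \langle\rangle$, and the lemma then certifies membership in the candidate $\llbracket \top \rrbracket_{\rho}$.
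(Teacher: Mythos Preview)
Your proposal is correct and matches the paper's own proof: observe that $\llbracket \top \rrbracket_{\rho} = \SN$ and that $\langle\rangle$ is irreducible, hence strongly normalising. The additional remarks about its role in Theorem~\ref{thm:adequacy} are accurate but not needed for the lemma itself.
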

  
  \begin{proof}
  As $\langle \rangle$ is irreducible, it strongly normalises, hence
  $\langle \rangle \in \llbracket \top \rrbracket_{\rho}$. \qed
  \end{proof}
  
  \begin{lemma}[Adequacy of $\pair{}{}$]
  \label{lem:pair}
  {For every valuation $\rho$,} if $t_1 \in \llbracket A \rrbracket_{\rho}$ and $t_2 \in \llbracket B
  \rrbracket_{\rho}$, then $\pair{t_1}{t_2} \in \llbracket A \with B
  \rrbracket_{\rho}$.
  \end{lemma}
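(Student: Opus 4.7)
The plan is to mirror the argument used for Lemma~\ref{lem:tensor} (Adequacy of $\otimes$), since the definition of $\llbracket A \with B \rrbracket_\rho = \llbracket A \rrbracket_\rho \hatand \llbracket B \rrbracket_\rho$ has exactly the same shape as the tensor case: membership requires strong normalisation plus a closure condition on the components of any reduct of the form $\pair{u}{v}$.

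First I would show that $\pair{t_1}{t_2}$ strongly normalises. By Lemma~\ref{lem:typeinterpretationsarerc} applied to $t_1$ and $t_2$ (property CR1), both proof-terms are in $\SN$. The key observation is that no reduction rule of Figure~\ref{fig:figureductionrules} has $\pair{t_1}{t_2}$ as its left-hand side: the commutation rules $\pair{t}{u} \plus \pair{v}{w} \lra \pair{t \plus v}{u \plus w}$ and $a \bullet \pair{t}{u} \lra \pair{a \bullet t}{a \bullet u}$ require an enclosing $\plus$ or $\bullet$, which we do not have. Hence every reduction issued from $\pair{t_1}{t_2}$ must take place inside $t_1$ or $t_2$, and any such sequence is finite.

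Next I would verify the closure condition. Suppose $\pair{t_1}{t_2} \lras \pair{t'_1}{t'_2}$. By the same absence of root-level rewrites, every step of this reduction occurs strictly inside a component, so $t_1 \lras t'_1$ and $t_2 \lras t'_2$. Applying CR2 from Lemma~\ref{lem:typeinterpretationsarerc} to the hypotheses $t_1 \in \llbracket A \rrbracket_\rho$ and $t_2 \in \llbracket B \rrbracket_\rho$, we obtain $t'_1 \in \llbracket A \rrbracket_\rho$ and $t'_2 \in \llbracket B \rrbracket_\rho$, which is exactly what is required by the definition of $\hatand$. Therefore $\pair{t_1}{t_2} \in \llbracket A \with B \rrbracket_\rho$.

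I do not expect any genuine obstacle here: the proof is essentially identical to Lemma~\ref{lem:tensor}, the only subtlety being the quick syntactic check that the interstitial commutation rules for $\with$ push $\plus$ and $\bullet$ \emph{into} the pair rather than pulling them out, so that a bare $\pair{t_1}{t_2}$ has no head reduction.
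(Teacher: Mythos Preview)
Your proposal is correct and follows essentially the same argument as the paper's own proof: both establish $\pair{t_1}{t_2}\in\SN$ from $t_1,t_2\in\SN$ by noting that reductions can only occur inside the components, and then use CR2 to close under reducts of pair shape. One tiny remark: since the strong normalisation argument is carried out for ultra-reduction, your ``no root rule applies'' check should in principle also glance at the three extra ultra-reduction rules $t\plus u\lra t$, $t\plus u\lra u$, $a\bullet t\lra t$; but these also do not fire on a bare pair, so your conclusion stands unchanged.
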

  
  \begin{proof}
  The proof-terms $t_1$ and $t_2$ strongly normalise. Consider a reduction
  sequence issued from $\pair{t_1}{t_2}$.  This sequence can only
  reduce $t_1$
  and $t_2$, hence it is finite.  Thus, $\pair{t_1}{t_2}$
  strongly normalises.
  
  Furthermore, if $\pair{t_1}{t_2} \lras \pair
  {t'_1}{t'_2}$, then $t_1 \lra^* t'_1$ and $t_2 \lra^* t'_2$.  By
  {CR2}, $t'_1 \in \llbracket A \rrbracket_{\rho}$ and
  $t'_2 \in \llbracket B \rrbracket_{\rho}$. \qed
  \end{proof}
  
  \begin{lemma}[Adequacy of $\inl$]
  \label{lem:inl}
  {For every valuation $\rho$,} if $t \in \llbracket A \rrbracket_{\rho}$, then $\inl(t) \in \llbracket A
  \oplus B \rrbracket_{\rho}$.
  \end{lemma}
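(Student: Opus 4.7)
The plan is to unfold the definition of $\llbracket A \oplus B \rrbracket_\rho = \llbracket A \rrbracket_\rho \hatoplus \llbracket B \rrbracket_\rho$ and check the two conditions it imposes: strong normalisation of $\inl(t)$, and preservation of membership under reduction to the relevant constructor forms.

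First I would establish strong normalisation of $\inl(t)$. By Lemma~\ref{lem:typeinterpretationsarerc}, $\llbracket A \rrbracket_\rho$ is a reducibility candidate, so by CR1 the hypothesis $t \in \llbracket A \rrbracket_\rho$ gives $t \in \SN$. Any reduction sequence issued from $\inl(t)$ can only take place inside $t$, since there is no reduction rule whose left-hand side has $\inl$ at the root (inspection of Figure~\ref{fig:figureductionrules} confirms this). Hence every reduction sequence from $\inl(t)$ is finite, i.e.\ $\inl(t) \in \SN$.

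Next I would verify the two clauses defining $\hatoplus$. Suppose $\inl(t) \lras \inl(u)$; then by the observation above the reduction is entirely internal to $t$, so $t \lras u$. Since $t \in \llbracket A \rrbracket_\rho$ and this set satisfies CR2, we conclude $u \in \llbracket A \rrbracket_\rho$. The second clause is vacuous: $\inl(t)$ cannot reduce to a proof-term of the form $\inr(v)$, again because no rule rewrites $\inl$ into $\inr$ at the root. Combining these observations yields $\inl(t) \in \llbracket A \oplus B \rrbracket_\rho$.

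There is no real obstacle here; the proof is entirely parallel to the adequacy lemmas for $\otimes$ and $\pair{}{}$ (Lemmas~\ref{lem:tensor} and~\ref{lem:pair}), the only minor point being the need to explicitly dismiss the $\inr$-clause, which is immediate from the shape of the reduction rules.
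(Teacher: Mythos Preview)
Your proof is correct and follows essentially the same approach as the paper's own proof: show $\inl(t)\in\SN$ from $t\in\SN$ because reductions are internal to $t$, then use CR2 to handle the $\inl$-clause and observe that the $\inr$-clause is vacuous. The only cosmetic difference is that you explicitly invoke Lemma~\ref{lem:typeinterpretationsarerc} and CR1 where the paper simply asserts that $t$ strongly normalises.
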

  
  \begin{proof}
  The proof-term $t$ strongly normalises. Consider a reduction sequence
  issued from $\inl(t)$.  This sequence can only reduce $t$, hence it is
  finite.  Thus, $\inl(t)$ strongly normalises.
  
  Furthermore, if $\inl(t) \lras \inl(t')$, then $t \lra^* t'$.  By {CR2}, $t' \in \llbracket A
  \rrbracket_{\rho}$. And $\inl(t)$ never reduces to $\inr(t')$. \qed
  \end{proof}
  
  \begin{lemma}[Adequacy of $\inr$]
  \label{lem:inr}
  {For every valuation $\rho$,} if $t \in \llbracket B \rrbracket_{\rho}$, then $\inr(t) \in \llbracket A
  \oplus B \rrbracket_{\rho}$.
  \end{lemma}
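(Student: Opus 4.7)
The plan is to mirror the proof of Lemma~\ref{lem:inl} verbatim, exchanging the roles of the two summands of $\oplus$. First, since $t \in \llbracket B \rrbracket_\rho$ and, by Lemma~\ref{lem:typeinterpretationsarerc}, $\llbracket B \rrbracket_\rho \in \mathcal{R}$, property (CR1) gives $t \in \SN$. Then, inspecting the reduction rules of Figure~\ref{fig:figureductionrules}, $\inr(\cdot)$ never appears at the head of any redex (it is only an introduction), so any reduction sequence issued from $\inr(t)$ can only reduce the immediate subterm $t$; since $t$ strongly normalises, so does $\inr(t)$.

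Next, I would verify the two defining conditions of $\llbracket A \oplus B \rrbracket_\rho = \llbracket A \rrbracket_\rho \hatoplus \llbracket B \rrbracket_\rho$. The first condition is vacuous: $\inr(t)$ never reduces to a proof-term of the form $\inl(u)$, since there is no rule transforming an $\inr$ into an $\inl$. For the second condition, suppose $\inr(t) \lras \inr(t')$; then necessarily $t \lras t'$, and applying (CR2) to $t \in \llbracket B \rrbracket_\rho$ yields $t' \in \llbracket B \rrbracket_\rho$, as required. This establishes $\inr(t) \in \llbracket A \oplus B \rrbracket_\rho$.

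I do not anticipate any obstacle: the statement is the exact symmetric dual of Lemma~\ref{lem:inl}, and the proof uses only the same two ingredients (closure of $\SN$ under $\inr$, and (CR2) applied componentwise).
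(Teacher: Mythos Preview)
Your proposal is correct and matches the paper's approach exactly: the paper simply states that the proof is ``similar to the proof of Lemma~\ref{lem:inl}'', and your argument is precisely the symmetric dual of that lemma's proof (strong normalisation of $\inr(t)$ from that of $t$, vacuity of the $\inl$-case, and CR2 for the $\inr$-case). If anything, you are slightly more explicit than the paper in invoking Lemma~\ref{lem:typeinterpretationsarerc} and CR1 to obtain $t\in\SN$.
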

  
  \begin{proof}
    Similar to the proof of Lemma~\ref{lem:inl}. \qed
  \end{proof}

  \begin{lemma}[Adequacy of $\oc$]
    \label{lem:bang}
    For every valuation $\rho$, if $t \in \llbracket A\rrbracket_\rho$, then $\oc t \in \llbracket\oc A\rrbracket_\rho$.
  \end{lemma}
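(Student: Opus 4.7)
The plan is to follow the same template used for the other introduction-constructor adequacy lemmas in this appendix (in particular Lemmas~\ref{lem:inl} and~\ref{lem:inr}), since $\oc$ behaves like a unary introduction. By the definition of the semantics, $\llbracket \oc A \rrbracket_\rho = \hatbang \llbracket A \rrbracket_\rho$, so membership of $\oc t$ in $\llbracket \oc A\rrbracket_\rho$ amounts to showing two things: (i) $\oc t \in \SN$, and (ii) whenever $\oc t \lras \oc u$, we have $u \in \llbracket A\rrbracket_\rho$.

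For (i), I would first invoke Lemma~\ref{lem:typeinterpretationsarerc}, which gives that $\llbracket A \rrbracket_\rho \in \mathcal R$, and then apply CR1 to the hypothesis $t \in \llbracket A\rrbracket_\rho$ to conclude $t \in \SN$. The key observation is then that no reduction rule of the $\OC$-calculus has $\oc t$ as its redex at the root: the rules that involve $\oc$ as an outer symbol, namely $\oc t \plus \oc u \lra \oc(t\plus u)$, $a\bullet \oc t \lra \oc(a\bullet t)$, and $\elimbang(\oc t,x^A.u)\lra (t/x)u$, all require additional outer context. Hence any reduction sequence issued from $\oc t$ consists entirely of internal reductions of $t$, so termination of $t$ transfers immediately to $\oc t$.

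For (ii), the same observation shows that if $\oc t \lras \oc u$ then necessarily $t \lras u$, step by step. Since $t \in \llbracket A\rrbracket_\rho$ and $\llbracket A\rrbracket_\rho \in \mathcal R$, repeated application of CR2 yields $u \in \llbracket A\rrbracket_\rho$, which is exactly what is required.

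I do not expect any real obstacle here; unlike the cases of $\plus$ and $\bullet$, which need the specialised termination lemmas (Lemmas~\ref{lem:terminationsum} and~\ref{lem:terminationprod}) to handle the commutation rules, the $\oc$ constructor itself is not the head of any rewrite rule, so the proof is a direct instance of the generic ``introduction is adequate'' pattern and is essentially identical to Lemma~\ref{lem:inl}.
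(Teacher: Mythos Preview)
Your proposal is correct and matches the paper's proof essentially line for line: the paper also argues that $t\in\SN$, that any reduction sequence from $\oc t$ can only reduce $t$ internally (hence $\oc t\in\SN$), and that $\oc t\lras\oc t'$ implies $t\lras t'$, concluding by CR2. Your explicit invocation of Lemma~\ref{lem:typeinterpretationsarerc} and CR1 is slightly more detailed than the paper's terse ``the proof-term $t$ strongly normalises'', but the argument is the same.
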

  \begin{proof}
    The proof-term $t$ strongly normalises. Consider a reduction sequence
    issued from $\oc t$.  This sequence can only reduce $t$, hence it is
    finite.  Thus, $\oc t$ strongly normalises.
    Furthermore, if $\oc t \lras \oc t'$, then $t \lra^* t'$.  By CR2, $t' \in \llbracket A
    \rrbracket_\rho$. \qed
  \end{proof}
  
  \begin{lemma}[Adequacy of $\Lambda$]
    \label{lem:Lambda}
    If $t \in \llbracket A \rrbracket_{\rho, E/X}$ for every $E \in \mathcal R$, then $\Lambda X.t \in \llbracket \forall X.A \rrbracket_\rho$.
  \end{lemma}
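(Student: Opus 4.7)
The plan is to directly verify the two clauses in the definition of $\llbracket \forall X.A \rrbracket_\rho$ applied to the proof-term $\Lambda X.t$: membership in $\SN$, and the requirement that for every reduct of the form $\Lambda X.u$, every $E \in \mathcal{R}$, and every proposition $B$, one has $(B/X)u \in \llbracket A \rrbracket_{\rho, E/X}$.

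For strong normalisation, I would choose any particular $E \in \mathcal{R}$ (e.g.\ $E = \SN$, which trivially satisfies CR1--CR4). By hypothesis, $t \in \llbracket A \rrbracket_{\rho, E/X}$, which is a reducibility candidate by Lemma~\ref{lem:typeinterpretationsarerc}; hence CR1 gives $t \in \SN$. Inspecting the rules of Figure~\ref{fig:figureductionrules} shows that no reduction fires at the root of a bare $\Lambda X.t$: the rules involving $\Lambda$ all have $\plus$, $\bullet$, or an application on top. Thus every reduction sequence from $\Lambda X.t$ acts strictly inside $t$ and therefore terminates.

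For the second clause, the same root-inspection observation yields that whenever $\Lambda X.t \lras \Lambda X.u$ we must have $t \lras u$. Fix an arbitrary $E \in \mathcal{R}$ and an arbitrary proposition $B$. By hypothesis $t \in \llbracket A \rrbracket_{\rho, E/X}$; iterating CR2 along $t \lras u$ gives $u \in \llbracket A \rrbracket_{\rho, E/X}$; and CR4 applied to this set then yields $(B/X)u \in \llbracket A \rrbracket_{\rho, E/X}$, as required. The only delicate point is the syntactic observation that $\Lambda X.t$ is rigid at the root under $\lra$, which is what allows the shape of the reducts to be constrained so that the closure properties CR1, CR2, and CR4 suffice. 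Once this is in place, the argument is a routine repackaging of the reducibility-candidate machinery.
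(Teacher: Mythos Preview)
Your proposal is correct and follows essentially the same approach as the paper's proof: both invoke Lemma~\ref{lem:typeinterpretationsarerc} to know $\llbracket A \rrbracket_{\rho,E/X}$ is a reducibility candidate, deduce $t\in\SN$ (hence $\Lambda X.t\in\SN$), and then use CR4 to obtain $(B/X)u\in\llbracket A\rrbracket_{\rho,E/X}$. Your version is in fact a bit more careful than the paper's, which only treats the case $u=t$ explicitly and leaves the root-rigidity observation and the CR2 step for proper reducts implicit.
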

  \begin{proof}
    Let $B$ be a proposition, and $F \in \mathcal R$. By Lemma~\ref{lem:typeinterpretationsarerc}, $\llbracket A \rrbracket_{\rho, F/X} \in \mathcal R$. Then, $t \in \SN$, and $\Lambda X.t \in \SN$. By CR4, $(B/X)t \in \llbracket A \rrbracket_{\rho, F/X}$. \qed
  \end{proof}

  \begin{lemma}[Adequacy of $\elimone$]
  \label{lem:elimone}
  {For every valuation $\rho$,} if $t_1 \in \llbracket \one \rrbracket_{\rho}$ and $t_2 \in \llbracket C \rrbracket_{\rho}$, 
  then $\elimone(t_1,t_2) \in \llbracket C \rrbracket_{\rho}$.
  \end{lemma}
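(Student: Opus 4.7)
The plan is to apply the reducibility-candidate closure condition CR3 to the proof-term $\elimone(t_1,t_2)$, which is an elimination and hence not an introduction. By Lemma~\ref{lem:typeinterpretationsarerc}, $\llbracket C\rrbracket_\rho\in\mathcal R$, so it suffices to show that every one-step reduct of $\elimone(t_1,t_2)$ lies in $\llbracket C\rrbracket_\rho$. Note first that $t_1\in\llbracket\one\rrbracket_\rho=\SN$ and $t_2\in\llbracket C\rrbracket_\rho\subseteq\SN$ by CR1, so $|t_1|$ and $|t_2|$ are finite, and we shall argue by induction on $|t_1|+|t_2|$.

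Next I would enumerate the possible one-step reducts of $\elimone(t_1,t_2)$. Inspection of Figure~\ref{fig:figureductionrules} shows that the only reduction rule whose left-hand side is headed by $\elimone$ is $\elimone(a.\star,t)\lra a\bullet t$ (there is no interstitial commutation rule that floats $\plus$ or $\bullet$ out of the first argument of $\elimone$). Hence the reducts fall into three families: (i) $\elimone(t_1',t_2)$ with $t_1\lra t_1'$; (ii) $\elimone(t_1,t_2')$ with $t_2\lra t_2'$; and (iii) $a\bullet t_2$, in the case where $t_1=a.\star$.

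For cases (i) and (ii), I use CR2 applied to $\llbracket\one\rrbracket_\rho$ and $\llbracket C\rrbracket_\rho$ to obtain $t_1'\in\llbracket\one\rrbracket_\rho$ and $t_2'\in\llbracket C\rrbracket_\rho$ respectively; since the measure $|t_1|+|t_2|$ strictly decreases, the induction hypothesis yields that the reduct lies in $\llbracket C\rrbracket_\rho$. For case (iii), the reduct $a\bullet t_2$ lies in $\llbracket C\rrbracket_\rho$ directly by the already-established Lemma~\ref{lem:prod}. Applying CR3 then places $\elimone(t_1,t_2)$ in $\llbracket C\rrbracket_\rho$, as required.

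No single step is particularly delicate here; the only thing to take care of is to confirm that no commutation rule pushes $\plus$ or $\bullet$ out of $\elimone$, which would otherwise add reducts of the shape $\elimone(u,t_2)\plus\elimone(v,t_2)$ or $a\bullet\elimone(u,t_2)$ and force a more elaborate induction. Since the reduction system by design only commutes the interstitial rules with the $\otimes$- and $\oplus$-eliminations (and with introductions), this case does not arise, and CR3 closes the argument.
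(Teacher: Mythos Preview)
Your proposal is correct and follows essentially the same route as the paper: induction on $|t_1|+|t_2|$, closure under CR3 since $\elimone(t_1,t_2)$ is not an introduction, the inner-reduction cases handled by CR2 plus the induction hypothesis, and the head case $\elimone(a.\star,t_2)\lra a\bullet t_2$ dispatched by Lemma~\ref{lem:prod}. Your explicit check that no commutation rule pushes $\plus$ or $\bullet$ out of $\elimone$ is a welcome clarification the paper leaves implicit.
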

  
  \begin{proof}
  The proof-terms $t_1$ and $t_2$ strongly normalise.  We prove, by
  induction on $|t_1| + |t_2|$, that $\elimone(t_1,t_2)
  \in \llbracket C \rrbracket_{\rho}$.  Using {CR3}, we only
  need to prove that every of its one step reducts is in $\llbracket C
  \rrbracket_{\rho}$.  If the reduction takes place in $t_1$ or $t_2$, then we
  apply {CR2} and the induction hypothesis.
  
  Otherwise, the proof-term $t_1$ is $a.\star$ and the
  reduct is $a \bullet t_2$. We conclude with Lemma~\ref{lem:prod}. \qed
  \end{proof}
  
  \begin{lemma}[Adequacy of application]
  \label{lem:application}
  {For every valuation $\rho$,} if $t_1 \in \llbracket A \multimap B \rrbracket_{\rho}$ and $t_2 \in
  \llbracket A \rrbracket_{\rho}$, then $t_1~t_2 \in \llbracket B
  \rrbracket_{\rho}$.
  \end{lemma}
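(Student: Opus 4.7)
The plan is to follow the template already established by Lemma~\ref{lem:elimone} (adequacy of $\elimone$), which is the closest prior analogue among the elimination lemmas. Since $t_1\in\llbracket A\multimap B\rrbracket_\rho$ and $t_2\in\llbracket A\rrbracket_\rho$, by CR1 (using Lemma~\ref{lem:typeinterpretationsarerc}) both proof-terms strongly normalise, so $|t_1|$ and $|t_2|$ are defined. I would prove $t_1~t_2\in\llbracket B\rrbracket_\rho$ by induction on $|t_1|+|t_2|$. Because $t_1~t_2$ is not an introduction, CR3 applied to the reducibility candidate $\llbracket B\rrbracket_\rho$ reduces the goal to showing $Red(t_1~t_2)\subseteq\llbracket B\rrbracket_\rho$.

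A case analysis on the shape of the reduct then suffices. If the redex lies in $t_1$, giving $t_1~t_2\lra t_1'~t_2$, then CR2 gives $t_1'\in\llbracket A\multimap B\rrbracket_\rho$ and the induction hypothesis applies, since $|t_1'|<|t_1|$; the case of a redex inside $t_2$ is symmetric. If the redex is at the root, inspection of Figure~\ref{fig:figureductionrules} shows that the only root rule with an application on the left-hand side is $\beta$, forcing $t_1=\lambda x^A.u$ and reducing to $(t_2/x)u$. In that situation $t_1\lras\lambda x^A.u$ holds in zero steps, so unfolding $\llbracket A\multimap B\rrbracket_\rho=\llbracket A\rrbracket_\rho\hatmultimap\llbracket B\rrbracket_\rho$ and instantiating its defining clause with $v=t_2\in\llbracket A\rrbracket_\rho$ yields $(t_2/x)u\in\llbracket B\rrbracket_\rho$ immediately.

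The only point requiring care is verifying that the case analysis is exhaustive. In particular, there are no commuting rules moving $\plus$ or $\bullet$ past an application: when $t_1$ has the form $t_1'\plus t_1''$ or $a\bullet t_1'$, the only available reductions of $t_1~t_2$ still take place strictly inside $t_1$ or $t_2$, falling under the first two cases. So no separate treatment is needed, and the adequacy of application reduces cleanly to the definition of $\hatmultimap$ plus the inductive step on reducts. I do not expect any genuine obstacle here beyond this bookkeeping; the proof is essentially mechanical once CR3 is invoked.
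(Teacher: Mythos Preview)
Your proof is correct and matches the paper's argument essentially line for line: induction on $|t_1|+|t_2|$, appeal to CR3 since application is not an introduction, handle internal reductions by CR2 plus the induction hypothesis, and discharge the root $\beta$-case directly from the defining clause of $\hatmultimap$. Your extra remark that no commuting rules have an application as their left-hand side (so the case analysis is exhaustive) is accurate and even slightly more explicit than the paper.
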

  
  \begin{proof}
  The proof-terms $t_1$ and $t_2$ strongly normalise. We prove, by induction
  on $|t_1| + |t_2|$, that $t_1~t_2 \in \llbracket B \rrbracket_{\rho}$. Using
  {CR3}, we only need to prove that every of its one
  step reducts is in $\llbracket B \rrbracket_{\rho}$.  If the reduction takes
  place in $t_1$ or in $t_2$, then we apply {CR2}
  and the induction hypothesis.
  
  Otherwise, the proof-term $t_1$ has the form $\lambda x^C.u$ and the reduct
  is $(t_2/x)u$.  As $\lambda x^C.u \in \llbracket A \multimap B
  \rrbracket_{\rho}$, we have $(t_2/x)u \in \llbracket B \rrbracket_{\rho}$. \qed
  \end{proof}
  
  \begin{lemma}[Adequacy of $\elimtens$]
  \label{lem:elimtens}
  {For every valuation $\rho$,} if $t_1 \in \llbracket A \otimes B \rrbracket_{\rho}$,
  for all $u$ in $\llbracket A \rrbracket_{\rho}$,
  for all $v$ in $\llbracket B \rrbracket_{\rho}$,
  $(u/x,v/y)t_2 \in \llbracket C \rrbracket_{\rho}$, 
  then $\elimtens(t_1, x^D y^E.t_2) \in \llbracket C \rrbracket_{\rho}$.
  \end{lemma}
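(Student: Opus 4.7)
The plan is to follow the same template as Lemmas~\ref{lem:elimone} and~\ref{lem:application}: namely, apply CR3 (since $\elimtens(t_1,x^D y^E.t_2)$ is not an introduction) and then close under reduction by induction on $|t_1|+|t_2|$. The first preparatory step is to check that both $t_1$ and $t_2$ strongly normalise. For $t_1$, this is immediate from $t_1 \in \llbracket A\otimes B\rrbracket_\rho$ and Lemma~\ref{lem:typeinterpretationsarerc}. For $t_2$, one uses Lemma~\ref{lem:Var}: picking the proof-term variables themselves, we have $x \in \llbracket A\rrbracket_\rho$ and $y \in \llbracket B\rrbracket_\rho$, so by hypothesis $t_2 = (x/x,y/y)t_2 \in \llbracket C\rrbracket_\rho \subseteq \SN$.

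Next, since $\elimtens(t_1,x^D y^E.t_2)$ is not an introduction, by CR3 it suffices to show that every one-step reduct lies in $\llbracket C\rrbracket_\rho$, which I will prove by induction on $|t_1|+|t_2|$. If the reduction occurs strictly inside $t_1$ or $t_2$, then CR2 preserves membership of the relevant subterm in its interpretation (for $t_2$, the substitution hypothesis is preserved because reduction commutes with substitution), and the induction hypothesis applies since the measure strictly decreases. The interesting cases are the root reductions, of which there are three:

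The first root case is $t_1 = v_1\otimes v_2$, where the reduct is $(v_1/x,v_2/y)t_2$. Since $t_1\in \llbracket A\otimes B\rrbracket_\rho$ reduces to $v_1\otimes v_2$, the definition of $\hatotimes$ yields $v_1\in \llbracket A\rrbracket_\rho$ and $v_2\in\llbracket B\rrbracket_\rho$, and the hypothesis of the lemma gives the reduct in $\llbracket C\rrbracket_\rho$. The second case is $t_1 = v_1\plus v_2$, where the reduct is $\elimtens(v_1,x^D y^E.t_2)\plus \elimtens(v_2,x^D y^E.t_2)$; here the ultra-reduction rules $t_1\lra v_1$ and $t_1\lra v_2$ combined with CR2 put both $v_1$ and $v_2$ back in $\llbracket A\otimes B\rrbracket_\rho$, so by induction hypothesis each $\elimtens(v_i,x^D y^E.t_2)$ lies in $\llbracket C\rrbracket_\rho$, and Lemma~\ref{lem:sum} concludes. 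The third case is $t_1 = a\bullet v$, dispatched symmetrically via the ultra-reduction $a\bullet v\lra v$, CR2, the induction hypothesis on $\elimtens(v,x^D y^E.t_2)$, and Lemma~\ref{lem:prod}.

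The main subtlety is exactly the role of ultra-reduction in the commutation cases: without it, the mere fact that $v_1\plus v_2\in \llbracket A\otimes B\rrbracket_\rho$ would not let me recover $v_1,v_2\in \llbracket A\otimes B\rrbracket_\rho$ directly, since $\hatotimes$ only constrains reducts that are themselves tensor pairs. It is precisely the added rules $t\plus u\lra t$, $t\plus u\lra u$, and $a\bullet t\lra t$ of ultra-reduction that, together with CR2, propagate candidate-membership to the summands and to the scaled subterm, so that the induction hypothesis becomes applicable and Lemmas~\ref{lem:sum} and~\ref{lem:prod} can close the two commutation cases.
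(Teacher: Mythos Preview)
Your proposal is correct and matches the paper's proof essentially step for step: both first use Lemma~\ref{lem:Var} to put $t_2\in\llbracket C\rrbracket_\rho\subseteq\SN$, then argue by induction on $|t_1|+|t_2|$ via CR3, handling internal reductions by CR2 and the three root cases (tensor introduction, $\plus$-commutation, $\bullet$-commutation) exactly as you describe, with ultra-reduction plus CR2 feeding the induction hypothesis in the commutation cases and Lemmas~\ref{lem:sum} and~\ref{lem:prod} closing them. Your explicit remark that the substitution hypothesis on $t_2$ is preserved under reduction of $t_2$ is a detail the paper leaves implicit, but it is correct and needed.
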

  
  \begin{proof}
  By Lemma~\ref{lem:Var}, $x \in \llbracket A \rrbracket_{\rho}$ and $y \in
  \llbracket B \rrbracket_{\rho}$, thus $t_2 = (x/x,y/y)t_2 \in \llbracket C
  \rrbracket_{\rho}$.  Hence, $t_1$ and $t_2$ strongly normalise.  We prove, by
  induction on $|t_1| + |t_2|$, that $\elimtens(t_1,
  x^D y^E.t_2) \in \llbracket C \rrbracket_{\rho}$.  Using
  {CR3}, we only need to prove that every of its one
  step reducts is in $\llbracket C \rrbracket_{\rho}$.  If the reduction takes
  place in $t_1$ or $t_2$, then we apply {CR2} and the
  induction hypothesis. Otherwise, either:
  \begin{itemize}
  \item The proof-term $t_1$ has the form $w_2 \otimes w_3$ and the reduct is
    $(w_2/x,w_3/y)t_2$. As
    $w_2 \otimes w_3 \in \llbracket A \otimes B \rrbracket_{\rho}$, we
    have $w_2 \in \llbracket A \rrbracket_{\rho}$
  and $w_3 \in \llbracket B \rrbracket_{\rho}$. 
    Hence, $(w_2/x,w_3/y)t_2 \in
    \llbracket C \rrbracket_{\rho}$.
  
  \item The proof-term $t_1$ has the form $t_1' \plus t''_1$ and the
    reduct is\\ $\elimtens(t'_1, x^D y^E.t_2) \plus
    \elimtens(t''_1, x^D y^E.t_2)$. As $t_1
    \lra t'_1$ with an ultra-reduction rule, we have by
    {CR2}, $t'_1 \in \llbracket A \otimes B
    \rrbracket_{\rho}$.  Similarly, $t''_1 \in \llbracket A \otimes B
    \rrbracket_{\rho}$.  Thus, by induction hypothesis, $\elimtens(t'_1,
    x^D y^E.t_2) \in \llbracket A \otimes B \rrbracket_{\rho}$
    and $\elimtens(t''_1, x^D y^E.t_2) \in \llbracket A
    \otimes B \rrbracket_{\rho}$.  We conclude with Lemma~\ref{lem:sum}.
  
  \item The proof-term $t_1$ has the form $a \bullet t_1'$ and the
    reduct is $a \bullet \elimtens(t'_1, x^D y^E.t_2)$. As $t_1
    \lra t'_1$ with an ultra-reduction rule, we have by
    {CR2}, $t'_1 \in \llbracket A \oplus B
    \rrbracket_{\rho}$.  
    Thus, by induction hypothesis, $\elimtens(t'_1,
    x^D y^E.t_2) \in \llbracket A \otimes B \rrbracket_{\rho}$.
    We conclude with Lemma~\ref{lem:prod}. \qed
  \end{itemize}
  \end{proof}
  
  \begin{lemma}[Adequacy of $\elimzero$]
  \label{lem:elimzero}
  {For every valuation $\rho$,} if $t \in \llbracket \zero \rrbracket_{\rho}$, 
  then $\elimzero(t) \in \llbracket C \rrbracket_{\rho}$.
  \end{lemma}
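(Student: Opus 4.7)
The plan is to mimic the pattern used in the adequacy proofs for the other eliminators (in particular Lemma~\ref{lem:elimone}), exploiting the fact that $\llbracket\zero\rrbracket_\rho=\SN$ and that $\zero$ has no introduction rule, so there is no $\beta$-style reduct of $\elimzero(t)$ to worry about.

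First I would observe that since $t\in\llbracket\zero\rrbracket_\rho=\SN$, the proof-term $t$ strongly normalises. The only reductions issuing from $\elimzero(t)$ are those that take place strictly inside $t$: inspecting the reduction rules of Figure~\ref{fig:figureductionrules} and the three extra ultra-reduction rules, none of them has $\elimzero(\cdot)$ at the head of a left-hand side (there is no commuting rule for $\elimzero$ with sum or prod, and there is no introduction of $\zero$ to trigger a cut reduction). Hence any reduction sequence from $\elimzero(t)$ is essentially a reduction sequence from $t$, and so $\elimzero(t)\in\SN$.

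Next I would invoke \textbf{CR3}: since $\elimzero(t)$ is not an introduction, it suffices to show that every one-step reduct of $\elimzero(t)$ lies in $\llbracket C\rrbracket_\rho$. By the observation above, any such reduct has the form $\elimzero(t')$ for some $t'$ with $t\lra t'$. I would then proceed by induction on $|t|$: by \textbf{CR2} applied to $\SN$, we have $t'\in\SN=\llbracket\zero\rrbracket_\rho$, and since $|t'|<|t|$, the induction hypothesis gives $\elimzero(t')\in\llbracket C\rrbracket_\rho$. Applying CR3 finishes the argument.

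No genuine obstacle is expected here; the only point that needs care is the verification that no reduction rule rewrites $\elimzero(t)$ at the root, which is immediate from the fact that $\zero$ has no introduction form and the list of commutation/ultra-reduction rules does not mention $\elimzero$. Consequently the proof is a direct application of the reducibility candidate machinery, with the same shape as the adequacy lemmas already established above.
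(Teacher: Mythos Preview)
Your proposal is correct and essentially follows the same idea as the paper. The paper's own proof is even terser: it just observes that $\elimzero(t)\in\SN$ (since reductions can only take place in $t$) and that $\elimzero(t)$ \emph{never reduces to an introduction}, which for every shape of $C$ makes the defining implication in $\llbracket C\rrbracket_\rho$ vacuously true (and, for $C=X$, amounts to exactly the CR3 induction on $|t|$ that you spell out). So your CR3 argument and the paper's ``never an introduction'' remark are the same observation, yours being the explicit unrolling of theirs.
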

  
  \begin{proof}
  The proof-term $t$ strongly normalises.  Consider a reduction sequence
  issued from $\elimzero(t)$.  This sequence can only reduce $t$, hence it
  is finite.  Thus, $\elimzero(t)$ strongly normalises.  Moreover, it
  never reduces to an introduction. \qed
  \end{proof}
  
  \begin{lemma}[Adequacy of $\elimwith^1$]
  \label{lem:elimwith1}
  {For every valuation $\rho$,} if $t_1 \in \llbracket A \with B \rrbracket_{\rho}$
  and, for all $u$ in $\llbracket A \rrbracket_{\rho}$,
  $(u/x)t_2 \in \llbracket C \rrbracket_{\rho}$, 
  then $\elimwith^1(t_1, x^D.t_2) \in \llbracket C \rrbracket_{\rho}$.
  \end{lemma}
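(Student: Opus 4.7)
The plan is to follow the template of the adequacy lemma for $\elimtens$ (Lemma~\ref{lem:elimtens}), but with a shorter case analysis because, unlike $\elimtens$ and $\elimplus$, the eliminator $\elimwith^1$ has no commuting rules with the interstitial operations $\plus$ and $\bullet$ at the head.

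First, I would use Lemma~\ref{lem:Var} to conclude that $x \in \llbracket A \rrbracket_\rho$, so that $t_2 = (x/x)t_2 \in \llbracket C \rrbracket_\rho$ by the second hypothesis, and in particular $t_2$ strongly normalises; $t_1$ strongly normalises by CR1 applied to $\llbracket A \with B \rrbracket_\rho$. I would then carry out an induction on $|t_1| + |t_2|$. Since $\elimwith^1(t_1, x^D.t_2)$ is not an introduction, CR3 reduces the goal to showing that each one-step reduct lies in $\llbracket C \rrbracket_\rho$.

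There are only three shapes of reducts. If the reduction is inside $t_1$, I would apply CR2 to $\llbracket A \with B \rrbracket_\rho$ to keep the new $t_1'$ in that candidate, then conclude by the induction hypothesis. If it is inside $t_2$, then for every $u \in \llbracket A \rrbracket_\rho$ the proof-term $(u/x)t_2'$ is a reduct of $(u/x)t_2$, so CR2 applied to $\llbracket C \rrbracket_\rho$ preserves the second hypothesis on $t_2'$, and the induction hypothesis again concludes. Otherwise the reduction is at the root, which forces $t_1 = \pair{w_1}{w_2}$ and the reduct to be $(w_1/x)t_2$; since $t_1 \in \llbracket A \with B \rrbracket_\rho$ and $t_1 \lras \pair{w_1}{w_2}$ in zero steps, the definition of $\hatand$ gives $w_1 \in \llbracket A \rrbracket_\rho$, and the second hypothesis then yields $(w_1/x)t_2 \in \llbracket C \rrbracket_\rho$.

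I do not expect any serious obstacle. The only point that requires care is verifying that the case enumeration is exhaustive: in particular, there is no reduction rule of the form $\elimwith^1(t \plus u, x^D.v) \lra \dots$ or $\elimwith^1(a \bullet t, x^D.v) \lra \dots$ in the system, nor does ultra-reduction add one, so no appeal to Lemmas~\ref{lem:sum} or~\ref{lem:prod} is required. Consequently, the proof is strictly simpler than that of Lemma~\ref{lem:elimtens}.
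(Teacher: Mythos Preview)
Your proposal is correct and follows the paper's own proof almost verbatim: Lemma~\ref{lem:Var} to get $t_2\in\llbracket C\rrbracket_\rho$, induction on $|t_1|+|t_2|$, CR3, then the same three-case split with the root case handled via the definition of $\hatand$. Your observation that $\elimwith^1$ has no head commuting rules with $\plus$ or $\bullet$ (so Lemmas~\ref{lem:sum} and~\ref{lem:prod} are not needed) is exactly why the paper's case analysis here is shorter than for $\elimtens$.
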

  
  \begin{proof}
  By Lemma~\ref{lem:Var}, $x \in \llbracket A \rrbracket_{\rho}$
  thus $t_2 = (x/x)t_2 \in \llbracket C
  \rrbracket_{\rho}$.  Hence, $t_1$ and $t_2$ strongly normalise.  We prove, by
  induction on $|t_1| + |t_2|$, that $\elimwith^1(t_1, x^D.t_2)
  \in \llbracket C \rrbracket_{\rho}$.  Using {CR3}, we only
  need to prove that every of its one step reducts is in $\llbracket C
  \rrbracket_{\rho}$.  If the reduction takes place in $t_1$ or $t_2$, then we
  apply {CR2} and the induction hypothesis.
  
  Otherwise, the proof-term $t_1$ has the form $\pair{u}{v}$ and the
  reduct is $(u/x)t_2$.  As $\pair{u}{v} \in \llbracket A
  \with B \rrbracket_{\rho}$, we have $u \in \llbracket A \rrbracket_{\rho}$.
  Hence, $(u/x)t_2 \in \llbracket C \rrbracket_{\rho}$. \qed
  \end{proof}
  
  \begin{lemma}[Adequacy of $\elimwith^2$]
  \label{lem:elimwith2}
  {For every valuation $\rho$,} if $t_1 \in \llbracket A \with B \rrbracket_{\rho}$ and,
  for all $u$ in $\llbracket B \rrbracket_{\rho}$,
  $(u/x)t_2 \in \llbracket C \rrbracket_{\rho}$, 
  then $\elimwith^2(t_1, x^D.t_2) \in \llbracket C \rrbracket_{\rho}$.
  \end{lemma}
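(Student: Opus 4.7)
The plan is to mirror the proof of Lemma~\ref{lem:elimwith1} almost verbatim, swapping the role of the first and second projection. The only structural difference is which component of a pair we read off to witness membership of $t_1$ in $\llbracket A\with B\rrbracket_\rho$, so the same strategy applies with no extra ideas.

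First I would use Lemma~\ref{lem:Var} to obtain $x\in\llbracket B\rrbracket_\rho$ and then invoke the hypothesis with $u=x$, which gives $t_2=(x/x)t_2\in\llbracket C\rrbracket_\rho$. By Lemma~\ref{lem:typeinterpretationsarerc} and CR1 this puts both $t_1$ and $t_2$ in $\SN$, so the measure $|t_1|+|t_2|$ is well defined. The proof-term $\elimwith^2(t_1,x^D.t_2)$ is not an introduction, so by CR3 it suffices to show that every one-step reduct lies in $\llbracket C\rrbracket_\rho$, which I will prove by induction on $|t_1|+|t_2|$.

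For the inductive step, if the redex lies inside $t_1$ or inside $t_2$, then CR2 (applied to $\llbracket A\with B\rrbracket_\rho$ or to $\llbracket C\rrbracket_\rho$, both reducibility candidates by Lemma~\ref{lem:typeinterpretationsarerc}) keeps the reduced subterm in its interpretation, and the induction hypothesis delivers membership of the reduct. The only root reduction available is when $t_1$ has the form $\pair{u}{v}$, in which case the reduct is $(v/x)t_2$; then $\pair{u}{v}\in\llbracket A\with B\rrbracket_\rho = \llbracket A\rrbracket_\rho\hatand\llbracket B\rrbracket_\rho$ gives $v\in\llbracket B\rrbracket_\rho$, and instantiating the hypothesis at $u:=v$ yields $(v/x)t_2\in\llbracket C\rrbracket_\rho$.

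There is essentially no obstacle here since the argument is symmetric to Lemma~\ref{lem:elimwith1}; the only point that deserves a second of attention is checking that $\elimwith^2$ has no commutation rules at the root in Figure~\ref{fig:figureductionrules} that would create additional root-reduction cases (a quick glance at the reduction rules confirms there are none, as the sum and prod commutations are pushed through the introductions, not the $\elimwith^i$ eliminations).
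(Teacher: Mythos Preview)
Your proposal is correct and matches the paper's own proof, which simply states ``Similar to the proof of Lemma~\ref{lem:elimwith1}.'' Your expanded argument is exactly the symmetric version of that proof, reading off the second component of the pair instead of the first, and your closing remark about the absence of root-level commutation rules for $\elimwith^2$ is a correct (if slightly over-cautious) sanity check.
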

  
  \begin{proof}
  Similar to the proof of Lemma~\ref{lem:elimwith1}. \qed
  \end{proof}
  
  \begin{lemma}[Adequacy of $\elimplus$]
  \label{lem:elimplus}
  {For every valuation $\rho$,} if $t_1 \in \llbracket A \oplus B \rrbracket_{\rho}$, for all $u$ in $\llbracket A
  \rrbracket_{\rho}$, $(u/x)t_2 \in \llbracket C \rrbracket_{\rho}$, and, for all $v$
  in $\llbracket B \rrbracket_{\rho}$, $(v/y)t_3 \in \llbracket C \rrbracket_{\rho}$,
  then $\elimplus(t_1, x^D.t_2, y^E.t_3) \in \llbracket C \rrbracket_{\rho}$.
  \end{lemma}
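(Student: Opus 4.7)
The plan is to follow the same template used in the preceding adequacy lemmas for elimination rules (in particular Lemma~\ref{lem:elimtens} for $\elimtens$ and Lemma~\ref{lem:elimplus}'s closest sibling), adjusted for the extra branch introduced by the $\oplus$-elimination.

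First I would establish that $t_1$, $t_2$, and $t_3$ all strongly normalise: the hypothesis gives $t_1\in\SN$ via (CR1) applied to $\llbracket A\oplus B\rrbracket_\rho$, and by Lemma~\ref{lem:Var} we have $x\in\llbracket A\rrbracket_\rho$ and $y\in\llbracket B\rrbracket_\rho$, so that $t_2=(x/x)t_2\in\llbracket C\rrbracket_\rho$ and $t_3=(y/y)t_3\in\llbracket C\rrbracket_\rho$, whence both are in $\SN$ by (CR1). Then I would proceed by induction on $|t_1|+|t_2|+|t_3|$ and use (CR3): since $\elimplus(t_1,x^D.t_2,y^E.t_3)$ is not an introduction, it suffices to show every one-step reduct lies in $\llbracket C\rrbracket_\rho$.

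If the reduction takes place inside $t_1$, $t_2$, or $t_3$, then (CR2) together with the induction hypothesis handles the case. Otherwise the reduction occurs at the root, and there are four subcases. If $t_1=\inl(w)$, the reduct is $(w/x)t_2$; the fact that $\inl(w)\in\llbracket A\oplus B\rrbracket_\rho$ gives $w\in\llbracket A\rrbracket_\rho$, so by hypothesis $(w/x)t_2\in\llbracket C\rrbracket_\rho$. The case $t_1=\inr(w)$ is symmetric, using the hypothesis about $t_3$. If $t_1=t_1'\plus t_1''$, the reduct is $\elimplus(t_1',x^D.t_2,y^E.t_3)\plus\elimplus(t_1'',x^D.t_2,y^E.t_3)$; by the ultra-reduction rules and (CR2), both $t_1'$ and $t_1''$ lie in $\llbracket A\oplus B\rrbracket_\rho$, so by the induction hypothesis both branches of the outer $\plus$ lie in $\llbracket C\rrbracket_\rho$, and Lemma~\ref{lem:sum} concludes. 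If $t_1=a\bullet t_1'$, the reduct is $a\bullet\elimplus(t_1',x^D.t_2,y^E.t_3)$; by the ultra-reduction rule and (CR2), $t_1'\in\llbracket A\oplus B\rrbracket_\rho$, the induction hypothesis gives $\elimplus(t_1',x^D.t_2,y^E.t_3)\in\llbracket C\rrbracket_\rho$, and Lemma~\ref{lem:prod} concludes.

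No step looks like a genuine obstacle: the proof is a copy of the $\elimtens$ adequacy proof with two hypotheses instead of one for the branches, and the $\oplus$ case simply splits into $\inl$ and $\inr$ subcases handled by the two hypotheses. The only thing to be careful about is remembering to invoke Lemmas~\ref{lem:sum} and~\ref{lem:prod} rather than attempting to close the $\plus$- and $\bullet$-commutation subcases directly.
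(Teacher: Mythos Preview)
Your proposal is correct and follows essentially the same approach as the paper's proof: induction on $|t_1|+|t_2|+|t_3|$, applying (CR3), handling internal reductions via (CR2) and the induction hypothesis, and treating the four root cases ($\inl$, $\inr$, $\plus$, $\bullet$) exactly as you describe, closing the last two with Lemmas~\ref{lem:sum} and~\ref{lem:prod}.
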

  
  \begin{proof}
  By Lemma~\ref{lem:Var}, $x \in \llbracket A \rrbracket_{\rho}$, thus $t_2 =
  (x/x)t_2 \in \llbracket C \rrbracket_{\rho}$. In the same way, $t_3 \in
  \llbracket C \rrbracket_{\rho}$.  Hence, $t_1$, $t_2$, and $t_3$ strongly
  normalises.  We prove, by induction on $|t_1| + |t_2| + |t_3|$,
  that $\elimplus(t_1, x^D.t_2,
  y^E.t_3) \in \llbracket C \rrbracket_{\rho}$.  Using {CR3}, we
  only need to prove that every of its one step reducts
  is in $\llbracket C \rrbracket_{\rho}$.  If the reduction takes place in
  $t_1$, $t_2$, or $t_3$, then we apply {CR2} and
  the induction hypothesis. Otherwise, either:
  \begin{itemize}
  \item The proof-term $t_1$ has the form $\inl(w_2)$ and the reduct is
    $(w_2/x)t_2$. As $\inl(w_2) \in \llbracket A \oplus B \rrbracket_{\rho}$, we
    have $w_2 \in \llbracket A \rrbracket_{\rho}$.  Hence, $(w_2/x)t_2 \in
    \llbracket C \rrbracket_{\rho}$.
  
  \item The proof-term $t_1$ has the form $\inr(w_3)$ and the reduct is
    $(w_3/x)t_3$. As $\inr(w_3) \in \llbracket A \oplus B \rrbracket_{\rho}$, we
    have $w_3 \in \llbracket B \rrbracket_{\rho}$.  Hence, $(w_3/x)t_3 \in
    \llbracket C \rrbracket_{\rho}$.
  
  \item The proof-term $t_1$ has the form $t_1' \plus t''_1$ and the
    reduct is $\elimplus(t'_1, x^D.t_2, y^E.t_3) \plus
    \elimplus(t''_1, x^D.t_2, y^E.t_3)$. As $t_1
    \lra t'_1$ with an ultra-reduction rule, we have by
    {CR2}, $t'_1 \in \llbracket A \oplus B
    \rrbracket_{\rho}$.  Similarly, $t''_1 \in \llbracket A \oplus B
    \rrbracket_{\rho}$.  Thus, by induction hypothesis, $\elimplus(t'_1,
    x^D.t_2, y^E.t_3) \in \llbracket A \oplus B \rrbracket_{\rho}$
    and $\elimplus(t''_1, x^D.t_2, y^E.t_3) \in \llbracket A
    \oplus B \rrbracket_{\rho}$.  We conclude with Lemma~\ref{lem:sum}.
  
  \item The proof-term $t_1$ has the form $a \bullet t_1'$ and the
    reduct is $a \bullet \elimplus(t'_1, x^D.t_2, y^E.t_3)$. As $t_1
    \lra t'_1$ with an ultra-reduction rule, we have by
    {CR2}, $t'_1 \in \llbracket A \oplus B
    \rrbracket_{\rho}$.  
    Thus, by induction hypothesis, $\elimplus(t'_1,
    x^D.t_2, y^E.t_3) \in \llbracket A \oplus B \rrbracket_{\rho}$.
    We conclude with Lemma~\ref{lem:prod}. \qed
  \end{itemize}
  \end{proof}

\begin{lemma}[Adequacy of $\elimbang$]\label{lem:elimbang}
  For every valuation $\rho$, if $t_1 \in \llbracket \oc A \rrbracket_\rho$
  and, for all $u$ in $\llbracket A \rrbracket_\rho$,
  $(u/x)t_2 \in \llbracket B \rrbracket_\rho$, 
  then $\elimbang(t_1, x^C.t_2) \in \llbracket B \rrbracket_\rho$.
\end{lemma}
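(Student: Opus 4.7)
The plan is to mirror the structure of the previous adequacy lemmas for eliminators, in particular Lemma~\ref{lem:elimwith1}. First, I observe that by Lemma~\ref{lem:Var} the variable $x$ belongs to $\llbracket A \rrbracket_\rho$, so instantiating the hypothesis at $u = x$ yields $t_2 = (x/x)t_2 \in \llbracket B \rrbracket_\rho$. Combined with $t_1 \in \llbracket \oc A \rrbracket_\rho \subseteq \SN$, obtained from Lemma~\ref{lem:typeinterpretationsarerc} and (CR1), this gives that both $t_1$ and $t_2$ strongly normalise, so induction on $|t_1| + |t_2|$ is well-founded.

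The main argument proceeds by this induction, using (CR3): since $\elimbang(t_1, x^C.t_2)$ is not an introduction, it suffices to show that every one-step reduct lies in $\llbracket B \rrbracket_\rho$. Reductions internal to $t_1$ or to $t_2$ are handled by (CR2) together with the induction hypothesis, noting that a reduction inside $t_2$ translates to a reduction of $(u/x)t_2$ for any $u \in \llbracket A \rrbracket_\rho$, so the hypothesis of the lemma is preserved. The only remaining possibility is a reduction at the root: inspecting Figure~\ref{fig:figureductionrules}, the sole root rule available for $\elimbang$ is $\elimbang(\oc v, x^C.t_2) \lra (v/x)t_2$, so $t_1$ must have the form $\oc v$. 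From $\oc v \in \llbracket \oc A \rrbracket_\rho = \hatbang \llbracket A \rrbracket_\rho$ we read off $v \in \llbracket A \rrbracket_\rho$ directly from the definition of $\hatbang$, and the hypothesis applied at $u = v$ gives $(v/x)t_2 \in \llbracket B \rrbracket_\rho$, as required.

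I do not anticipate any serious obstacle. The key simplification, compared to the analogous proofs for $\elimtens$ (Lemma~\ref{lem:elimtens}) and $\elimplus$ (Lemma~\ref{lem:elimplus}), is that the interstitial rules for $\oc$ in Figure~\ref{fig:figureductionrules} push $\plus$ and $\bullet$ underneath the introduction $\oc$ rather than distribute them outward through the eliminator. Consequently, there are no ``sum'' or ``product'' subcases at the root for $\elimbang$, and no appeal to Lemma~\ref{lem:sum} or Lemma~\ref{lem:prod} is needed; the argument reduces to exactly the two clean subcases above.
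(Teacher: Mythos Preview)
Your proposal is correct and follows essentially the same argument as the paper's proof: both use Lemma~\ref{lem:Var} to place $t_2$ in $\llbracket B \rrbracket_\rho$, proceed by induction on $|t_1|+|t_2|$ via (CR3), handle internal reductions with (CR2) and the induction hypothesis, and treat the single root case $t_1 = \oc v$ by reading off $v \in \llbracket A \rrbracket_\rho$ from the definition of $\hatbang$. Your additional remark that no sum/product root subcases arise here (in contrast to Lemmas~\ref{lem:elimtens} and~\ref{lem:elimplus}) is accurate and explains why the case analysis is shorter.
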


\begin{proof}
  By Lemma~\ref{lem:Var}, $x \in \llbracket A \rrbracket_\rho$
  thus $t_2 = (x/x)t_2 \in \llbracket B
  \rrbracket_\rho$.  Hence, $t_1$ and $t_2$ strongly normalise.  We prove, by
  induction on $|t_1| + |t_2|$, that $\elimbang(t_1, x^C.t_2)
  \in \llbracket B \rrbracket_\rho$.  Using CR3, we only
  need to prove that every of its one step reducts is in $\llbracket B
  \rrbracket_\rho$.  If the reduction takes place in $t_1$ or $t_2$, then we
  apply CR2 and the induction hypothesis.

  Otherwise, the proof-term $t_1$ has the form $\oc u$ and the
  reduct is $(u/x)t_2$.  As $\oc u \in \llbracket \oc A \rrbracket_\rho$, we have $u \in \llbracket A \rrbracket_\rho$.
  Hence, $(u/x)t_2 \in \llbracket B \rrbracket_\rho$. \qed
\end{proof}

\begin{lemma}[Adequacy of proposition application]
  \label{lem:typeapplication}
  If $t \in \llbracket \forall X.A \rrbracket_\rho$, then $t~B \in \llbracket (B/X)A \rrbracket_\rho$.
\end{lemma}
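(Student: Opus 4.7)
The plan is to prove $t~B \in \llbracket (B/X)A \rrbracket_\rho$ by unfolding the target semantic set via Lemma~\ref{lem:substRho}, which rewrites the goal as $t~B \in \llbracket A \rrbracket_{\rho, \llbracket B \rrbracket_\rho/X}$. By Lemma~\ref{lem:typeinterpretationsarerc}, $\llbracket B \rrbracket_\rho \in \mathcal R$, so this target set is itself a reducibility candidate. This identification is the key conceptual step: it converts a syntactic substitution in the proposition into a semantic update of the valuation, which is precisely the shape expected by the defining clause of $\llbracket \forall X.A \rrbracket_\rho$.

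Next, I would argue that $t~B$ is strongly normalising. Since $t \in \llbracket \forall X.A \rrbracket_\rho \subseteq \SN$, the only reductions in $t~B$ either act on $t$ or fire the head redex when $t$ becomes $\Lambda X.u$; the head redex case reduces to $(B/X)u$, and for this we will use the adequacy hypothesis directly. To make the argument clean, I would proceed by induction on $|t|$ and apply CR3 to $t~B$, which is not an introduction, so it suffices to show that every one-step reduct of $t~B$ lies in $\llbracket A \rrbracket_{\rho, \llbracket B \rrbracket_\rho/X}$.

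The case analysis on reducts of $t~B$ has exactly two subcases, following the reduction rules of Figure~\ref{fig:figureductionrules} that apply at the root of a proposition application. If the reduction is internal, $t \lra t'$ and $t~B \lra t'~B$; by CR2 applied to the candidate $\llbracket \forall X.A \rrbracket_\rho$ (Lemma~\ref{lem:typeinterpretationsarerc}), $t' \in \llbracket \forall X.A \rrbracket_\rho$, so by the induction hypothesis $t'~B \in \llbracket (B/X)A \rrbracket_\rho$. If the root rule $(\Lambda X.u)~B \lra (B/X)u$ fires, then $t = \Lambda X.u$, so in particular $t \lras \Lambda X.u$; by the definition of $\llbracket \forall X.A \rrbracket_\rho$, instantiating $E := \llbracket B \rrbracket_\rho \in \mathcal R$ and $C := B$, we get $(B/X)u \in \llbracket A \rrbracket_{\rho, \llbracket B \rrbracket_\rho/X}$, which is the goal.

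The main obstacle is purely bookkeeping: making sure the clause in the definition of $\llbracket \forall X.A \rrbracket_\rho$ matches after applying Lemma~\ref{lem:substRho}. No deep difficulty arises, because the semantic clause was crafted exactly to make proposition application adequate; the real work was done earlier in verifying that type interpretations are reducibility candidates (Lemma~\ref{lem:typeinterpretationsarerc}) and that semantic substitution commutes with syntactic substitution (Lemma~\ref{lem:substRho}).
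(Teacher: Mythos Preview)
Your proposal is correct and follows essentially the same route as the paper: rewrite the goal via Lemma~\ref{lem:substRho}, then use CR3 (since $t~B$ is not an introduction) together with an induction on $|t|$, handling the internal reduction case by CR2 plus the induction hypothesis and the head-redex case by instantiating the defining clause of $\llbracket \forall X.A \rrbracket_\rho$ with $E = \llbracket B \rrbracket_\rho$. Your write-up is in fact slightly more explicit than the paper's in justifying that $\llbracket B \rrbracket_\rho \in \mathcal R$ before using it as the parameter $E$.
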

\begin{proof}
  Since $t \in \llbracket \forall X.A \rrbracket_\rho$, $t \in \SN$, therefore $t~B \in \SN$. By Lemma~\ref{lem:substRho}, it suffices to show that $t~B \in \llbracket A \rrbracket_{\rho, \llbracket B \rrbracket_\rho/X}$. We proceed by induction on $|t|$. Using CR3, we need to prove that each of its one step reducts is in $\llbracket A \rrbracket_{\rho, \llbracket B \rrbracket_\rho/X}$.
  \begin{itemize}
    \item If $t = \Lambda X.u$, then $t~B \lra (B/X)u$. Since $t \in \llbracket \forall X.A \rrbracket_\rho$, then $(B/X)u \in \llbracket A \rrbracket_{\rho, \llbracket B \rrbracket_\rho/X}$.
    \item If $|t| = 0$ and $t \neq \Lambda X.u$, then $Red(t~B)$ is empty.
    \item If $|t| > 0$, let $t \lra t'$. Then, $t~B \lra t'~B$. By CR2, $t' \in \llbracket \forall X.A \rrbracket_\rho$. Since $|t'| < |t|$, by the induction hypothesis $t'~B \in \llbracket \forall X.A \rrbracket_\rho$.
      \qed
  \end{itemize}
\end{proof}

\section{Proof of Section~\ref{sec:introductionthm}}\label{proof:introductionthm}
\introductions*
\begin{proof}
  By induction on $t$.

  We first remark that, as the proof-term $t$ is closed, it is not a
  variable. Then, we prove that it cannot be an elimination.
  \begin{itemize}
    \item If $t = \elimone(u,v)$, then $u$ is a closed
      irreducible proof-term of $\one$, hence, by induction
      hypothesis, it has the form $a.\star$
      and the proof-term $t$ is reducible.

    \item If $t = u~v$, then $u$ is a closed  irreducible proof-term of $B
      \multimap A$, hence, by induction hypothesis, it has the form
      $\lambda \abstr{x:B}u_1$ and the proof-term $t$ is reducible.

    \item If $t = \elimtens(u,x^B y^C.v)$, then $u$ is a closed
      irreducible proof-term of $B \otimes C$, hence, by induction hypothesis, it
      has the form $u_1 \otimes u_2$, $u_1 \plus u_2$, or $a \bullet
      u_1$ and the proof-term $t$ is reducible.

    \item If $t = \elimzero(u)$, then $u$ is a closed irreducible proof-term of
      $\zero$ and, by induction hypothesis, no such proof-term exists.

    \item If $t = \elimwith^1(u,x^B.v)$, then $u$ is a closed
      irreducible proof-term of $B \with C$, hence, by induction
      hypothesis, it has the form $\pair{u_1}{u_2}$
      and the proof-term $t$ is reducible.

    \item If $t = \elimwith^2(u,x^C.v)$, then $u$ is a closed
      irreducible proof-term of $B \with C$, hence, by induction
      hypothesis, it has the form $\pair{u_1}{u_2}$
      and the proof-term $t$ is reducible.

    \item If $t = \elimplus(u,x^B.v,y^C.w)$, then $u$ is a closed
      irreducible proof-term of $B \oplus C$, hence, by induction hypothesis, it
      has the form $\inl(u_1)$, $\inr(u_1)$, $u_1 \plus u_2$, or $a \bullet
      u_1$ and the proof-term $t$ is reducible.

    \item If $t = \elimbang(u,x^B.v)$, then $u$ is a closed
      irreducible proof-term of $\oc B$, hence, by induction
      hypothesis, it has the form $\oc u_1$
      and the proof-term $t$ is reducible.

    \item If $t = u~C$, then $u$ is a closed  irreducible proof-term of $\forall X.B$, hence, by induction hypothesis, it has the form
      $\Lambda \abstr{X}u_1$ and the proof-term $t$ is reducible.
  \end{itemize}

  Hence, $t$ is an introduction, a sum, or a product.

  It $t$ has the form $a.\star$, then $A$ is $\one$. 
  If it has the form $\lambda x^B.u$, then $A$ has the form $B \multimap C$.
  If it has the form $u \otimes v$, then $A$ has the form $B \otimes C$.
  If it is $\langle \rangle$, then $A$ is $\top$.
  If it has the form $\pair{u}{v}$, then $A$ has the form $B \with C$.
  If it has the form $\inl(u)$ or $\inr(u)$, then $A$ has the form $B \oplus C$.
  If it has the form $\oc u$, then $A$ has the form $\oc B$.
  If it has the form $\Lambda \abstr{X}u$, then $A$ has the form $\forall X.B$.
  We prove that, if it has the form $u \plus v$ or $a \bullet u$, $A$ has
  the form $B \otimes C$ or $B \oplus C$.
  \begin{itemize}
    \item 
      If $t = u \plus v$, then the proof-terms $u$ and $v$ are two closed and irreducible proof-terms of $A$.
      If $A = \one$ then, by induction hypothesis, they both have the form $a.\star$ and the proof-term $t$ is reducible.
      If $A$ has the form $B \multimap C$ then, by induction hypothesis, they are both abstractions and the proof-term $t$ is reducible.
      If $A = \top$ then, by induction hypothesis, they both are $\langle \rangle$ and the proof-term $t$ is reducible.
      If $A = \zero$ then, they are irreducible proof-terms of $\zero$ and, by induction hypothesis, no such proof-terms exist.
      If $A$ has the form $B \with C$, then, by induction hypothesis, they are both pairs and the proof-term $t$ is reducible.  Hence, $A$ has the form $B \otimes C$ or $B \oplus C$.
      If $A$ has the form $\oc B$, then, by induction hypothesis, they both have exponential connectives in head position, and the proof-term $t$ is reducible.
      If $A$ has the form $\forall X.B$ then, by induction hypothesis, they are both universal abstractions and the proof-term $t$ is reducible.

    \item
      If $t = a \bullet u$, then the proof-term $u$ is a closed and irreducible proof-term of $A$.
      If $A = \one$ then, by induction hypothesis, $u$ has the form $a.\star$ and the proof-term $t$ is reducible.
      If $A$ has the form $B \multimap C$ then, by induction hypothesis, it is an abstraction and the proof-term $t$ is reducible.
      If $A = \top$ then, by induction hypothesis, it is $\langle \rangle$ and the proof-term $t$ is reducible.
      If $A = \zero$ then, it is an irreducible proof-term of $\zero$ and, by induction hypothesis, no such proof-term exists.
      If $A$ has the form $B \with C$, then, by induction hypothesis, it is a pair and the proof-term $t$ is reducible.  Hence, $A$ has the form $B \otimes C$ or $B \oplus C$.
      If $A$ has the form $\oc B$, then, by induction hypothesis, it has an exponential connective in head position and the proof-term $t$ is reducible.
      If $A$ has the form $\forall X.B$ then, by induction hypothesis, it is a universal abstraction and the proof-term $t$ is reducible.
      \qed
  \end{itemize}
\end{proof}

\section{Proof of Section~\ref{sec:seclinearity}}\label{proof:seclinearity}
\begin{lemma}
  \label{lem:msubst}
If $\Gamma, x^A \vdash t:B$ and $\Delta \vdash u:A$ then $\mu((u/x)t) \leq
\mu(t)+\mu(u)$.
\end{lemma}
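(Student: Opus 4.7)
The proof will proceed by structural induction on $t$. Since we are in the purely linear fragment (without $\oc$), every free proof-term variable of a typable term occurs exactly once, and the typing context is split by every multiplicative rule; this is the structural fact that drives the key case distinction below.

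The base case $t=x$ is immediate: $(u/x)x = u$ and $\mu(x)=0$, so $\mu((u/x)x) = \mu(u) = \mu(x)+\mu(u)$. For any other variable $y\neq x$, the substitution is vacuous and the inequality is trivial since $\mu(u)\geq 0$. The constant cases $a.\star$ and $\langle\rangle$ do not contain $x$ and are likewise immediate.

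For the inductive step I would organise the constructors into two groups according to how $\mu$ aggregates subterm measures. The first group consists of the \emph{multiplicative and sequential} constructors, whose measure sums the subterm measures: $t_1\otimes t_2$, $t_1~t_2$, $\elimone(t_1,t_2)$, $\elimtens(t_1,x^Ay^B.t_2)$, $\elimwith^i(t_1,x^A.t_2)$, $\elimzero(t_1)$, $a\bullet t'$, $\lambda y^C.t'$, $\inl(t')$, $\inr(t')$, $\Lambda X.t'$, and $t'~A$. In each of these, the linear typing splits the context, so $x$ occurs in exactly one of the subterms. Applying the induction hypothesis to that subterm and leaving the others unchanged yields the inequality $\mu((u/x)t)\leq \mu(t)+\mu(u)$ by a direct arithmetic calculation (e.g.\ for $t=t_1~t_2$ with $x\in\fv(t_1)$, we get $\mu((u/x)t) = 1+\mu((u/x)t_1)+\mu(t_2)\leq 1+\mu(t_1)+\mu(u)+\mu(t_2)=\mu(t)+\mu(u)$).

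The second group comprises the \emph{additive and branching} constructors whose measure uses $\max$: $\pair{t_1}{t_2}$, $t_1\plus t_2$, and the two branches of $\elimplus(t_0,x^A.t_1,y^B.t_2)$ (the scrutinee position $t_0$ belongs to the first group). Here the additive rules do not split the context, so $x$ may occur in both subterms. Applying the induction hypothesis to each subterm and using the identity $\max(a+c,b+c)=\max(a,b)+c$ gives the required bound, for instance
\[
  \mu((u/x)\pair{t_1}{t_2}) = 1+\max(\mu((u/x)t_1),\mu((u/x)t_2)) \leq 1+\max(\mu(t_1),\mu(t_2))+\mu(u).
\]
The case $\elimplus$ combines the two styles: the scrutinee contributes additively via $+$, while the two branches contribute via $\max$, so one first case-splits on whether $x$ occurs in the scrutinee or in the (shared) context of the branches, then applies the appropriate induction step.

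The only real work is bookkeeping: several elimination constructors require a case split on which of ``scrutinee'' or ``branch'' contains $x$, and I would handle these uniformly. No genuine obstacle is expected; in particular, the polymorphic cases $\Lambda X.t'$ and $t'~A$ add nothing essential, because the type substitution performed by the $\forall$-rules is orthogonal to the proof-term substitution and the measure ignores propositions entirely.
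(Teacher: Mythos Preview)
Your proposal is correct and follows essentially the same structural induction as the paper's proof; your organisation of the constructors into a ``sum'' group and a ``$\max$'' group is a tidy way of presenting what the paper does case by case. Two cosmetic remarks: the paper observes that $t=a.\star$ is actually impossible under the hypothesis $\Gamma,x^A\vdash t:B$ (the $\one_i$ rule needs an empty linear context), and for $\elimzero$ and $\langle\rangle$ the rules $\zero_e$ and $\top_i$ can silently discard $x$ from the context, so $x$ need not occur in any subterm---but as you note, the substitution is then vacuous and the inequality trivial, so nothing is lost.
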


\begin{proof}
  By induction on $t$. 
  \begin{itemize}
    \item If $t$ is a variable, then $\Gamma$ is empty, $t = x$, 
      $(u/x)t = u$ and $\mu(t) = 0$.
      Thus, $\mu((u/x)t) = \mu(u) = \mu(t)+\mu(u)$.

    \item If $t = t_1 \plus t_2$, then $\Gamma, x^A \vdash t_1:B$ and $\Gamma, x^A \vdash t_2:B$.  Using the induction hypothesis, we get 
      $\mu((u/x)t)
      = 1 +
      \max(\mu((u/x)t_1),\mu((u/x)t_2)) \leq 1 + \max(\mu(t_1) + \mu(u),
      \mu(t_2) + \mu(u))
      = \mu(t) + \mu(u)$.

    \item If $t = a \bullet t_1$, then $\Gamma, x^A \vdash t_1:B$.
      Using the induction hypothesis, we get 
      $\mu((u/x)t)
      = 1 + \mu((u/x)t_1)
      \leq 1 + \mu(t_1) + \mu(u) = \mu(t) + \mu(u)$.

    \item The proof-term $t$ cannot be of the form
      $a.\star$, that is not well-typed in $\Gamma, x^A$.

      \item If $t = \elimone(t_1,t_2)$, then
      $\Gamma = \Gamma_1, \Gamma_2$ and there are two cases.
      \begin{itemize}
	\item 
	  If $\Gamma_1, x^A \vdash t_1:\one$ and $\Gamma_2 \vdash t_2:B$,
	  then, using the induction hypothesis, we get
	  $\mu((u/x)t)
	  = 1 + \mu((u/x)t_1) + \mu(t_2)
	  \leq 1 + \mu(t_1) + \mu(u) + \mu(t_2)
	  = \mu(t) + \mu(u)$.

	\item 
	  If $\Gamma_1 \vdash t_1:\one$ and $\Gamma_2, x^A \vdash t_2:B$,
	  then, using the induction hypothesis, we get
	  $\mu((u/x)t)
	  = 1 + \mu(t_1) + \mu((u/x)t_2)
	  \leq 1 + \mu(t_1) + \mu(t_2) + \mu(u)
	  = \mu(t) + \mu(u)$.
      \end{itemize}

    \item If $t = \lambda y^C. t_1$,
we apply the same method as for the case $t = a \bullet t_1$.


    \item If $t = t_1~t_2$, we apply the same method as for the
      case $t = \elimone(t_1,t_2)$.



    \item If $t = t_1 \otimes t_2$, we apply the same method as for the
      case $t = \elimone(t_1,t_2)$.


    \item If $t = \elimtens(t_1, y^C z^D.t_2)$,
      we apply the same method as for the
      case $t = \elimone(t_1,t_2)$.


    \item If $t = \topintro$, then
      $\mu((u/x)t)
      = 1
      \leq 1 + \mu(u) 
      = \mu(t) + \mu(u)$.

    \item If $t = \elimzero(t_1)$, then
      $\Gamma = \Gamma_1, \Gamma_2$ and there are two cases.
      \begin{itemize}
	\item 
	  If $\Gamma_1, x^A \vdash t_1:\zero$,
we apply the same method as for the case $t = a \bullet t_1$.


	\item 
	  If $\Gamma_1 \vdash t_1:\zero$, then, we get $\mu((u/x)t) = \mu(t)
	  \leq \mu(t) + \mu(u)$.
      \end{itemize}

    \item If $t = \pair{t_1}{t_2}$, we apply the same method as for the
      case $t = t_1 \plus t_2$.

    \item   
      If $t = \elimwith^1(t_1,y^C.t_2)$, we apply the same method
      as for the case $t = \elimone(t_1,t_2)$.


    \item   
      If $t = \elimwith^2(t_1,y^C.t_2)$, 
      we apply the same method as for the
      case $t = \elimone(t_1,t_2)$.


    \item If $t = \inl(t_1)$ or $t = \inr(t_1)$,
     we apply the same method as for the case $t = a \bullet t_1$.

    \item   
      If $t = \elimplus(t_1,y^{C_1}.t_2,z^{C_2}.t_3)$ then
      $\Gamma = \Gamma_1, \Gamma_2$ and there are two cases.
      \begin{itemize}
	\item 
	  If $\Gamma_1, x^A \vdash t_1:C_1 \oplus C_2$,
	  $\Gamma_2, y^{C_1} \vdash t_2:A$,
	  $\Gamma_2, z^{C_2} \vdash t_3:A$, then using the induction hypothesis,
	  we get
	  $\mu((u/x)t)
	  = 1 + \mu((u/x)t_1) + \max(\mu(t_2),\mu(t_3))
	  \leq 
	  1 + \mu(t_1) + \mu(u) + \max(\mu(t_2),\mu(t_3))
	  = \mu(t) + \mu(u)$.

	\item 
	  If $\Gamma_1 \vdash t_1:C_1 \oplus C_2$,
	  $\Gamma_2, y^{C_1}, x^A \vdash t_2:A$,
	  $\Gamma_2, z^{C_2}, x^A \vdash t_3:A$, then using the induction hypothesis,
	  we get
	  $\mu((u/x)t)
	  = 1 + \mu(t_1) + \max(\mu((u/x)t_2),\mu((u/x)t_3))
	  \leq 
	  1 + \mu(t_1) + \max(\mu(t_2) + \mu(u),\mu(t_3) + \mu(u))
	  1 + \mu(t_1) + \max(\mu(t_2),\mu(t_3)) + \mu(u)
	  = \mu(t) + \mu(u)$.
      \end{itemize}
    \item If $t = \Lambda X.t_1$, then $B = \forall X.C$, $\Gamma, x^A \vdash t_1:C$ and $X \notin \FV(\Gamma, A)$. Using the induction hypothesis, we get 
    $\mu((u/x)t) = 1 + \mu((u/x)t_1)
    \leq 1 + \mu(t_1) + \mu(u)
    = \mu(t) + \mu(u)$.
    \item If $t = t_1~C$, then $B = (C/X)D$ and $\Gamma, x^A \vdash t_1:D$. Using the induction hypothesis, we get
    $\mu((u/x)t) = 1 + \mu((u/x)t_1)
    \leq 1 + \mu(t_1) + \mu(u)
    = \mu(t) + \mu(u)$. \qed
  \end{itemize}
\end{proof}

As a corollary, we get a similar measure preservation theorem for
reduction.

\begin{lemma}
  \label{lem:mured}
  If $\Gamma \vdash t:A$ and $t \lra u$, then $\mu(t) \geq \mu(u)$.
\end{lemma}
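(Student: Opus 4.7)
The plan is to proceed by induction on the derivation of $t \lra u$, handling first the context-closure cases and then the root-rewrite axioms. Since $\mu$ is defined by a direct recursion in which each proof-term constructor contributes a non-decreasing combination (sum, $\max$, or $+1$) of the measures of its immediate subterms, the context-closure cases are immediate: any reduction inside a subterm, once bounded by the induction hypothesis, propagates the inequality through its surrounding constructor.

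For the root cases I would split the rules of Figure~\ref{fig:figureductionrules} into two groups. The beta-reduction rules (for $\one$, $\multimap$, $\otimes$, $\with$, $\oplus$, and $\forall$) produce a substitution on the right-hand side. When the substitution is a proof-term substitution, I would invoke Lemma~\ref{lem:msubst}: the left-hand side evaluates to at least $2 + \mu(v_1) + \mu(t)$ (the two units coming from the introduction sitting immediately under the elimination), while the right-hand side satisfies $\mu((v_1/x)t) \leq \mu(t) + \mu(v_1)$. For $(\Lambda X.t)~A \lra (A/X)t$ the argument is even simpler: $\mu$ depends only on the proof-term skeleton and not on the propositions appearing as type annotations, so $\mu((A/X)t) = \mu(t)$, which is bounded by $2 + \mu(t)$.

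For the commutation rules of $\plus$ and $\bullet$, I would perform a direct calculation in each case. A representative sample: $\pair{t}{u} \plus \pair{v}{w} \lra \pair{t \plus v}{u \plus w}$ has both sides equal to $2 + \max(\mu(t),\mu(u),\mu(v),\mu(w))$; $(\lambda x^A.t) \plus (\lambda x^A.u) \lra \lambda x^A.(t \plus u)$ has both sides equal to $2 + \max(\mu(t),\mu(u))$; $a \bullet b.\star \lra (a \times b).\star$ drops from $2$ to $1$; $a \bullet \langle\rangle \lra \langle\rangle$ drops from $2$ to $1$; and the $\elimplus$ and $\elimtens$ commutations unfold symmetrically on both sides. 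A uniform pattern emerges: the zero-ary commutation rules strictly decrease $\mu$, while the others preserve it exactly, reflecting that $\mu$ has been designed precisely so that pushing a sum or a product through an introduction or through an $\elimplus$/$\elimtens$ is a measure-preserving rearrangement.

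The only real obstacle is the bookkeeping across many rule cases; nothing conceptually subtle beyond Lemma~\ref{lem:msubst} and the observation that proposition substitution leaves $\mu$ unchanged is required.
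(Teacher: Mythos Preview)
Your proposal is correct and matches the paper's proof essentially line for line: induction on the reduction, context cases by monotonicity of the constructors in $\mu$, beta-style root cases via Lemma~\ref{lem:msubst} (with the observation that proposition substitution leaves $\mu$ unchanged for the $\forall$ case), and the commutation rules by direct calculation. The paper presents the same argument, simply listing out each root case explicitly rather than grouping them as you do.
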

\begin{proof}
  By induction on $t$. The context cases are trivial because the
  functions used to define $\mu(t)$ in function of $\mu$ of the
  subproof-terms of $t$ are monotone. We check the rules one by one, using
  Lemma~\ref{lem:msubst}.

  \begin{itemize}
    \item $\mu(\elimone(a.\star,t)) = 2 + \mu(t) > 1 + \mu(t) = \mu(a
      \bullet t)$
    \item $\mu((\lambda x^B.t)~u) = 2 + \mu(t) + \mu(u) > \mu(t) +
      \mu(u) \geq \mu((u/x)t)$
    \item $
      \begin{aligned}[t]
	\mu(\elimtens(u \otimes v,x^B y^C.w))
	&= 2 + \mu(u) + \mu(v) + \mu(w)
	> \mu(u) + \mu(v) + \mu(w)\\
	&\geq \mu(u) + \mu((v/y)w)
	\geq  \mu((u/x)(v/y)w)\\
	&= \mu((u/x,v/y)w)
      \end{aligned}
      $

      as $x$ does not occur in $v$
    \item $\mu(\elimwith^1(\pair{t}{u}, x^B.v)) = 2 +
      \max(\mu(t),\mu(u)) + \mu(v) > \mu(t) + \mu(v) \geq \mu((t/x)v)$
    \item $\mu(\elimwith^2(\pair{t}{u}, x^B.v)) = 2 +
      \max(\mu(t),\mu(u)) + \mu(v) > \mu(u) + \mu(v) \geq \mu((u/x)v)$
    \item $
      \begin{aligned}[t]
	\mu(\elimplus(\inl(t),x^B.v,y^C.w)) 
	&= 2 + \mu(t) + \max(\mu(v), \mu(w)) 
	> \mu(t) + \mu(v) \\
	& \geq \mu((t/x)v)
      \end{aligned}
      $
    \item $
      \begin{aligned}[t]
	\mu(\elimplus(\inr(t),x^B.v,y^C.w)) 
	&= 2 + \mu(t) + \max(\mu(v),\mu(w))
	> \mu(t) + \mu(w)\\
	&\geq \mu((t/y)w)
      \end{aligned}
      $
    \item $\mu((\Lambda X.t)~B) = 2 + \mu(t) > \mu(t) = \mu((B/X)t)$
    \item $\mu({a.\star} \plus b.\star) = 2 > 1 = \mu((a+b).\star)$
    \item $
      \begin{aligned}[t]
	\mu((\lambda x^B.t) \plus (\lambda x^B.u))
	&= 1 + \max (1 + \mu(t), 1 + \mu(u)) \\
	&= 2 + \max(\mu(t),\mu(u))
	= \mu(\lambda x^B.(t \plus u))
      \end{aligned}
      $
    \item $
      \begin{aligned}[t]
	\mu(\elimtens(t \plus u,x^B y^C.v)) 
	&= 2 + \max(\mu(t), \mu(u)) + \mu(v) \\
	&= 1 + \max(1 + \mu(t) + \mu(v), 1 + \mu(u) + \mu(v)) \\
	&= \mu(\elimtens(t,x^B y^C.v) \plus \elimtens(u,x^B y^C.v))
      \end{aligned}
      $
    \item $\mu(\topintro \plus \topintro) = 2 > 1 = \mu(\topintro)$
    \item $
      \begin{aligned}[t]
	\mu(\pair{t}{u} \plus \pair{v}{w}) 
	&= 1 + \max (1 + \max(\mu(t),\mu(u)), 1 + \max(\mu(v),\mu(w)))\\
	&= 2 + \max (\mu(t),\mu(u),\mu(v),\mu(w)) \\
	&= 1 + \max(1+ \max(\mu(t),\mu(v)), 1 + \max(\mu(u),\mu(w)))\\
	&= \mu(\pair{t \plus v}{u \plus w})
      \end{aligned}
      $
    \item $
      \begin{aligned}[t]
	&\mu(\elimplus(t \plus u,x^B.v,y^C.w)) \\
	&= 2 + \max(\mu(t),\mu(u)) + \max(\mu(v), \mu(w)) \\
	&= 1 + \max(1 + \mu(t) + \max(\mu(v), \mu(w)), 1 + \mu(u) + \max(\mu(v), \mu(w))) \\
	&= \mu(\elimplus(t,x^B.v,y^C.w) \plus \elimplus(u,x^B.v,y^C.w))
      \end{aligned}
      $
    \item $
      \begin{aligned}[t]
	\mu((\Lambda X.t) \plus (\Lambda X.u)) 
	&= 1 + \max(1 + \mu(t), 1 + \mu(u)) 
	= 2 + \max(\mu(t), \mu(u)) \\
	&= \mu(\Lambda X.(t \plus u))
      \end{aligned}
      $
    \item $\mu(a \bullet b.\star) = 2 > 1 = \mu((a \times b).\star)$
    \item $\mu(a \bullet \lambda x^B. t) = 2 + \mu(t) = \mu(\lambda
      x^B. a \bullet t)$
    \item $\mu(\elimtens(a \bullet t,x^B y^C.v)) = 2 + \mu(t) + \mu(v) =
      a \bullet \elimtens(t,x^B y^C.v)$
    \item $\mu(a \bullet \topintro) = 2 > 1 = \mu(\topintro)$
    \item $\mu(a \bullet \pair{t}{u}) = 2 + \max(\mu(t),\mu(u)) = 1 +
      \max(1 + \mu(t),1 + \mu(u)) = \mu(\pair{a \bullet t}{a \bullet u})$
    \item
      $
      \begin{aligned}[t]
	\mu(\elimplus(a \bullet t,x^B.v,y^C.w)) 
	&= 2 + \mu(t) + \max(\mu(v),\mu(w)) \\
	&= \mu(a \bullet \elimplus(t,x^B.v,y^C.w))
      \end{aligned}
      $
    \item {$\mu(a \bullet \Lambda X.t) = 2 + \mu(t) = \mu(\Lambda X. a \bullet t)$} \qed
  \end{itemize} 
\end{proof}

In the case of elimination contexts, Lemma~\ref{lem:msubst} can be
strengthened.

\begin{lemma}
  \label{lem:strengtheningmsubst}
  $\mu(K\{t\}) = \mu(K) + \mu(t)$
\end{lemma}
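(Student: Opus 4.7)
The plan is induction on the structure of the elimination context $K$, exploiting the crucial observation that, in every clause of the grammar for $K$, the subterm containing the hole $\_$ is the \emph{first} argument of an elimination constructor, and in the definition of $\mu$ the measure of this first argument occurs \emph{additively} (never inside a $\max$). Since $\_$ is treated as a proof-term variable, we have $\mu(\_) = 0$, which handles the base case.

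For the base case $K = \_$, we have $K\{t\} = t$ and $\mu(K) = \mu(\_) = 0$, so $\mu(K\{t\}) = \mu(t) = \mu(K) + \mu(t)$. For the inductive step, I would go through the eight recursive clauses of the definition of elimination context one by one. In each case the computation is identical in form; I illustrate with three representative cases. If $K = \elimone(K',u)$, then
\[
  \mu(K\{t\}) = \mu(\elimone(K'\{t\},u)) = 1 + \mu(K'\{t\}) + \mu(u),
\]
and by the induction hypothesis this equals $1 + \mu(K') + \mu(t) + \mu(u) = \mu(K) + \mu(t)$. If $K = K'~u$, we get $\mu(K\{t\}) = 1 + \mu(K'\{t\}) + \mu(u) = 1 + \mu(K') + \mu(t) + \mu(u) = \mu(K) + \mu(t)$. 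Finally, for the trickiest-looking case $K = \elimplus(K',x^A.r,y^B.s)$, where a $\max$ appears in the definition of $\mu$, the $\max$ only involves $\mu(r)$ and $\mu(s)$ (the hole is not in their scope), so
\[
  \mu(K\{t\}) = 1 + \mu(K'\{t\}) + \max(\mu(r),\mu(s)) = 1 + \mu(K') + \mu(t) + \max(\mu(r),\mu(s)) = \mu(K) + \mu(t).
\]
The remaining cases ($\elimtens$, $\elimzero$, $\elimwith^1$, $\elimwith^2$, and $K'~A$) are analogous, and in each one the induction hypothesis applies to the unique sub-elimination-context.

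There is essentially no obstacle; the statement is really a sanity check that the measure $\mu$ was defined so that elimination contexts contribute additively, a property tailored precisely so that the induction on $\mu(t)$ in the proof of Theorem~\ref{thm:linearity} can freely replace $t$ by its decomposition $K\{t'\}$ without the measure being hidden inside a $\max$. The only point requiring a little care is to remember that $\fv(u), \fv(v), \fv(r), \fv(s)$ do not include $\_$, so the substitution $K\{t\}$ does not alter the auxiliary proof-terms $u$, $v$, $r$, $s$ appearing in the context; consequently their measures are unchanged in the inductive step.
\qed
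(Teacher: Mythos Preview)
Your proof is correct and follows exactly the same approach as the paper: induction on the structure of the elimination context $K$, handling the base case $K=\_$ via $\mu(\_)=0$, and in each recursive clause using the induction hypothesis on the sub-context together with the fact that the hole always sits in an additively-contributing argument of $\mu$. The paper spells out all eight cases explicitly whereas you illustrate three and defer the rest as analogous, but there is no substantive difference.
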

\begin{proof}
  By induction on $K$.

  \begin{itemize}
    \item If $K = \_$, then $\mu(K) = 0$ and $K\{t\} = t$.  We have
      $\mu(K\{t\}) = \mu(t) = \mu(K) + \mu(t)$.

    \item If $K = \elimone(K_1,u)$ then $K\{t\} = \elimone(K_1\{t\},u)$.
      We have, by
      induction hypothesis, $\mu(K\{t\}) = 1 + \mu(K_1\{t\}) + \mu(u)
      = 1 + \mu(K_1) + \mu(t) + \mu(u) = \mu(K) + \mu(t)$.

    \item If $K = K_1~u$ then $K\{t\} = K_1\{t\}~u$.  We have, by
      induction hypothesis, $\mu(K\{t\}) = 1 + \mu(K_1\{t\}) + \mu(u) = 1 +
      \mu(K_1) +   \mu(t) + \mu(u) = \mu(K) + \mu(t)$.

    \item 
      If $K = \elimtens(K_1,x^A y^B.v)$, then $K\{t\} =
      \elimtens(K_1\{t\},x^A y^B.v)$.  We have, by induction
      hypothesis, $\mu(K\{t\}) = 1 + \mu(K_1\{t\}) + \mu(v) = 1 +
      \mu(K_1) + \mu(t) + \mu(v) = \mu(K) + \mu(t)$.

    \item If $K = \elimzero(K_1)$, then $K\{t\} = \elimzero(K_1\{t\})$. We
      have, by induction hypothesis, $\mu(K\{t\}) = 1 + \mu(K_1\{t\})= 1 +
      \mu(K_1) + \mu (t) = \mu(K) + \mu(t)$.

    \item If $K = \elimwith^1(K_1,x^A.r)$, then $K\{t\} =
      \elimwith^1(K_1\{t\},x^A.r)$. We have, by induction hypothesis,
      $\mu(K\{t\}) = 1 + \mu(K_1\{t\}) + \mu(r) = 1 + \mu(K_1) + \mu (t) +
      \mu(r) = \mu(K) + \mu(t)$.

      The same holds if $K = \elimwith^2(K_1,y^A.s)$.

    \item 
      If $K = \elimplus(K_1,x^A.r,y^B.s)$, then $K\{t\} =
      \elimplus(K_1\{t\},x^A.r,y^B.s)$.  We have, by induction
      hypothesis, $\mu(K\{t\}) = 1 + \mu(K_1\{t\}) + \max(\mu(r), \mu(s)) = 1 +
      \mu(K_1) + \mu(t) + \max(\mu(r), \mu(s)) = \mu(K) + \mu(t)$.

    \item 
       If $K = K_1~A$ then $K\{t\} = K_1\{t\}~A$. We have, by induction hypothesis, $\mu(K\{t\}) = 1 + \mu(K_1\{t\}) = 1 + \mu(K_1) + \mu(t) = \mu(K) + \mu(t)$. \qed
  \end{itemize}
\end{proof}

\decomp*
\begin{proof}
  By induction on $t$.

\begin{itemize}
    \item If $t$ is the variable $x$, an introduction, a sum, or a
     product, we take $K = \_$, $u = t$, and $B = A$.

    \item If $t = \elimone(t_1,t_2)$, then $t_1$ is not a closed proof-term as
      otherwise it would be a closed irreducible proof-term of $\one$, hence,
      by Theorem~\ref{thm:introductions}, it would be an introduction and $t$
      would not be irreducible. Thus, by the inversion property, $x^C
      \vdash t_1:\one$ and $\vdash t_2:A$.

      By induction hypothesis, there exist $K_1$, $u_1$ and $B_1$ such
      that $\_^{B_1} \vdash K_1:\one$, $x^C \vdash u_1:B_1$, and $t_1 =
      K_1\{u_1\}$.  We take $u = u_1$, $K = \elimone(K_1,t_2)$, and $B =
      B_1$.  We have $\_^B \vdash K:A$, $x^C \vdash u:B$, and $K\{u\} =
      \elimone(K_1\{u_1\},t_2) = t$.
      
    \item If $t = t_1~t_2$,
          we apply the same method as for the case $t = \elimone(t_1,t_2)$.



    \item If $t = \elimtens(t_1,y^{D_1} z^{D_2}.t_2)$, then $t_1$ is not a
      closed proof-term as otherwise it would be a closed irreducible proof-term of a
      multiplicative conjunction $\otimes$, hence,
      by Theorem~\ref{thm:introductions}, it would be an introduction, a sum, or a
      product, and $t$ would not be irreducible. Thus, by the inversion
      property, $x^C \vdash t_1:D_1 \otimes D_2$ and $y^{D_1}, z^{D_2} \vdash
      t_2:A$.

      By induction hypothesis, there exist $K_1$, $u_1$ and $B_1$ such
      that $\_^{B_1} \vdash K_1:C \otimes
      D$, $x^C \vdash u_1:B_1$, and $t_1 = K_1\{u_1\}$.  We take $u =
      u_1$, $K = \elimtens(K_1,y^{D_1} z^{D_2}.t_2)$, and $B = B_1$.  We have\\
      $\_^B \vdash K:A$, $x^C \vdash u:B$, and $K\{u\} =
      \elimtens(K_1\{u_1\},y^{D_1} z^{D_2}.t_2) = t$.

    \item If $t = \elimzero(t_1)$, then,
      by Theorem~\ref{thm:introductions},
      $t_1$ is not a closed proof-term as there is
      no closed irreducible proof-term of $\zero$. Thus, by the inversion
      property, $x^C \vdash t_1:\zero$.

      By induction hypothesis, there exist $K_1$, $u_1$, and $B_1$ such
      that $\_^{B_1} \vdash K_1:\zero$, $x^C \vdash u_1:B_1$, and $t_1 =
      K_1\{u_1\}$.  We take $u = u_1$, $K = \elimzero(K_1)$, and $B =
      B_1$.  We have $\_^B, \vdash K:A$, $x^C \vdash u:B$, and $K\{u\} =
      \elimzero(K_1\{u_1\}) = t$.

    \item If $t = \elimwith^1(t_1,y^D.t_2)$ or
      $t = \elimwith^2(t_1,y^D.t_2)$,
      we apply the same method as for the case $t = \elimone(t_1,t_2)$.




\item If $t =  \elimplus(t_1,y^{D_1}.t_2,z^{D_2}.t_3)$,
we apply the same method as for the case
$t = \elimtens(t_1,y^{D_1} z^{D_2}.t_2)$.


{
\item If $t = t_1~D$, then by the inversion property $A = (D/X)E$ and $x^C
     \vdash t_1:\forall X.E$.
     By induction hypothesis, there exist $K_1$, $u_1$ and $B_1$ such
     that $\_^{B_1} \vdash K_1:\forall X.E$, $x^C \vdash u_1:B_1$, and
     $t_1 = K_1\{u_1\}$.  We take $u = u_1$, $K = K_1~D$, and $B =
     B_1$.  We have $\_^B \vdash K:A$, $x^C \vdash u:B$, and $K\{u\} =
     K_1\{u_1\}~D = t$.\qed
}
\end{itemize}
\end{proof}

A final lemma shows that we can always decompose an elimination context $K$
different from $\_$ into a smaller elimination context $K_1$ and a last
elimination rule $K_2$.  This is similar to the fact that we can always
decompose a non-empty list into a smaller list and its last element.

\begin{lemma}[Decomposition of an elimination context]
  \label{lem:horrible}
  If $K$ is an elimination context such that $\_^A \vdash K:B$
  and $K \neq \_$, then $K$ has the form $K_1\{K_2\}$ and
  $K_2$ is an elimination of the top symbol of $A$.
\end{lemma}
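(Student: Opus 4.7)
The plan is to proceed by structural induction on $K$. Since $K \neq \_$ by hypothesis, the outermost constructor of $K$ is one of the eight elimination forms in the grammar of elimination contexts, and in each case the subject position of that constructor is occupied by some smaller elimination context $K'$. Inversion on the typing judgement $\_^A \vdash K:B$ shows that this $K'$ is itself typed as $\_^A \vdash K':D$ for an appropriate $D$ determined by the outer constructor (for instance $D = \one$ when $K = \elimone(K',u)$, $D = C \multimap B$ for some $C$ when $K = K'~u$, $D$ of the form $C \otimes E$ when $K = \elimtens(K',x^C y^E.v)$, $D = \forall X.E$ when $K = K'~C$, and so on). Crucially, the hole type $A$ is the same throughout this nesting, since none of the constructors modify the type of $\_$.

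In each of the eight cases I split on whether $K' = \_$. If $K' = \_$, then the hole of $K$ sits directly in the subject position of the outer eliminator, and the corresponding typing rule forces $A$ to have as its top connective exactly the one eliminated by that constructor ($\one$, $\multimap$, $\otimes$, $\zero$, $\with$, $\with$, $\oplus$, or $\forall$, respectively); hence $K$ itself is an elimination of the top connective of $A$, and I take $K_1 = \_$ and $K_2 = K$. If instead $K' \neq \_$, I apply the induction hypothesis to $K'$, which has strictly smaller structure and satisfies the same type premise $\_^A \vdash K':D$, to obtain a decomposition $K' = K_1'\{K_2'\}$ in which $K_2'$ is an elimination of the top symbol of $A$. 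I then let $K_1$ be the outer constructor of $K$ with $K_1'$ substituted into the subject position (for example $K_1 = \elimone(K_1',u)$ if $K = \elimone(K',u)$, $K_1 = K_1'~u$ if $K = K'~u$, and likewise in the remaining six cases) and set $K_2 = K_2'$, so that $K_1\{K_2\} = K$ by associativity of substitution at the hole.

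There is no substantive obstacle here: the argument is a routine case analysis on the outermost constructor of $K$. The only bookkeeping points to verify are that the $K_1$ produced in the inductive step still conforms to the grammar of elimination contexts (immediate, since replacing $K'$ by $K_1'$ in the subject position of an eliminator yields another instance of the same production) and that it contains exactly one occurrence of the hole (because $K_1'$ contains exactly one such occurrence by induction and the remaining syntactic material of the outer constructor is hole-free by the defining grammar). Both checks are mechanical and case-uniform.
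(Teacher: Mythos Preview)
Your argument is correct and coincides with the paper's proof: both proceed by structural induction on $K$, split on whether the inner context is the hole, and in the base case use typing inversion to identify the eliminated connective with the top symbol of $A$, while in the inductive step they push the outer frame into $K_1$. The paper merely compresses your eight-way case analysis by writing the outer eliminator abstractly as a one-layer context $L_1$ with inner context $L_2$ (your $K'$), but the content is identical.
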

\begin{proof}
  As $K$ is not $\_$, it has the form $K = L_1\{L_2\}$.
  If $L_2 = \_$, we take $K_1 = \_$, $K_2 = L_1$ and, as the proof-term is
  well-typed, $K_2$ must be an elimination of the top symbol of $A$.
  Otherwise, by induction hypothesis, $L_2$ has the form $L_2 = K'_1\{K'_2\}$,
  and $K'_2$ is an elimination of the top symbol of $A$.
  Hence, $K = L_1\{K'_1\{K'_2\}\}$. We take $K_1 = L_1\{K'_1\}$, $K_2 = K'_2$. \qed
\end{proof}

\linearity*
\begin{proof}
  Without loss of generality, we can assume that $t$ is irreducible.
  We proceed by induction on $\mu(t)$. 

  Using Lemma~\ref{lem:elim}, the proof-term $t$ can be decomposed as $K\{t'\}$
  where $t'$ is either the variable $x$, an introduction, a sum, or a
  product.

  \begin{itemize}
    \item 
      If $t'$ is an introduction, as $t$ is irreducible, $K = \_$ and
      $t'$ is a proof-term of $B \in
      {\mathcal V}$, $t'$ is either $a.\star$ or $\pair{t_1}{t_2}$. However,
      since $a.\star$ is not well-typed in $x:A$, it is $\pair{t_1}{t_2}$.
      Using the induction hypothesis with $t_1$ and with $t_2$
      ($\mu(t_1) < \mu(t')$, $\mu(t_2) < \mu(t')$), 
      we get
      \begin{align*}
	t\{u_1 \plus u_2\}
	\equiv
	\pair{t_1\{u_1\} \plus t_1\{u_2\}}{t_2\{u_1\} \plus t_2\{u_2\}}
	\lla 
	t\{u_1\} \plus t\{u_2\}
      \end{align*}
      And
      \begin{align*}
	t\{a \bullet u_1\}
	\equiv \pair{a \bullet t_1\{u_1\}}{a \bullet t_2\{u_1\}}
	&
	\lla
	a \bullet t\{u_1\}
      \end{align*}

    \item If $t' = t_1 \plus t_2$, then using the induction hypothesis
      with $t_1$, $t_2$, and $K$ ($\mu(t_1) < \mu(t)$, $\mu(t_2) <
      \mu(t)$, and $\mu(K) < \mu(t)$) and Lemma~\ref{lem:vecstructure} (1.,
      2., and 7.), we get
      \begin{align*}
	t\{u_1 \plus u_2\}
	&	\equiv K\{(t_1\{u_1\} \plus t_1\{u_2\})
	\plus (t_2\{u_1\} \plus t_2\{u_2\})\}\\
	&\equiv 
	K(( t_1\{u_1\} \plus t_2\{u_1\})
	\plus (t_1\{u_2\} \plus t_2\{u_2\}))
	\equiv  
	t\{u_1\} \plus t\{u_2\}
      \end{align*}
      And 
      \begin{align*}
	t\{a \bullet u_1\}
	&	\equiv K\{a \bullet t_1\{u_1\} \plus a \bullet t_2\{u_1\}\}
	\equiv K\{a \bullet (t_1\{u_1\} \plus t_2\{u_1\})\}\\
	&\equiv 
	a \bullet t\{u_1\}
      \end{align*}

    \item
      If $t' = b \bullet t_1$, then using the induction hypothesis
      with $t_1$ and $K$ ($\mu(t_1) < \mu(t)$, $\mu(K) < \mu(t)$) and
      $K$ and Lemma~\ref{lem:vecstructure} (7. and 5.), we get
      \begin{align*}
	t\{u_1 \plus u_2\}
	\equiv K\{b \bullet (t_1 \{u_1\} \plus t_1\{u_2\})\}
	& \equiv K\{b \bullet t_1 \{u_1\} \plus b \bullet t_1\{u_2\}\}\\
	&\equiv 
	t\{u_1\} \plus t\{u_2\}
      \end{align*}
      And 
      \begin{align*}
	t\{a \bullet u_1\}
	&	\equiv K\{b \bullet a \bullet t_1 \{u_1\}\}
	\equiv K\{a \bullet b \bullet t_1 \{u_1\}\}
	\equiv 
	a \bullet t\{u_1\}
      \end{align*}

    \item If $t'$ is the variable $x$, we need to prove
      \[
	K\{u_1 \plus u_2\} \equiv K\{u_1\} \plus K\{u_2\}
	\qquad\textrm{and}\qquad
	K\{a \bullet u_1\} \equiv a \bullet K\{u_1\}
      \]
      By Lemma~\ref{lem:horrible},
      $K$ has the form $K_1\{K_2\}$ and $K_2$ is an elimination of the top symbol of $A$.
      We consider the various cases for $K_2$. 
      \begin{itemize}
	\item If $K = K_1\{\elimone(\_,r)\}$, then $u_1$ and $u_2$ are closed
	  proof-terms of $\one$, thus $u_1 \lras b.\star$ and $u_2\lras c.\star$.
	  Using the induction hypothesis with the proof-term $K_1$
	  ($\mu(K_1) < \mu(K) = \mu(t)$) and Lemma~\ref{lem:vecstructure} (8.~and 5.)
	  \begin{align*}
	    K\{u_1 \plus u_2\}
	    &\lras K_1 \{\elimone({b.\star} \plus c.\star,r)\}
	    \lras K_1 \{(b + c) \bullet r\}\\
	    &\equiv (b + c) \bullet K_1 \{r\}
	    \equiv b \bullet K_1 \{r\} \plus c \bullet K_1 \{r\}\\
	    &\equiv K_1 \{b \bullet r\} \plus K_1 \{c \bullet r\}\\
	    & \llas K_1 \{\elimone(b.\star,r)\} \plus K_1 \{\elimone(c.\star,r)\}\\
	    &\llas K\{u_1\} \plus K\{u_2\}
	  \end{align*}
	  And
	  \begin{align*}
	    K\{a \bullet u_1\}
	    &\lras K_1 \{\elimone(a \bullet b.\star,r)\}
	    \lras K_1 \{(a \times b) \bullet r\}\\
	   & \equiv (a \times b) \bullet  K_1 \{r\}
	     \equiv a \bullet b \bullet  K_1 \{r\}
	    \equiv a \bullet  K_1 \{b \bullet r\}\\
	    &\llas a \bullet K_1 \{\elimone(b.\star,r)\}
	    \llas a \bullet K\{u_1\}
	  \end{align*}

	\item
	  If $K = K_1 \{\_~s\}$, then $u_1$ and $u_2$ are closed 
	  proof-terms of an implication, thus 
	  $u_1 \lras \lambda \abstr{y^C} u'_1$
	  and
	  $u_2 \lras \lambda \abstr{y^C} u'_2$.
	  Using the induction hypothesis with the proof-term $K_1 $
	  ($\mu(K_1 ) < \mu(K) = \mu(t)$), we get
	  \begin{align*}
	    K\{u_1 \plus u_2\}
	    &\lras K_1 \{(\lambda \abstr{y^C} u'_1 \plus \lambda \abstr{y^C} u'_2)~s\}
	    \lras K_1 \{u'_1\{s\} \plus u'_2\{s\}\}\\
	    &\equiv K_1 \{u'_1\{s\}\} \plus K_1 \{u'_2\{s\}\}\\
	    &\llas K_1 \{(\lambda \abstr{y^C} u'_1)~s\} \plus K_1 \{(\lambda \abstr{y^C} u'_2)~s\}\\
	    & \llas K\{u_1\} \plus K\{u_2\}
	  \end{align*}
	  And
	  \begin{align*}
	    K\{a \bullet u_1\}
	    &\lras K_1 \{(a \bullet \lambda \abstr{y^C} u'_1)~s\}
	    \lras K_1 \{a \bullet u'_1\{s\}\}\\
	    & \equiv a \bullet  K_1 \{u'_1\{s\}\}
	    \lla
	    a \bullet K_1 \{(\lambda \abstr{y^C} u'_1)~s\}
	    \llas a \bullet K\{u_1\}
	  \end{align*}

	\item
	  If $K = K_1 \{\elimtens(\_,y^C z^D.r)\}$, then, using
	  the induction hypothesis with the proof-term $K_1 $ 
	  ($\mu(K_1 ) < \mu(K) = \mu(t)$), we get
	  \begin{align*}
	    K\{u_1 \plus u_2\}
	    &
	    \lra K_1 \{\elimtens(u_1,y^C z^D.r)
	    \plus \elimtens(u_2,y^C z^D.r)\}\\
	    &\equiv
	    K\{u_1\} \plus K\{u_2\}
	  \end{align*}
	  And 
	  \begin{align*}
	    K\{a \bullet u_1\}
	    \lra K_1 \{a \bullet \elimtens(u_1,y^C z^D.r)\}
	    &\equiv
	    a \bullet K\{u_1\}
	  \end{align*}

	\item The case $K = K_1 \{\elimzero(\_)\}$ is not possible as $u_1$ would
	  be a closed proof-term of $\zero$ and there is no such proof-term.

	\item
	  If $K = K_1 \{\elimwith^1(\_,\abstr{y^C}r)\}$, then $u_1$ and $u_2$
	  are closed proof-terms of an additive conjunction $\with $, thus $u_1 \lras
	  \pair{u_{11}}{u_{12}}$ and $u_2 \lras \pair{u_{21}}{u_{22}}$.

	  Let $r'$ be the irreducible form of $K_1 \{r\}$.
	  Using the induction hypothesis with the proof-term $r'$
	  (because, with Lemmas~\ref{lem:mured} and~\ref{lem:strengtheningmsubst}, we have
	    $\mu(r') \leq \mu(K_1 \{r\}) = \mu(K_1 ) + \mu(r) < \mu(K_1 ) + \mu(r) + 1
	  = \mu(K) = \mu(t)$)
	  \begin{align*}
	    &K\{u_1 \plus u_2\}\\
	    &\lras K_1 \{\elimwith^1(\pair{u_{11}}{u_{12}} \plus \pair{u_{21}}{u_{22}}, \abstr{y^C}r)\}
	    \lras K_1 \{r\{u_{11} \plus u_{21}\}\}\\
	    & \lras r'\{u_{11} \plus u_{21}\}
	    \equiv r'\{u_{11}\} \plus r'\{u_{21}\}
	    \llas K_1 \{r\{u_{11}\}\} \plus K_1 \{r\{u_{21}\}\}\\
	    & \llas K_1 \{\elimwith^1(\pair{u_{11}}{u_{12}},\abstr{y^C}r)\}\plus K_1 \{\elimwith^1(\pair{u_{21}}{u_{22}},\abstr{y^C}r)\}\\
	    &\llas K\{u_1\} \plus K\{u_2\} 
	  \end{align*}
	  And
	  \begin{align*}
	    K\{a \bullet u_1\}
	    &\lras K_1 \{\elimwith^1(a \bullet \pair{u_{11}}{u_{12}},\abstr{y^C}r)\}
	    \lra^* K_1 \{r\{a \bullet u_{11}\}\}\\
	    &\lras r'\{a \bullet u_{11}\}
	     \equiv a \bullet r'\{u_{11}\}
	    \llas a \bullet K_1 \{r\{u_{11}\}\}\\
	    &\lla a \bullet K_1 \{\elimwith^1(\pair{u_{11}}{u_{12}},\abstr{y^C}r)\}
	     \llas a \bullet K\{u_1\}
	  \end{align*}

	\item If $K = K_1 \{\elimwith^2(\_,\abstr{y^C}r)\}$, the proof is similar.

	\item
	  If $K = K_1 \{\elimplus(\_,\abstr{y^C}r, \abstr{z^D}s)\}$, then, using
	  the induction hypothesis with the proof-term $K_1 $ 
	  ($\mu(K_1 ) < \mu(K) = \mu(t)$), we get
	  \begin{align*}
	    K\{u_1 \plus u_2\}
	    &\lra K_1 \{\elimplus( u_1,\abstr{y^C}r, \abstr{z^D}s) \plus \elimplus( u_2,\abstr{y^C}r, \abstr{z^D}s)\}\\
	    &\equiv
	    K\{u_1\} \plus K\{u_2\}
	  \end{align*}
	  And 
	  \[
	    K\{a \bullet u_1\}
	    \lra K_1 \{a \bullet \elimplus(u_1,\abstr{y^C}r, \abstr{z^D}s)\}
	    \equiv
	    a \bullet K\{u_1\}
	  \]
	  \item If $K = K_1\{\_~C\}$, then $u_1$ and $u_2$ are closed proof-terms of $\forall X.D$, thus $u_1 \lras \Lambda X.u_1'$ and $u_2 \lras \Lambda X.u_2'$. Using the induction hypothesis with the proof-term $K_1$ ($\mu(K_1) < \mu(K) = \mu(t)$), we get
	    \begin{align*}
	      K\{u_1 \plus u_2\}
	      &\lras K_1 \{(\Lambda X.u'_1 \plus \Lambda X.u'_2)~C\}\\
	      &\lras K_1 \{(C/X)u'_1 \plus (C/X)u'_2\}
	      = (C/X)K_1\{u'_1 \plus u'_2\} \\
	      &\equiv (C/X)(K_1 \{u'_1\} \plus K_1 \{u'_2\})\\
	      &= K_1 \{(C/X)u'_1\} \plus K_1 \{(C/X)u'_2\}\\
	      & \llas K_1\{(\Lambda X.u'_1)~C\} \plus K_1\{(\Lambda X.u'_2)~C\}\\
	      &\llas K\{u_1\} \plus K\{u_2\}
	    \end{align*}
	    And
	    \begin{align*}
	      K\{a \bullet u_1\}
	      &\lras K_1 \{(a \bullet \Lambda X.u'_1)~C\}
	      \lras K_1 \{a \bullet (C/X)u'_1\}\\
	      &= (C/X)K_1 \{a \bullet u'_1\}
	       \equiv (C/X)(a \bullet  K_1 \{u'_1\})\\
	      &= a \bullet K_1 \{(C/X)u'_1\}
	       \llas a \bullet K_1 \{(\Lambda X.u'_1)~C\}\\
	      &\llas a \bullet K\{u_1\} \qedHERE
	    \end{align*}
      \end{itemize}
  \end{itemize}
\end{proof}

\corollarylin*
\begin{proof}
  Let $C \in {\mathcal V}$ and $c$ be a proof-term such that
  $\_^B \vdash c:C$. Then applying Theorem~\ref{thm:linearity} to
  the proof-term $c\{t\}$ we get 
  \[
    c\{t\{u_1 \plus u_2\}\} \equiv c\{t\{u_1\}\} \plus c\{t\{u_2\}\}
    \quad\qquad\textrm{and}\qquad\quad
    c\{t\{a \bullet u_1\}\} \equiv a \bullet c\{t\{u_1\}\}
  \]
  and applying it again to the proof-term $c$ we get
  \[
    c\{t\{u_1\} \plus t\{u_2\}\} \equiv c\{t\{u_1\}\} \plus c\{t\{u_2\}\}
    \quad\quad\textrm{and}\quad\quad
    c\{a \bullet t\{u_1\}\} \equiv a \bullet c\{t\{u_1\}\}
  \]
  Thus
  \[
    c\{t\{u_1 \plus u_2\}\} \equiv c\{t\{u_1\} \plus t\{u_2\}\}
    \qquad\qquad\textrm{and}\qquad\qquad
    c\{t\{a \bullet u_1\}\} \equiv c\{a \bullet t\{u_1\}\}
  \]
  that is 
  \[
    t\{u_1 \plus u_2\} \sim t\{u_1\} \plus t\{u_2\}
    \qquad\qquad\textrm{and}\qquad\qquad
    t\{a \bullet u_1\} \sim a \bullet t\{u_1\}
    \qedHEREHERE
  \]
\end{proof}

\end{document}